
\documentclass[11pt,a4paper]{article}
\usepackage{amsfonts}

\usepackage{amsmath}
\usepackage{amssymb}
\usepackage{times}
\usepackage{bm}
\RequirePackage[round,authoryear,longnamesfirst]{natbib}
\usepackage{epstopdf}
\usepackage{color}
\usepackage[plain,noend,boxed]{algorithm2e}
\usepackage{xr}
\usepackage{amsthm}
\setcounter{MaxMatrixCols}{10}
\usepackage{url}
\usepackage{adjustbox}
\usepackage{hyperref}
\hypersetup{citecolor=true,colorlinks=true,allcolors=blue}

\newtheorem{lemma}{Lemma}[section]
\newtheorem{theorem}{Theorem}[section]

\newtheorem{assumption}{Assumption}

\DeclareMathOperator*{\argmax}{arg\,max}
\DeclareMathOperator*{\argmin}{arg\,min}

\newtheorem{definition}{Definition}[section]

\textwidth=16.4cm
\oddsidemargin=0cm \evensidemargin=0cm
\topmargin=-20pt
\leftmargin=20pt
\numberwithin{equation}{section}
\baselineskip=100pt
\textheight=22cm




\begin{document}

\title{Determining the Number of Communities in Degree-corrected Stochastic
Block Models }
\author{Shujie Ma\thanks{Department of Statistics, University of California, Riverside. \ E-mail~address: shujie.ma@ucr.edu} \and Liangjun Su\thanks{
		Singapore Management University.\ E-mail~address: ljsu@smu.edu.sg.} \and Yichong Zhang 	\thanks{
		Singapore Management University.\ E-mail~address: yczhang@smu.edu.sg. The
		corresponding author.}}

\maketitle

\begin{abstract}
We propose to estimate the number of communities in degree-corrected
stochastic block models based on a pseudo likelihood ratio statistic. To
this end, we introduce a method that combines spectral clustering with
binary segmentation. This approach guarantees an upper bound for the pseudo
likelihood ratio statistic when the model is over-fitted. We also derive its
limiting distribution when the model is under-fitted. Based on these
properties, we establish the consistency of our estimator for the true
number of communities. Developing these theoretical properties require a
mild condition on the average degrees -- growing at a rate no slower than $%
\log (n)$, where $n$ is the number of nodes. Our proposed method is further
illustrated by simulation studies and analysis of real-world networks. The
numerical results show that our approach has satisfactory performance when
the network is semi-dense.

\noindent \textbf{Key words and phrases: } Clustering, community detection,
degree-corrected stochastic block model, K-means, regularization.\vspace{2mm}
\end{abstract}



\markboth{S. Ma et~al.}{DCSBM K}


%
%

\section{Introduction}

Advances in modern technology have facilitated the collection of network
data which emerge in many fields including biology, bioinformatics, physics,
economics, sociology and so forth. Therefore, developing effective analytic
tools for network data has become a focal area in statistics research over
the past decade. Network data often have natural communities which are
groups of interacting objects (i.e., nodes); pairs of nodes in the same
group tend to interact more often than pairs belonging to different groups.
For example, in social networks, communities can be groups of people who
belong to the same club, be of the same profession, or attend the same
school; in protein-protein interaction networks, communities are regulatory
modules of interacting proteins. In many cases, however, the underlying
structure of network data is not directly observable. In such cases, we need
to infer the latent community structure of nodes from knowledge of their
interaction patterns.

The stochastic block model (SBM) proposed by \cite{HLL83} is a random graph
model tailored for clustering nodes, and it is commonly used for recovering
the community structure in network data. SBM has one limitation: it assumes
that all nodes in the same community are stochastically equivalent (i.e.,
they have the same expected degrees). To overcome this limitation, \cite%
{KN11} propose the degree-corrected stochastic block model (DCSBM) which
allows for degree heterogeneity within communities. In the literature,
various methods have been proposed for the estimation of SBM and DCSBM. They
include but are not limited to modularity maximization \citep{NG04},
likelihood-based methods \citep{ACBL13,BC09,CWA12,ZLZ12}, the method of
moments \citep{BCL11}, spectral clustering %
\citep{J15,JY16,LR15,QR13,RCY11,SB15,SWZ17}, and spectral embedding %
\citep{LSTAP14,STFP12}. In most, if not all, works, theoretical properties
such as consistency and asymptotic distributions are built based on the
assumption that the true number of communities $K_{0}$ is known.

In practice, prior information of the number of communities is often
unavailable. Accurately estimating $K_{0}$ from the network data is of
crucial importance, as the following community detection procedure relies
upon it. Determining the number of communities can be regarded as a model
selection problem. A natural approach to the problem is to consider the
popular model selection methods such as cross-validation (CV)\ or
likelihood-based methods. However, tailoring those methods for SBMs or
DCSBMs and establishing the theoretical support are challenging, as network
data are complex in nature.

A few methods have been developed to estimate $K_{0}$. Among them, the
eigenvalue-based methods have been widely applied; see \cite{BS16}, \cite%
{B15}, \cite{LL15} and \cite{Lei16} for the hypothesis testing methods on
eigenvalues. These methods can be computationally fast, but they only use
partial information from the data -- the eigenvalues. Empirically, the good
behavior of eigenvalues often requires a very large sample size. In order to
make use of all the information from the data, we need to estimate the graph
model (SBM or DCSBM). To this end, spectral clustering is considered as a
quick and effective way, and it has been proven to have reliable theoretical
basis \citep{J15,JY16,LR15,QR13,RCY11,SB15,SWZ17}. Based on the spectral
clustering method for estimating the graph model, \cite{CL17} and \cite%
{LLZ16} propose network cross-validation (NCV) and edge cross-validation
(ECV), respectively, for selecting the number of communities. In particular,
\cite{CL17} show that the NCV method guarantees against under-selection in
SBMs, but it does not rule out possible over-selection. Although they have a
discussion on the estimation of DCSBMs, they do not study the theoretical
property of the NCV estimator of the number of communities ($K$) in DCSBMs.
\cite{LLZ16} propose an ECV method for choosing between SBMs and DCSBMs
along with selecting $K$ for each model, but the consistency of ECV is not
established. Moreover, both methods can be computationally intensive when
the number of folds is large; they can lead to unstable results when the
number of folds or the number of random sample splittings (or repetitions in
the ECV case) is small. Another appealing method for model selection is the
likelihood-based approach considered in \cite{WB16}. It uses a BIC-type
penalty, so that it avoids iterations or random sample splittings. However,
for either SBMs or DCSBMs, optimizing the likelihood function which involves
summing over all possible community memberships is computationally
intractable for even moderate sample sizes. As a result, \cite{WB16} use a
variational EM algorithm to approximate the likelihood.

In this article, we propose a new method by taking advantage of both
spectral clustering and likelihood principle. The method is devised for
DCSBM, but can be naturally applied to SBM as it is a special case of DCSBM.
To determine the number of communities $K$, we propose a pseudo likelihood
ratio (pseudo-LR) to compare the goodness-of-fit of two DCSBMs estimated by
using $K$ and $K+1$, respectively, as the number of communities. For
estimation, directly using spectral clustering can be an appealing choice as
it is computationally fast. However, when $K>K_{0}$, it remains unclear
about theoretical properties for the resulting estimators of the DCSBM
obtained through the standard spectral clustering approach. This hinders the
use of goodness-of-fit methods for model selection by spectral clustering
for estimation. To overcome the difficulty, we estimate the DCSBM with $K$
communities by spectral clustering; based on this estimate, we propose a
binary segmentation method for estimating the DCSBM with $K+1$ communities.\
This approach guarantees consistency of the estimator for the model with $K+1
$ communities when the estimator for the model with $K$ communities is
consistent. The binary segmentation technique has been used in the seminal
work \cite{V81} for change-point detection and in recent work \cite{SW17}
for latent group recovery. Our idea of adapting this method to estimate
DCSBM has not been considered by others. Based on the proposed estimation
approach, we show that the pseudo-LR has a sound theoretical basis, and the
resulting estimator of the number of communities is consistent.

It is worth noting that for establishing the consistency of estimating $K_{0}
$, we only require the average degree to grow with the number of nodes $n$
at a rate no slower than $\log (n)$, whereas \cite{WB16} need it to be
faster than $n^{1/2}\log (n)$ in DCSBMs. That is, the approach considered in
\cite{WB16} needs a much denser network than our method for good finite
sample performance. As pointed out by \citet[Section 2.5]{WB16}, their
approach needs a very stringent condition on the average degree, because the
slow convergence rate of the estimate of the node degree variation passes on
to the likelihood ratio. On the contrary, it is not carried on to our
pseudo-LR because of the mutual cancellation of the slow-convergence parts.
As a result, this allows us to relax the strong restriction on the average
degree in theory. Both \cite{CL17} and \cite{LLZ16} only require the growth
rate of the average degree to be no slower than $\log (n)$, which is the
same rate as required by our method. However, theoretical properties are not
available for the NCV and ECV estimators of $K$ in DCSBMs. In contrast, we
develop thorough theoretical results including the consistency of our
proposed pseudo-LR method.

The rest of the paper is organized as follows. We describe the estimation
procedure in Section \ref{sec:dc}. We establish the consistency of our
estimators of the number of communities under DCSBMs in Section \ref%
{sec:theory}. Section \ref{sec:sim} compares the performance of our method
with various existing methods in different simulated networks. Section \ref%
{sec:real} illustrates the proposed method using several real data examples.
Section \ref{sec:concl} concludes. The proofs of all results are relegated
to the Supplemental Materials.

Notation. Throughout the paper, we write $[M]_{ij}$ as the $(i,j)$-th entry
of matrix $M$. Without confusion, we sometimes simplify $[M]_{ij}$ as $M_{ij}
$. In addition, we write $[M]_{i}$ as the $i$-th row of $M$. $\Vert M\Vert $
and $\Vert M\Vert _{F}$ denote the spectral norm and Frobenius norm of $M,$
respectively. Note that $\Vert M\Vert =\Vert M\Vert _{F}$ when $M$ is a
vector. We use $\mathbf{1}\left\{ \cdot \right\} $ to denote the indicator
function which takes value 1 when $\cdot $ holds and 0 otherwise. All
vectors without transpose are understood as column vectors. For a vector $%
\boldsymbol{a}=(a_{1},...,a_{n})^{\top }$, let diag$(\boldsymbol{a})$ be the
diagonal matrix whose diagonal is $\boldsymbol{a}$, and let $||\boldsymbol{a}%
||=(\sum\nolimits_{i}a_{i}^{2})^{1/2}$ be its L$_{2}$ norm. Let $\iota _{n}$%
, $\#\mathcal{S}$, and $[n]$ be the $n$-dimensional vector of ones, the
cardinality of set $\mathcal{S}$, and the integer sequence $\{1,2,\cdots ,n\}
$, respectively. $C$, $c$, and $c^{\prime }$ denote arbitrary positive
constants that are independent of $n$, but may not be the same in different
contexts.

\section{Methodology}

\label{sec:dc}

\subsection{Degree-corrected SBM}

Let $A\in \{0,1\}^{n\times n}$ be the adjacency matrix. By convention, we do
not allow self-connection, i.e., $A_{ii}=0$. The network is generated by a
degree-corrected stochastic block model with $K_{0}$ true communities. The
communities, which represent a partition of the $n$ nodes, are assumed to be
fixed beforehand. Denote $Z_{K_{0}}=\{[Z_{K_{0}}]_{ik}\}$ as the $n\times
K_{0}$ binary matrix providing the true cluster memberships of each node,
i.e., $[Z_{K_{0}}]_{ik}=1$ if node $i$ is in $\mathcal{C}_{k,K_{0}}$ and $%
[Z_{K_{0}}]_{ik}=0$ otherwise, where $\mathcal{C}_{1,K_{0}},\ldots ,\mathcal{%
C}_{K_{0},K_{0}}$ are denoted as the communities identified by $Z_{K_{0}}$.
For $k=1,\cdots ,K_{0},$ let $n_{k,K_{0}}=\#\mathcal{C}_{k,K_{0}},$ the
number of nodes in $\mathcal{C}_{k,K_{0}}$. Given the $K_{0}$ communities,
the edges between nodes $i$ and $j$ are chosen independently with
probability depending on the communities that nodes $i$ and $j$ belong to.
In particular, for nodes $i$ and $j$ belonging to clusters $\mathcal{C}%
_{k,K_{0}}$ and $\mathcal{C}_{l,K_{0}}$, respectively, the probability of
edge between $i$ and $j$ is given by
\begin{equation*}
P_{ij}=E(A_{ij})=\theta _{i}\theta _{j}B_{kl,K_{0}},
\end{equation*}%
where the block probability matrix $B_{K_{0}}=\{B_{kl,K_{0}}\}$, $%
k,l=1,\ldots ,K_{0}$, is a symmetric matrix with each entry between $(0,1]$.
The $n\times n$ edge probability matrix $P=\{P_{ij}\}$ represents the
population counterpart of the adjacency matrix $A$. Let $\Theta =\text{diag}%
(\theta _{1},\ldots ,\theta _{n})$. Then we have
\begin{equation*}
P=E(A)=\Theta Z_{K_{0}}B_{K_{0}}Z_{K_{0}}^{T}\Theta ^{T}.
\end{equation*}%
Note that $\Theta $ and $B_{K_{0}}$ are only identifiable up to scale.
Following the lead of \citet[Theorem 3.3]{SWZ17}, we adopt the following
normalization rule:
\begin{equation}
\sum_{i\in \mathcal{C}_{k,K_{0}}}\theta _{i}=n_{k,K_{0}},\quad k=1,\ldots
,K_{0}.  \label{eq:thetanormalization}
\end{equation}%
Apparently, the DCSBM becomes the standard SBM when $\theta _{i}=1$ for each
$i=1,...,n.$

\subsection{Estimation of the number of communities}

\label{sec:algo} Our procedure of estimating $K_{0}$ requires to obtain two
estimated membership matrices {$(\hat{Z}_{K},\hat{Z}_{K+1}^{b})$} based on $K
$ and $K+1$ communities, respectively.\footnote{
The superscript $b$ in $\hat{Z}_{K+1}^{b}$ denotes that it is estimated by a
binary segmentation from $\hat{Z}_{K}$.} To this end, we estimate $\hat{Z}%
_{K}$ and $\hat{Z}_{K+1}^{b}$ via spectral clustering of the first $K$
eigenvectors of the graph Laplacian and a binary segmentation technique,
respectively. Section \ref{sec:membership} provides more details. Denote $%
\hat{P}_{ij}(Z)$ as the estimator of $P_{ij}$ for a given membership matrix $%
Z$. We compute $\hat{P}_{ij}(\hat{Z}_{K+1}^{b})$ and $\hat{P}_{ij}(\hat{Z}%
_{K})$ by the sample-frequency-type estimators and propose a pseudo-LR $%
L_{n}(\hat{Z}_{K+1}^{b},\hat{Z}_{K})$ defined in \eqref{eq:plr} to measure
the deviance of goodness-of-fit of DCSBMs estimated with $K$ and $K+1$
communities, respectively. The estimators of $\hat{P}_{ij}(\hat{Z}_{K+1}^{b})
$ and $\hat{P}_{ij}(\hat{Z}_{K})$ are given in Section \ref{sec:appago} of
the Supplemental Materials. Lastly, we obtain the estimator of the true
number of communities based on the change of the pseudo-LR. Let $K_{\max }$
denote the maximum number of communities such that $K_{\max }\geq K_{0}.$
The pseudo-code is described in Algorithm \ref{algo:1}.


To understand our algorithm of estimating $K_{0},$ we focus on the case
where $K_{0}\geq 2$. If we know that $K_{0}\geq 2$ for sure$,$ we can
redefine $\hat{K}_{1}=\argmin_{2\leq K\leq K_{\max }}R(K).$ By Theorems \ref%
{thm:underDC} and \ref{thm:overDC} in Section \ref{sec:over}, we have
\begin{equation*}
L_{n}(\hat{Z}_{K}^{b},\hat{Z}_{K-1})\asymp n^{2}\text{ for }2\leq K\leq K_{0}%
\text{ and }L_{n}(\hat{Z}_{K_{0}+1}^{b},\hat{Z}_{K_{0}})\leq O_{a.s.}(n\rho
_{n}^{-1}),
\end{equation*}%
where $a_{n}\asymp b_{n}$ means that $P(c\leq a_{n}/b_{n}\leq C)\rightarrow 1
$ as $n\rightarrow \infty $ for some positive constants $c$ and $C,$ $a.s.$
denotes almost surely, and the parameter $\rho _{n}$ characterizes the
sparsity of the network such that $n\rho _{n}/\log (n)$ is sufficiently
large (see Assumption \ref{ass:rate} in Section \ref{sec:est}). This result
directly implies that
\begin{equation*}
R\left( K\right) \asymp 1\text{ for }2\leq K<K_{0}\text{ and }R\left(
K_{0}\right) =o_{p}\left( 1\right) .
\end{equation*}%
The above results indicate that for $K=K_{0}$, $R\left( K\right) $ is very
small and close to zero, but for $K<K_{0}$, $R\left( K\right) $ is
relatively large. It is worth noting that for $K>K_{0}$, it is possible that
$R\left( K\right) \ $is also small. As a result, the minimizer of $R(K)$ is
only guaranteed to satisfy $\hat{K}_{1}\geq K_{0}$ with probability
approaching 1 (w.p.a.1) as $n\rightarrow \infty .$ Such a result is similar
to that in \cite{CL17} who show that NCV do not underestimate the number of
communities w.p.a.1 as $n\rightarrow \infty $. Based on our theory, we
expect to observe a gap of the values of $R\left( K\right) $ at $K=K_{0}$,
so we introduce $\tilde{K}_{2}$ which is the first $K$ such that $R(K)$ is
less than $h_{n}$, where $h_{n}\rightarrow 0$ and $n\rho
_{n}h_{n}\rightarrow \infty $. Then we have $\tilde{K}_{2}=K_{0}$ w.p.a.1 as
$n\rightarrow \infty $. For better numerical performance, we make use of
both $\hat{K}_{1}$ and $\tilde{K}_{2}$ by letting $\hat{K}_{2}=\min (\hat{K}%
_{1},\tilde{K}_{2})$, and thus it satisfies $P(\hat{K}_{2}=K_{0})\rightarrow
1$ as $n\rightarrow \infty $, i.e., $\hat{K}_{2}$ consistently estimates the
number of communities in large samples. In our algorithm, two tuning
parameters $c_{\eta }$ and $h_{n}$ are involved. Among them, $c_{\eta }$ is
only needed to
deal with the case $K=1$ in which the pseudo-LR cannot be defined. If we are sure that $%
K_{0}\geq 2$, i.e., there are more than one communities, we can obtain the estimate $\hat{K}_{1}$ by searching
over $K\in \left[ 2,K_{\max }\right] .$ Alternatively, one can separately
test $K_{0}=1$ using other methods, e.g., the eigenvalue-based test proposed
by \cite{BS16}, and then use our methods to select $K$ for $K\geq 2.$ In
both cases, one can avoid the use of $c_{\eta}$. Theoretically, $c_{\eta }$ only needs to satisfy $%
c_{\eta }\in (0,\infty )$. Practically, We choose a value for $c_{\eta }$
given in Section \ref{sec:results} that works well in our numerical
analysis. For the choice of $h_{n}$, we have a detailed discussion given
after Theorem \ref{thm:overDC} in Section \ref{sec:over}.

\IncMargin{1em}
\begin{algorithm}[H]
	\label{algo:1}
\caption{Estimation of the number of communities}
	\SetKwInOut{Input}{input}\SetKwInOut{Output}{output}
	\Input{adjacency matrix $A$, tuning parameters $c_\eta$ and $h_n$}
	\Output{$\hat{K}_1$ and $\hat{K}_2$}
	\For{$K \leftarrow 1$ \KwTo $K_{\max}$}{
		obtain $\hat{Z}_K$ and $\hat{Z}_{K+1}^b$ via spectral clustering and binary segmentation, respectively\;
		compute $\hat{P}_{ij}(\hat{Z}_{K})$ and $\hat{P}_{ij}(\hat{Z}_{K+1}^{b})$\;
		compute
		\begin{equation}
		L_{n}(\hat{Z}_{K+1}^{b},\hat{Z}_{K})=\frac{1}{2}\sum_{i\neq j}\biggl(\frac{%
			\hat{P}_{ij}(\hat{Z}_{K+1}^{b})}{\hat{P}_{ij}(\hat{Z}_{K})}-1\biggr)^{2}\;
		\label{eq:plr}
		\end{equation}
		compute $R(K)$ as
		\begin{equation}
		\label{eq:RK}
		R(K)=%
		\begin{cases}
		\frac{L_{n}(\hat{Z}_{K+1}^{b},\hat{Z}_{K})}{\eta _{n}} & \quad K=1 \\
		\frac{L_{n}(\hat{Z}_{K+1}^{b},\hat{Z}_{K})}{L_{n}(\hat{Z}_{K}^{b},\hat{Z}%
			_{K-1})} & \quad K\geq 2,%
		\end{cases}%
		\;
		\end{equation}%
        where $\eta _{n}=c_{\eta }n^{2}$.}
	obtain $\hat{K}_1$ and $\hat{K}_2$ as
	\begin{equation*}
	\hat{K}_{1}=\argmin_{1\leq K\leq K_{\max }}R(K),
	\end{equation*}%
	and
	\begin{equation*}
	\hat{K}_{2}=\min (\hat{K}_{1},\tilde{K}_{2}),
	\end{equation*}%
	where $\tilde{K}_{2}=\min \{K\in \{1,\cdots ,K_{\max }\},R(K)\leq h_{n}\}$
	if $\min_{1\leq K\leq K_{\max }}R(K)\leq h_{n}$ and $\tilde{K}_{2}=K_{\max }$
	otherwise.
\end{algorithm}\DecMargin{1em}\medskip

\subsection{Estimation of the memberships\label{sec:membership}}

The proposed pseudo-LR given in (\ref{eq:plr}) depends on {$( \hat{Z}_{K},%
\hat{Z}_{K+1}^{b})$} which are obtained through spectral clustering and
binary segmentation, respectively. In the following, we describe the
algorithm in detail. Let $\hat{d}_{i}=\sum_{j=1}^{n}A_{ij}$ denote the
degree of node $i$, $D=\text{diag}(\hat{d}_{1},\ldots ,\hat{d}_{n})$. We
regularize the degree for each node as $\hat{d}_{i}^{\tau }=\hat{d}_{i}+\tau
$ where $\tau $ is a regularization parameter. Let $D_{\tau }=\text{diag}(%
\hat{d}_{1}+\tau ,\ldots ,\hat{d}_{n}+\tau )$. The regularized sample graph
Laplacian is
\begin{equation*}
L_{\tau }=D_{\tau }^{-1/2}AD_{\tau }^{-1/2}.
\end{equation*}
We regularize the sample degree matrix $D$ to improve the finite sample
performance of spectral clustering. The same regularization strategy is
considered by \cite{RCY11}, \cite{JY16} and \cite{SWZ17}. The corresponding
theoretical property is established in Section \ref{sec:theory}.

Denote the spectral decomposition of $L_{\tau }$ as
\begin{equation*}
L_{\tau }=\widehat{U}_{n}\widehat{\Sigma }_{n}\widehat{U}_{n}^{T},
\end{equation*}%
where $\widehat{\Sigma }_{n}=\text{diag}(\hat{\sigma}_{1n},\ldots ,\hat{%
\sigma}_{nn})$ with $|\hat{\sigma}_{1n}|\geq |\hat{\sigma}_{2n}|\geq \cdots
\geq |\hat{\sigma}_{nn}|\geq 0,$ and $\widehat{U}_{n}$ is the corresponding
eigenvectors such that $\widehat{U}_{n}^{T}\widehat{U}_{n}=I_{n}$. For each $%
K=1,\cdots ,K_{\max }$, let
\begin{equation}
\hat{\nu}_{iK}=\frac{\hat{u}_{i}(K)}{||\hat{u}_{i}(K)||},  \label{eq:nuK}
\end{equation}%
where $\hat{u}_{i}^{T}$ is the $i$-th row of $\widehat{U}_{n}$ and $\hat{u}%
_{i}^{T}(K)$ collects the first $K$ elements of $\hat{u}_{i}^{T}$. We
estimate the pair of community memberships {{$(\hat{Z}_{K},\hat{Z}_{K+1}^{b})
$}} by the following algorithm.

\IncMargin{1em}
\begin{algorithm}[H]
	\label{algo:2}
	\SetKwInOut{Input}{input}\SetKwInOut{Output}{output}
	\Input{$\{\hat{\nu}_{iK},\hat{\nu}_{iK+1}\}_{i=1}^n$ and $K$}
	\Output{$\hat{Z}_K$ and $\hat{Z}_{K+1}^b$}
divide $\{\hat{\nu}_{iK}\}_{i=1}^n$ into $K$ groups by the k-means
algorithm with $K$ centroids. Denote the membership matrix as $\hat{Z}_K$
with the corresponding communities $\{\widehat{\mathcal{C}}_{k,K}\}_{k=1}^K$\;
	\For{$k \leftarrow 1$ \KwTo $K$}{
	divide $\widehat{\mathcal{C}}_{k,K}$ into two subgroups by applying the
	k-means algorithm on $\{\hat{\nu}_{iK+1}\}_{i \in \widehat{\mathcal{C}}_{k,K}}$.
	Denote the two subgroups as $\widehat{\mathcal{C}}_{k,K}(1)$ and $\widehat{\mathcal{C}}_{k,K}(2)$\;
	compute
	\begin{equation}
	\label{eq:QK}
	\widehat{Q}_K(k) = \frac{\widehat{\Phi}(\widehat{\mathcal{C}}_{k,K}) - \widehat{\Phi}(%
		\widehat{\mathcal{C}}_{k,K}(1)) - \widehat{\Phi}(\widehat{\mathcal{C}}_{k,K}(2))}{\# \widehat{\mathcal{C}}%
		_{k,K}},
	\end{equation}
	where for an arbitrary index set $C$, $\widehat{\Phi}(C) = \sum_{i \in \mathcal{C}} ||%
	\hat{\nu}_{iK+1} - \frac{\sum_{i \in \mathcal{C}}\hat{\nu}_{iK+1}}{\# \mathcal{C}}||^2$\;
	}
choose $\hat{k}=\argmax_{1\leq k\leq K}\hat{Q}_{K}(k)$ and
		denote
		$$\{\widehat{\mathcal{C}}_{k,K+1}^{b}\}_{k=1}^{K+1}=\{\{\widehat{\mathcal{C}}_{k,K}\}_{k<\hat{k%
		}},\widehat{\mathcal{C}}_{\hat{k},K}(1),\newline \{\widehat{\mathcal{C}}_{k,K}\}_{k>\hat{k}},
		\widehat{\mathcal{C}}_{\hat{k},K}(2)\}$$
		as the new groups for $K+1$. The
		corresponding membership matrix is denoted as $\hat{Z}_{K+1}^{b}$.
	\caption{Estimation of the number of communities}
\end{algorithm}\DecMargin{1em}\medskip

%
%
%

Algorithm \ref{algo:2} applies the standard spectral clustering approach to
obtain $\hat{Z}_{K}$ and a binary segmentation method to obtain $\hat{Z}%
_{K+1}^{b}$. This procedure is computationally fast. Moreover, the algorithm
leads to $\widehat{\mathcal{C}}_{k,K+1}^{b}=\widehat{\mathcal{C}}_{k,K}$ for
$k\neq \hat{k}$ and $\widehat{\mathcal{C}}_{\hat{k},K+1}^{b}\cup \widehat{%
\mathcal{C}}_{K+1,K+1}^{b}=\widehat{\mathcal{C}}_{\hat{k},K}$, which ensures
that the parameter estimators $\hat{P}_{ij}(\hat{Z}_{K})$ and $\hat{P}_{ij}(%
\hat{Z}_{K+1}^{b})$ in the DCSBM are consistent when $K=K_{0}$.

\section{Theory}

\label{sec:theory}

\subsection{Identification}

The population counterpart of $L_{\tau }$ is
\begin{equation*}
\mathcal{L}_{\tau }=\mathcal{D}_{\tau }^{-1/2}P\mathcal{D}_{\tau }^{-1/2},
\end{equation*}%
where $\mathcal{D}_{\tau }=\mathcal{D}+\tau I_{n}$ and $\mathcal{D}=\text{
diag}(d_{1},\ldots ,d_{n})$ with $d_{i}=\sum_{j=1}^{n}P_{ij}$. Let $\pi
_{kn}=n_{k,K_0}/n$ and $\Pi_n = \text{diag}(\pi_{1n},\cdots,\pi_{K_0n})$.

\begin{assumption}
	\label{ass:id3} Let $H_{K_0}=\rho _{n}^{-1}B_{K_0} = [H_{kl,K_0}]$
	for some $\rho _{n}>0$, $W_{k}=\sum_{l=1}^{K_{0}}H_{kl,K_0}\pi
	_{ln} $, $\mathcal{D}_{H}=\text{diag}(W_{1},\cdots ,W_{K_{0}})$, and $%
	H_{0,K_0}=\mathcal{D}_{H}^{-1/2}H_{K_0}\mathcal{D}_{H}^{-1/2}$.
	Then,
	\begin{enumerate}
		\item[(1)] $H_{K_0}$ is not varying with $n$,
		\item[(2)] as $n\rightarrow
		\infty $, $H_{0,K_0}\rightarrow H_{0,K_0}^{\ast }$ where $%
		H_{0,K_0}^{\ast }$ has full rank $K_{0}$,
		\item[(3)]all elements of $%
		H_{0,K_0}^{\ast }$ are positive,
		\item[(4)] there exist two constants $%
		\underline{\theta }$ and $\overline{\theta }$ such that $0<\underline{\theta
		}\leq \inf_{i}\theta _{i}\leq \sup_{i}\theta _{i}\leq \overline{\theta }.$
	\end{enumerate}
\end{assumption}

Several remarks are in order. First, Assumption \ref{ass:id3} implies that
the average node degree is of order $n\rho _{n}$. The network can be
semi-dense if $\rho _{n}\rightarrow 0$ but $n\rho _{n}\rightarrow \infty $. Second, Assumption \ref{ass:id3}(1) is just for notational simplicity. All
our results still hold if $H_{K_0}$ depends on $n$ and converges to some
limit. Third, Assumption \ref{ass:id3}(2) ensures that the DCSBM has $K_{0}$
communities. To see this, note that Assumption \ref{ass:id3}(2) implies both
$H_{K_{0}}$ and $B_{K_{0}}$ have full rank. Suppose there exist $\{\tilde{%
\theta}_{i}\}_{i=1}^{n}$, $\tilde{\Theta}=\text{diag}(\tilde{\theta}%
_{1},\cdots ,\tilde{\theta}_{n})$, $\tilde{Z}_{K_{0}^{\prime }}$, and $%
\tilde{B}_{K_{0}^{\prime }}$ such that $\tilde{B}_{K_{0}^{\prime }}$ is a
full rank $K_{0}^{\prime }\times K_{0}^{\prime }$ matrix and
\begin{equation*}
\Theta Z_{K_{0}}B_{K_{0}}Z_{K_{0}}^{T}\Theta ^{T}=P=\tilde{\Theta}\tilde{Z}%
_{K_{0}^{\prime }}\tilde{B}_{K_{0}^{\prime }}\tilde{Z}_{K_{0}^{\prime }}^{T}%
\tilde{\Theta}^{T}.
\end{equation*}%
Further suppose that the membership matrix $\tilde{Z}_{K_{0}^{\prime }}$ is
non-degenerate in the sense that each community identified by $\tilde{Z}%
_{K_{0}^{\prime }}$ is nonempty, which implies that $\tilde{Z}%
_{K_{0}^{\prime }}$ has full column rank. Then, the full rank condition of $%
B_{K_{0}}$ and $\tilde{B}_{K_{0}^{\prime }}$ implies that
\begin{align*}
K_{0}=\text{rank}(B_{K_{0}})=& \text{rank}(\Theta
Z_{K_{0}}B_{K_{0}}Z_{K_{0}}^{T}\Theta ^{T}) \\
=& \text{rank}(P) \\
=& \text{rank}(\tilde{\Theta}\tilde{Z}_{K_{0}^{\prime }}\tilde{B}%
_{K_{0}^{\prime }}\tilde{Z}_{K_{0}^{\prime }}^{T}\tilde{\Theta}^{T})=\text{%
rank}(\tilde{B}_{K_{0}^{\prime }})=K_{0}^{\prime }.
\end{align*}%
That is, the number of communities is identified. Fourth, from the perspective of real data applications, the full-rank
condition on $B_{K_{0}}$ is reasonable. In networks, communities are usually
groups of nodes that have a higher probability of being connected to each
other within the same group than to members of other groups. This directly
implies the full rank condition of $B_{K_{0}}$ if $K_{0}=2$. In general, by
the Gershgorin circle theorem, for each row, if the sum of off-diagonal
elements is strictly less than the diagonal element, i.e., for $k=1,\cdots
,K_{0}$
\begin{equation*}
\sum_{l=1,\cdots ,K_{0},\text{ }l\neq k}B_{kl,K_{0}}<B_{kk,K_{0}},
\end{equation*}%
then $B$ has full rank. Such condition is just a sufficient condition for
our full rank requirement. For estimating the SBMs, the semi-definite
programming method can also be used. It needs the strong assortativity
condition \citep{CL15} given as
\begin{equation*}
\min_{k=1,\cdots ,K_{0}}B_{kk,K_{0}}>\max_{k,l=1,\cdots ,K_{0},\text{ }k\neq
l}B_{kl,K_{0}}.
\end{equation*}%
In general, the strong assortativity and Assumption 1(2) do not nest within
each other. For example, the following matrix has full rank but violates the
strong assortativity:
\begin{equation*}
\begin{pmatrix}
0.8 & 0.4 & 0.1 \\
0.4 & 0.5 & 0.05 \\
0.1 & 0.05 & 0.2%
\end{pmatrix}%
.
\end{equation*}%
Which assumption is more plausible depends on the empirical data at hand. In
the three real data examples considered in Section \ref{sec:real} and Section \ref{sec:add_application} of the Supplemental Materials, the full
rank condition holds for all of them, but the strong assortativity does not
hold for the political books network. Fifth, from the theoretical perspective, the full-rank condition (i.e., the $%
K_0$-th largest absolute eigenvalue of the $\mathcal{L}_\tau$ is bounded
away from zero) is a common assumption in the literature. See, for example,
\cite{RCY11}, \cite{LR15}, and \cite{JY16}. It is fundamental for the
spectral clustering method. If it does not hold, i.e., the $K_0$-th
eigenvalue of the population graph Laplacian is exactly zero, then the
corresponding population eigenvector is not uniquely defined. Sixth, Assumption \ref{ass:id3}(3) is sufficient for $\hat{\nu}_{iK}$ in %
\eqref{eq:nuK} to be well-defined, as shown in Lemma \ref{lem:id3} in the
Supplemental Materials. Last, for simplicity, we restrict $\theta _{i}$ to be bounded between zero
and infinity. This assumption can be relaxed at the cost of more complicated
notations.

Next, let $\Theta _{\tau }=\text{diag}(\theta _{1}^{\tau },\ldots ,\theta
_{n}^{\tau })$, where $\theta _{i}^{\tau }=\theta _{i}d_{i}/(d_{i}+\tau )$
for $i=1,\ldots ,n$, $n_{k,K_0}^{\tau }=\sum_{i\in \mathcal{C}%
_{k,K_0}}\theta _{i}^{\tau }$, and $\Pi_n^\tau = \text{diag}%
(n_{1,K_0}^\tau/n,\cdots,n_{K_0,K_0}^\tau/n).$

\begin{assumption}
	\label{ass:nk2} Suppose
	\begin{enumerate}
		\item[(1)] 	there exist $\{\pi_{k\infty}\}_{k=1}^{K_0}$ and $\{\pi_{k\infty}^{\prime
		}\}_{k=1}^{K_0}$ that are bounded between zero and infinity such that
		\begin{equation*}
		\Pi_n \rightarrow \Pi_\infty = \text{diag}(\pi _{1\infty},\ldots ,\pi
		_{K_0\infty} ) \quad \text{and} \quad \Pi_n^\tau \rightarrow \Pi_\infty^{\prime }= \text{diag%
		}(\pi _{1\infty}^{\prime },\ldots ,\pi^{\prime }_{K_0\infty} ),
		\end{equation*}
	\item[(2)] $(\Pi _{\infty }^{\prime })^{1/2}H_{0,K_0}^{\ast }(\Pi
	_{\infty }^{\prime })^{1/2}$ has $K_0$ distinct eigenvalues.
	\end{enumerate}
\end{assumption}

The second convergence in Assumption \ref{ass:nk2}(1) can be easily
satisfied by choosing $\tau $ to be the average degree $(\bar{d})$ in the
network. Let $|\lambda _{1}|\geq \cdots \geq |\lambda _{K_{0}}|$ be the
eigenvalues of $(\Pi _{\infty }^{\prime })^{1/2}H_{0,K_{0}}^{\ast }(\Pi
_{\infty }^{\prime })^{1/2}$ and
\begin{equation*}
\text{eigsp}((\Pi _{\infty }^{\prime })^{1/2}H_{0,K_{0}}^{\ast }(\Pi
_{\infty }^{\prime })^{1/2})=\min_{k=1,\cdots ,K_{0}-1}|\lambda
_{k+1}-\lambda _{k}|
\end{equation*}%
be the gap between adjacent eigenvalues of $(\Pi _{\infty }^{\prime
})^{1/2}H_{0,K_{0}}^{\ast }(\Pi _{\infty }^{\prime })^{1/2}$, as defined in
\cite{J15}. Then, Assumption \ref{ass:nk2}(2) requires that
\begin{equation*}
\text{eigsp}((\Pi _{\infty }^{\prime })^{1/2}H_{0,K_{0}}^{\ast }(\Pi
_{\infty }^{\prime })^{1/2})\geq C>0
\end{equation*}%
for some constant $C$. The same condition is assumed in \cite{J15}.\footnote{%
See \citet[Lemma 2.3]{J15}.} Assumption \ref{ass:nk2}(2) is mild from a
practical point of view. If we denote $H_{0,K_{0}}^{\ast }$ as $%
vec(H_{0,K_{0}}^{\ast })\in \Re ^{K_{0}^{2}}$ such that $H_{0,K_{0}}^{\ast }$
is symmetric and full rank, then Assumption \ref{ass:nk2}(2) is only
violated for a set in $\Re ^{K_{0}^{2}}$ with zero Lebesgue measure.
Theoretically, as $K_{0}$ is not known a priori, we need to apply spectral
clustering to the first $K$ eigenvectors of the graph Laplacian for $%
K=1,\cdots ,K_{0}$. Therefore, at the population level, we require that the
eigenspace generated by the first $K$ eigenvectors is identified for all $%
K=1,\cdots ,K_{0}$, which is equivalent to Assumption \ref{ass:nk2}(2).

\bigskip

Consider the spectral decomposition of $\mathcal{L}_{\tau }$,
\begin{equation*}
\mathcal{L}_{\tau }=U_{1n}\Sigma _{1n}U_{1n}^{T},
\end{equation*}%
where $\Sigma _{1n}=\text{diag}(\sigma _{1n},\ldots ,\sigma _{K_{0}n})$ is a
$K_{0}\times K_{0}$ matrix that contains the eigenvalues of $\mathcal{L}%
_{\tau }$ such that $|\sigma _{1n}|\geq |\sigma _{2n}|\geq \cdots \geq
|\sigma _{K_{0}n}|>0$ and $U_{1n}^{T}U_{1n}=I_{K_{0}}$.

\begin{theorem}
	\label{thm:id3} Suppose Assumptions \ref{ass:id3} and \ref{ass:nk2} hold.
	Let $u_{i}^{T}$ and $u_{i}(K)$ be the $i$-th row of $U_{1n}$ and the top $K$ elements of $u_i$, respectively.
	\begin{enumerate}
	\item[(1)] If $[Z_{K_{0}}]_{i }=[Z_{K_{0}}]_{j }$, then $\Vert \frac{%
		u_{i}}{\Vert u_{i}\Vert }-\frac{u_{j}}{\Vert u_{j}\Vert }\Vert =0;$ if $%
	[Z_{K_{0}}]_{i }\neq \lbrack Z_{K_{0}}]_{j},$ then $\Vert \frac{%
		u_{i}}{\Vert u_{i}\Vert }-\frac{u_{j}}{\Vert u_{j}\Vert }%
	\Vert =\sqrt{2}$.
	\item[(2)] There exist $L_K$ distinct $K \times 1$ vectors, denoted as $(\bar{\nu}_{1,K},\cdots,\bar{\nu}_{L_K,K})$, such that the nodes can be divided into $L_{K}$
	groups, denoted by $\{G_{l,K}\}_{l=1}^{L_{K}}$, $K \leq L_K \leq K_0$, for any $l=1,\cdots
	,L_{K}$,
	\begin{equation*}
	\limsup_{n}\sup_{i,j\in G_{l,K}}\left\Vert\frac{u_{i}(K)}{||u_{i}(K)||}-%
	\bar{\nu}_{l,K}\right\Vert=0,
	\end{equation*}%
	and for any $l\neq l^{\prime }$ and some constant $c>0$ independent of $n$%
	,
	\begin{equation*}
	\liminf_{n}\inf_{i\in G_{l,K},j\in G_{l^{\prime },K}}\left\Vert \frac{u_{i}(K)}{%
		||u_{i}(K)||}-\bar{\nu}_{l,K}\right\Vert \geq c.
	\end{equation*}%
	\end{enumerate}
\end{theorem}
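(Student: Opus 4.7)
The proof hinges on an explicit factorisation of the population graph Laplacian that exposes the row-wise structure of $U_{1n}$. Writing $g(i)$ for the community of node $i$ and using the normalisation in \eqref{eq:thetanormalization}, one first computes $d_{i}=\theta_i n \rho_n W_{g(i)}$. Setting $M = \mathcal{D}_\tau^{-1/2} \Theta Z_{K_0}$, the $i$-th row of $M$ has support only on coordinate $g(i)$, so $M^T M = \rho_n^{-1}\, \Pi_n^\tau\, \mathcal{D}_H^{-1}$ is diagonal. Defining $U_M = M(M^T M)^{-1/2}$ yields $U_M^T U_M = I_{K_0}$ together with the identity
\[
\mathcal{L}_\tau = U_M\, \tilde H_n\, U_M^T, \qquad \tilde H_n := (\Pi_n^\tau)^{1/2}\, H_{0,K_0}\, (\Pi_n^\tau)^{1/2},
\]
and $[U_M]_i = \sqrt{\theta_i^\tau / n_{g(i),K_0}^\tau}\; e_{g(i)}^T$. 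Diagonalising the $K_0 \times K_0$ matrix $\tilde H_n = V_n \Sigma_{1n} V_n^T$ and taking $U_{1n} = U_M V_n$ gives a spectral decomposition of $\mathcal{L}_\tau$ and the row formula $u_i = \sqrt{\theta_i^\tau / n_{g(i),K_0}^\tau}\; v_{g(i)}$, where $v_k^T$ denotes the $k$-th row of $V_n$.

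Part (1) is then immediate. Because $\theta_i^\tau>0$, the positive scalar cancels in $u_i/\|u_i\| = v_{g(i)}/\|v_{g(i)}\|$. If $g(i)=g(j)$, the two normalised vectors coincide; if $g(i)\neq g(j)$, $v_{g(i)}$ and $v_{g(j)}$ are two distinct rows of an orthogonal matrix, hence orthonormal, so their difference has norm $\sqrt{2}$.

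For part (2), the same cancellation yields $u_i(K)/\|u_i(K)\| = v_{g(i)}(K)/\|v_{g(i)}(K)\|$, where $v_k(K)$ is the truncation of $v_k$ to its first $K$ coordinates; well-definedness of the normalisation is supplied by Assumption \ref{ass:id3}(3) through the lemma cited in the Supplement. The number of distinct normalised truncated directions is at most $K_0$, so $L_K\leq K_0$. For the lower bound $L_K\geq K$, observe that the $K_0 \times K$ matrix whose rows are $v_k(K)^T$ is the top-left block of the orthogonal matrix $V_n$ and therefore has rank $K$; if only $L_K$ distinct row directions were present, the row-span would be at most $L_K$-dimensional, forcing $L_K\geq K$.

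To pass to the limit, Assumption \ref{ass:nk2} gives $\tilde H_n \to (\Pi_\infty')^{1/2} H_{0,K_0}^{\ast} (\Pi_\infty')^{1/2}$, a matrix with $K_0$ simple eigenvalues, so standard eigenvector perturbation (Davis--Kahan applied separately on each one-dimensional eigenspace) yields $V_n \to V_\infty$ once a canonical sign is fixed on each column. Declaring communities $k, k'$ to lie in the same group precisely when $v_k^\infty(K)/\|v_k^\infty(K)\| = v_{k'}^\infty(K)/\|v_{k'}^\infty(K)\|$ defines the partition $\{G_{l,K}\}_{l=1}^{L_K}$ and the distinct limits $\bar\nu_{l,K}$; within-group convergence $\limsup_n \sup_{i \in G_{l,K}} \| u_i(K)/\|u_i(K)\| - \bar\nu_{l,K}\| = 0$ and pairwise distinctness $\inf_{l \neq l'} \|\bar\nu_{l,K}-\bar\nu_{l',K}\| =: 2c > 0$ then follow directly. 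The only delicate point is this last step -- propagating simple-eigenvalue perturbation to stable row-wise convergence despite the per-column sign ambiguity of eigenvectors -- which is bought precisely by Assumption \ref{ass:nk2}(2).
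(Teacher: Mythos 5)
Your proof is correct and follows essentially the same route as the paper: the explicit factorisation $\mathcal{L}_\tau = U_M \tilde H_n U_M^T$ with the row formula $u_i = \sqrt{\theta_i^\tau/n_{g(i),K_0}^\tau}\,v_{g(i)}$ is exactly the representation the paper imports from \citet[Theorem 3.3]{SWZ17} (Lemma \ref{lem:id3}(1)), and the rank argument for $K\leq L_K\leq K_0$ plus the passage to the limit of $(\Pi_n^\tau)^{1/2}H_{0,K_0}(\Pi_n^\tau)^{1/2}$ under Assumption \ref{ass:nk2}(2) mirror the paper's proof, with your per-column sign-fixing playing the role of the paper's Davis--Kahan alignment matrix $O_s$. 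The only difference is that you re-derive the SWZ17 factorisation rather than citing it, which does not change the argument.
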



Several remarks are in order. First, Theorem \ref{thm:id3}(1) has already
been established in the literature. See \cite{QR13} and \cite{SWZ17}. It
implies that the eigenvectors of the graph Laplacian contain information
about the group structure. Second, Theorem \ref{thm:id3}(2) implies that the
first $K$ columns of eigenvectors after row normalization still contain
information for at least $K$ communities, when $K \leq K_0$. In particular,
when $K = K_0$, $L_{K_0} = K_0$ and Theorem \ref{thm:id3}(1) implies that
Theorem \ref{thm:id3}(2) holds with the true communities, i.e., $%
\{G_{l,L_{K_0}}\}_{l=1}^{L_{K_0}} = \{\mathcal{C}_{k,K_0}\}_{k=1}^{K_0}$.
Therefore, $\{G_{l,K}\}_{l=1}^{L_K}$ can be viewed as the true communities
identified by the first $K$ columns of eigenvectors. Third, Lemma \ref%
{lem:id3} in the Supplemental Materials implies that $||u_{i}(K)||$ is
bounded away from zero for $K = 1,\cdots,K_0$, which guarantees that $\frac{%
u_{i}(K)}{||u_{i}(K)||}$ is well defined. This result is similar to %
\citet[Lemma 2.5]{J15}.

\subsection{Properties of the estimated memberships}

\label{sec:est} In the following, we aim to show that, under certain
conditions, if $K\leq K_{0}$, then $\hat{Z}_{K}=Z_{K}$ and $\hat{Z}%
_{K}^{b}=Z_{K}^{b}$ almost surely (a.s.) for some deterministic membership
matrices $Z_{K}$ and $Z_{K}^{b}$. We denote the communities identified by $%
Z_{K}$ and $Z_{K}^{b}$ as $\{\mathcal{C}_{k,K}\}_{k=1}^{K}$ and $\{\mathcal{C%
}_{k,K}^{b}\}_{k=1}^{K}$, respectively. Note that $L_{K}$ is not necessarily
equal to $K$. This implies that neither $\{\mathcal{C}_{k,K}\}_{k=1}^{K}$
nor $\{\mathcal{C}_{k,K}^{b}\}_{k=1}^{K}$ is necessarily equal to the true
communities $\{G_{l,K}\}_{l=1}^{L_{K}}$. We can view $Z_{K}$ and $Z_{K+1}^{b}
$ as the pseudo true values of our estimation procedure described in Section %
\ref{sec:algo}. We slightly abuse the notation by calling $Z_{K}$ evaluated
at $K=K_{0}$ as the pseudo true membership matrix when $K=K_{0}$ while $%
Z_{K_{0}} $ as the true membership matrix. Theorem \ref{thm:oracleDC} below
shows that when $K=K_{0}$, the pseudo true values $Z_{K}$ and $Z_{K}^{b}$
are equal to the true membership matrix $Z_{K_{0}}$. Therefore, the notation
is still consistent and we can just write $Z_{K_{0}}$ as the (pseudo) true
membership matrix for $K=K_{0}$.

\begin{definition}
For $i\in G_{l,K}$ and $l=1,...,L_K$, $K = 2,\cdots,K_0 $, let
\begin{equation*}
\nu _{iK}=\bar{\nu}_{l,K}.
\end{equation*}%
Then, $(Z_{K},Z_{K+1}^b)$ is defined by applying Algorithm \ref{algo:2} to $%
\{\nu _{iK}\}_{i=1}^{n}$, $K =1,\cdots,K_0-1$. When $K=1$, we can trivially define $Z_{1}=Z_{1}^{b}=[n]=\left\{
1,2,...,n\right\}$.
\end{definition}

\begin{assumption}
	\label{ass:uniqueDC} Suppose that
\begin{enumerate}
	\item[(1)] 	the above definitions of $Z_K$ and $Z^b_K$ are unique for $%
	K=1,\cdots,K_0$;
	\item[(2)] there exist a positive constant $c$ independent of $n$ and $k^* = 1,\cdots,K$
	such that $Q_K(k^*) - \max_{k \neq k^*}Q_K(k)\geq c$ for $K = 2,\cdots,K_0-1$, where $Q_K(\cdot)$ is similarly defined as $\hat{Q}_K(\cdot)$ in \eqref{eq:QK} with $\hat{\nu}_{iK+1}$ and $\{\widehat{\mathcal{C}}_{k,K}\}$ replaced by $\nu_{iK+1}$ and $\{\mathcal{C}_{k,K}\}$, respectively.
\end{enumerate}	
\end{assumption}

Several remarks are in order. First, the communities identified by $%
Z_{K+1}^{b}$ can be written as
\begin{equation*}
\{\mathcal{C}_{k,K+1}^{b}\}_{k=1}^{K+1}=\{\mathcal{C}_{1,K},\cdots ,\mathcal{%
C}_{k^{\ast }-1,K},\mathcal{C}_{k^{\ast },K}(1),\mathcal{C}_{k^{\ast },K}(2),%
\mathcal{C}_{k^{\ast }+1,K},\cdots ,\mathcal{C}_{K,K}\}.
\end{equation*}%
Second, we provide more details on $Z_{K}$, $Z_{K}^{b}$, and $Q_{K}(\cdot )$
in Section \ref{sec:appago} in the Supplemental Materials. Third, the
uniqueness requirement is mild. If $L_{K}=K$, then obviously $\{\mathcal{C}%
_{k,K}\}_{k=1}^{K}=\{G_{l,K}\}_{l=1}^{L_{K}}$, which implies $Z_{K}$ is
uniquely defined. Fourth, we have $L_{K_{0}}=K_{0}$. Therefore, by
definition, $\{\mathcal{C}_{k,K_{0}}\}_{k=1}^{K_{0}}$ defined by $Z_{K_{0}}$
equal $\{G_{l,K_{0}}\}_{l=1}^{K_{0}}$, which are the true communities.
Fifth, when $L_{K}=K$ and $L_{K+1}=K+1$ for $K\leq K_{0}-1$, by the
pigeonhole principle, there only exists one $k\in \{1,\cdots ,K\}$, denoted
as $k^{\dagger }$ such that $\mathcal{C}_{k^{\dagger },K}=G_{k^{\dagger },K}$
contains two of $\{G_{l,K+1}\}_{l=1}^{K+1}$. Then by Theorem \ref{thm:id3}%
(2), there exists some constant $c>0$ such that $Q_{K}(k^{\dagger })\geq c$
and $Q_{K}(k)\rightarrow 0$ for $k\neq k^{\dagger }$. In this case, $k^{\ast
}=k^{\dagger }$ and Assumption \ref{ass:uniqueDC}(2) holds. Sixth,
Assumption \ref{ass:uniqueDC} is similar to \citet[Assumption 2.1]{WB16}. It
is used as a matter of notational convenience but not of necessity. Under
Assumption \ref{ass:uniqueDC}, we will show that the pseudo-LR after
re-centering is asymptotically normal. If Assumption \ref{ass:uniqueDC}
fails and $(Z_{K},Z_{K}^{b})$ are not unique, it can be anticipated that the
pseudo-LR after re-centering will be asymptotically mixture normal with
weights depending on the probability of choosing one classification among
all possibilities. Last, although Assumption \ref{ass:uniqueDC} is used to
characterize the limiting distribution of the re-centered pseudo-LR, it does
not affect the rate of bias term in the under-fitting case. Because the bias
term will dominate the centered term, we actually only need the rate of bias
to show the validity of our selection procedure. Therefore, even if
Assumption \ref{ass:uniqueDC} fails, it is reasonable to expect that our
procedure can still consistently select the true number of communities as
established in Section \ref{sec:over}.

\begin{assumption}
	\label{ass:rate} Assume $\rho _{n}n/\log (n)\geq C_1 $ for some constant $C_1>0$ sufficiently large and $\tau
	=O(n\rho _{n})$.
\end{assumption}

Recall that the degree of the network is of order $n\rho _{n}$. Assumption %
\ref{ass:rate} requires the degree to diverge at a rate no slower than $\log
(n)$, which is the most relaxed degree growth rate for exact community
recovery when $K$ is known. See \cite{A18} for an excellent survey on the
recent development of estimation of SBMs and DCSBMs.\footnote{%
We thank a referee for this reference.} For determining the number of
communities, \cite{CL17} require the same condition on the degree for SBMs,
but they do not provide any theory for DCSBMs. \cite{WB16} establish the
theories for DCSBMs but require that $n^{1/2}\rho _{n}/\log (n)\rightarrow
\infty $, or equivalently, the degree diverges to infinity at a rate faster
than $n^{1/2}\log (n)$. We require a weaker condition compared to \cite{WB16}%
, mainly due to the fact that we use a pseudo instead of the true likelihood
ratio. In DCSBMs, the rate of convergence for the estimator $\hat{\theta}_{i}
$ of $\theta _{i}$ is much slower than that for the estimator of the block
probability matrix. By using the ratio $\frac{\hat{P}_{ij}(\hat{Z}_{K+1}^{b})%
}{\hat{P}_{ij}(\hat{Z}_{K})}$ in the definition of pseudo-LR, the components
of $\hat{\theta}_{i}$'s that cause the slower convergence rate in both the
numerator and the denominator cancel each other out, so that the convergence
rate of $\frac{\hat{P}_{ij}(\hat{Z}_{K+1}^{b})}{\hat{P}_{ij}(\hat{Z}_{K})}$
is affected. We recommend using regularization to improve the finite sample
performance of spectral clustering. By Assumption \ref{ass:id3}, setting $%
\tau $ as the average degree $\bar{d}$ satisfies Assumption \ref{ass:rate}.
In practice, $\bar{d}$ is unobserved and we replace it by the sample
version, following the lead of \cite{QR13}. In the proof of Theorem \ref%
{thm:underDC} in the Supplemental Materials, we show that the sample average
degree is of the same order of magnitude as its population counterpart
almost surely because
\begin{equation*}
\sup_{i}\left\vert \frac{\hat{d}_{i}}{d_{i}}-1\right\vert \leq C\sqrt{\frac{%
\log (n)}{n\rho _{n}}}
\end{equation*}%
for some fixed constant $C>0$. One can also use the data-driven method
proposed by \cite{JY16} to select the regularizer. Based on the simulation
study in \cite{SWZ17}, the performances of spectral clustering using sample
average degree and data-driven regularizer are similar.

\begin{definition}
	Suppose there are two membership matrices $Z_1$ and $Z_2$ with corresponding
	communities $\{\mathcal{C}_{k}^j\}_{k=1}^{K_j}$, $j=1,2$, respectively. Then we say $%
	Z_1$ is finer than $Z_2$ if for any $k_1 = 1,\cdots,K_1$, there exists $k_2
	= 1,\cdots,K_2$ such that
	\begin{equation*}
	\mathcal{C}_{k_1}^1 \subset \mathcal{C}_{k_2}^2.
	\end{equation*}
	In this case, we write $Z_1 \succeq Z_2.$
\end{definition}

\begin{theorem}
	\label{thm:oracleDC} If Assumptions \ref{ass:id3}--\ref{ass:rate} hold, then
	\begin{enumerate}
		\item[(1)] for $K=1,\cdots ,K_{0}$,
		\begin{equation*}
		\hat{Z}_{K}=Z_{K}\quad a.s.\quad \text{and} \quad Z_{K_{0}}\succeq Z_{K},
		\end{equation*}%
		\item[(2)] for $K=1,\cdots ,K_{0}-1$,
		\begin{equation*}
		\hat{Z}_{K+1}^{b}=Z_{K+1}^{b}\quad a.s.\quad \text{and} \quad Z_{K_{0}}\succeq
		Z_{K+1}^{b},
		\end{equation*}%
		\item[(3)] after relabeling, we have $\widehat{\mathcal{C}}_{k,K+1}^{b}=\mathcal{C}_{k,K}$ for $%
		k=1,\cdots ,K-1$ and $\mathcal{C}_{K,K}=\widehat{\mathcal{C}}_{K,K+1}^{b}\cup \widehat{\mathcal{C}}_{K+1,K+1}^{b}$%
		, for $K=1,\cdots ,K_{0}$, $a.s.$
	\end{enumerate}
\end{theorem}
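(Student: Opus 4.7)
The overall strategy I would pursue is to combine three ingredients: (i) a concentration bound for the regularized graph Laplacian, (ii) a Davis--Kahan type perturbation bound converted from operator norm to row-wise control after normalization, and (iii) the population-level separation guaranteed by Theorem \ref{thm:id3}. The standard Le--Vershynin type inequality gives $\Vert L_{\tau}-\mathcal{L}_{\tau}\Vert \le C\sqrt{\log n/(n\rho_{n})}$ with probability at least $1-n^{-\alpha}$, where $\alpha$ can be made larger than $1$ by taking $C_{1}$ in Assumption \ref{ass:rate} sufficiently large, so Borel--Cantelli upgrades every probabilistic bound used below to an almost-sure statement for $n$ large. This is the route I would take to obtain the almost-sure (rather than w.p.a.1) conclusions in the theorem.

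For part (1), I would proceed inductively in $K$. Assumption \ref{ass:nk2}(2) guarantees that the first $K$ eigenvalues of $\mathcal{L}_{\tau}$ are simple and separated from the remaining spectrum, so Davis--Kahan yields an orthogonal matrix $O_{K}$ with $\Vert \widehat{U}_{n}(K)-U_{1n}(K)O_{K}\Vert_{F}=O(\sqrt{\log n/(n\rho_{n})})$ a.s. After row normalization, Lemma \ref{lem:id3} ensures $\Vert u_{i}(K)\Vert$ is bounded below, which lets me translate this into $\max_{i}\Vert \hat{\nu}_{iK}-O_{K}^{\top}\nu_{iK}\Vert=o(1)$ a.s. By Theorem \ref{thm:id3}(2), the population normalized rows $\nu_{iK}$ take exactly $L_{K}$ distinct values that are pairwise separated by a positive constant; since $K$-means with $K$ centers on a point cloud that is $o(1)$-close to $K$ (or fewer, possibly merged) well-separated atoms recovers the atomic partition exactly, I conclude $\hat{Z}_{K}=Z_{K}$ a.s. The refinement $Z_{K_{0}}\succeq Z_{K}$ follows because any two nodes in the same true community share the full population eigenvector (Theorem \ref{thm:id3}(1)), hence the same $\nu_{iK}$ for every $K\le K_{0}$, and are therefore placed in the same cluster of $Z_{K}$.

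For part (2), I would apply Davis--Kahan to the first $K+1$ eigenvectors, using Assumption \ref{ass:nk2}(2) again to bound the eigengap between the $(K+1)$-th and $(K+2)$-th eigenvalues away from zero. Having identified $\widehat{\mathcal{C}}_{k,K}=\mathcal{C}_{k,K}$ from part (1), the within-cluster $K$-means on $\{\hat{\nu}_{iK+1}\}$ is again an almost-sure perturbation of the population objective evaluated at $\{\nu_{iK+1}\}$, so each pair $(\widehat{\mathcal{C}}_{k,K}(1),\widehat{\mathcal{C}}_{k,K}(2))$ matches its population counterpart $(\mathcal{C}_{k,K}(1),\mathcal{C}_{k,K}(2))$. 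Consequently $\hat{Q}_{K}(k)=Q_{K}(k)+o(1)$ a.s. uniformly in $k$, and Assumption \ref{ass:uniqueDC}(2) — the positive gap between $Q_{K}(k^{*})$ and the remaining $Q_{K}(k)$ — forces $\hat{k}=k^{*}$ a.s. This identifies $\hat{Z}_{K+1}^{b}=Z_{K+1}^{b}$, and the refinement $Z_{K_{0}}\succeq Z_{K+1}^{b}$ is inherited from $Z_{K_{0}}\succeq Z_{K}$ together with the fact that the split $\mathcal{C}_{k^{*},K}(1)\cup \mathcal{C}_{k^{*},K}(2)=\mathcal{C}_{k^{*},K}$ can only subdivide a cluster already contained in a true community.

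Part (3) is largely bookkeeping: Algorithm \ref{algo:2} is constructed so that $\widehat{\mathcal{C}}_{k,K+1}^{b}=\widehat{\mathcal{C}}_{k,K}$ for $k\ne \hat{k}$ and $\widehat{\mathcal{C}}_{\hat{k},K+1}^{b}\cup \widehat{\mathcal{C}}_{K+1,K+1}^{b}=\widehat{\mathcal{C}}_{\hat{k},K}$, so after relabeling so that $\hat{k}=K$, the stated identities reduce to $\widehat{\mathcal{C}}_{k,K}=\mathcal{C}_{k,K}$, which is exactly part (1); the case $K=K_{0}$ is covered because part (1) still applies at $K=K_{0}$ even though part (2) stops at $K_{0}-1$. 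The main obstacle I anticipate is not any single conceptual step but the careful tracking of the rotation matrices $O_{K}$ so that the population/sample comparison of $\hat{Q}_{K}(k)$ is rotation-invariant (this follows because $\hat{Q}_{K}$ only involves squared distances, which are preserved by $O_{K}$), together with verifying that the concentration constant can be chosen so the summability condition of Borel--Cantelli upgrades every high-probability claim to an almost-sure one — here Assumption \ref{ass:rate} with a sufficiently large $C_{1}$ is exactly what is needed.
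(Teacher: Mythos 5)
Your overall architecture matches the paper's proof: eigenvector concentration (the paper invokes \citet[Theorem 3.4]{SWZ17} rather than re-deriving Le--Vershynin plus Davis--Kahan, but that is the same route), row normalization controlled from below via Lemma \ref{lem:id3}, reduction of the clustering step to a perturbed k-means problem on finitely many separated atoms, and the use of Assumption \ref{ass:uniqueDC}(2) to pin down $\hat{k}=k^{\ast}$ in the binary-segmentation step. Part (3) is indeed bookkeeping from the construction of Algorithm \ref{algo:2} together with part (1).

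The gap is in the sentence asserting that ``$K$-means with $K$ centers on a point cloud that is $o(1)$-close to $K$ (or fewer, possibly merged) well-separated atoms recovers the atomic partition exactly.'' When $K<K_{0}$ the population rows $\nu_{iK}$ take $L_{K}$ distinct values with $L_{K}$ possibly strictly larger than $K$ (Theorem \ref{thm:id3}(2) only gives $K\le L_{K}\le K_{0}$), so k-means with $K$ centroids must merge atoms. Exact recovery then requires three nontrivial facts: (i) the population optimum never splits an atom across clusters; (ii) the population optimum is unique --- this is exactly Assumption \ref{ass:uniqueDC}(1), which you never invoke for part (1) and without which $\hat{Z}_{K}=Z_{K}$ a.s.\ cannot hold; and (iii) the sample optimum merges the \emph{same} atoms as the population optimum, which does not follow from a generic ``small perturbation of well-separated clusters'' argument because the decision of which atoms to merge is governed by comparisons of weighted between-atom distances (the quantities $\frac{\pi_{l}\pi_{l'}}{\pi_{l}+\pi_{l'}}\Vert\bar{\nu}_{l,K}-\bar{\nu}_{l',K}\Vert^{2}$) that can be arbitrarily close to ties. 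The paper devotes Lemma \ref{lem:sbsa} and its multi-page proof precisely to (i)--(iii): first showing no atom is split, then a Hausdorff-distance consistency argument for the centroids, then label recovery. Your proposal needs this lemma (or an equivalent statement) made explicit; the same lemma is also what justifies your claim in part (2) that the within-cluster binary split matches its population counterpart, and the refinement $Z_{K_{0}}\succeq Z_{K}$ likewise rests on the no-atom-splitting property rather than only on nodes in one true community sharing a common $\nu_{iK}$.
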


Theorem \ref{thm:oracleDC}(1) and (2) show that $\hat{Z}_{K}$ and $\hat{Z}%
_{K}^{b}$ equal their pseudo true counterparts almost surely. This is the
oracle property of estimating the community membership when we either under-
or just-fit the model, i.e., $K\leq K_{0}$. On the other hand, it is very
difficult, if not completely impossible, to show the similar oracle property
for the over-fitting case, i.e., $K>K_{0}$. In particular, we are unable to
uniquely define $Z_{K_{0}+1}^{b}$ and show that $\hat{Z}%
_{K_{0}+1}^{b}=Z_{K_{0}+1}^{b}$ \textit{a.s}. As pointed out by \cite{WB16},
even in the population level (i.e., the probability matrix is observed),
\textquotedblleft embedding a $K$-block model in a larger model can be
achieved by appropriately splitting the labels $Z$ and there are an
exponential number of possible splits.\textquotedblright\ However, Theorem %
\ref{thm:oracleDC}(3) with $K=K_{0}$ shows that, for any $k=1,\cdots ,K_{0}+1
$, there exists some $k^{\prime }$ such that $\widehat{\mathcal{C}}%
_{k,K_{0}+1}^{b}\subset \widehat{\mathcal{C}}_{k^{\prime },K_{0}}$, which
should be one of the true communities based on the oracle property. We can
use this feature to handle the over-fitting case.

\subsection{Properties of the {pseudo-LR} and the estimated number of
communities}

\label{sec:over} Without loss of generality, we assume that $\hat{Z}_{K}^{b}$
is obtained by splitting the last group in $\hat{Z}_{K-1}$ into the $(K-1)$%
-th and $K$-th groups in $\hat{Z}_{K}^{b}$. Further denote, for $%
k,l=1,\cdots ,K$ and $k\leq l$,
\begin{equation*}
\Gamma _{kl,K}^{0b}=\sum_{s\in I(\mathcal{C}_{k,K}^{b}),\text{ }t\in I(%
\mathcal{C}_{l,K}^{b})}H_{st,K_{0}}\pi _{s\infty }\pi _{t\infty }\quad \text{%
and}\quad \Gamma _{K}^{0b}=[\Gamma _{kl,K}^{0b}],
\end{equation*}%
%
where $I(\mathcal{C}_{k,K}^{b})$ denotes a subset of $[K_{0}]$ such that if $%
m\in I(\mathcal{C}_{k,K}^{b})$, then $\mathcal{C}_{m,K_{0}}\subset \mathcal{C%
}_{k,K}^{b}$.

\begin{assumption}
	\label{ass:BKDC} For $K =2,\cdots,K_0$,  $\Gamma _{K}^{0b} \notin \mathbb{W}_K$, where $\mathbb{W}_K$ is a class of symmetric $K \times K$ matrices which is specified in the Supplemental Materials.

\end{assumption}

Several remarks are in order. First, the expression of $\mathbb{W}_K$ is
complicated and can be found in the proof of Theorem \ref{thm:underDC} in
the Supplemental Materials. Second, when $K=2$,
\begin{equation*}
\mathbb{W}_2 = \{W \in \Re^{2 \times 2}: W = W^T,~ W_{12}^2 = W_{11}W_{22}\}.
\end{equation*}
In general, we can view $\mathbb{W}_K$ as a set of $K(K+1)/2 \times 1$
vectors. Then, the Lebesgue measure of $\mathbb{W}_K$ is zero, which means
Assumption \ref{ass:BKDC} is mild. Third, if the last two columns of $%
\Gamma_{K}^{0b}$ are exactly the same, then $\Gamma_{K}^{0b} \in \mathbb{W}_K
$. Assumption \ref{ass:BKDC} rules out this case when $K \leq K_0 $.

\begin{theorem}
	\label{thm:underDC} If Assumptions \ref{ass:id3}--\ref{ass:rate} hold, then,
	for $2\leq K\leq K_{0}$, there exists $\tilde{\mathcal{B}}%
	_{K,n} $ such that
	\begin{equation*}
	\tilde{\varpi}_{K,n}^{-1}\left\{n^{-1}\rho _{n}^{1/2}[L_{n}(\hat{Z}_{K},\hat{Z}%
_{K-1})-\tilde{\mathcal{B}}_{K,n}]\right\}\rightsquigarrow N(0,1)
	\end{equation*}%
	where the
	asymptotic bias $\tilde{\mathcal{B}}_{K,n}$ and variance $\tilde{\varpi}%
	_{K,n}^{2}$ are defined in (\ref{eq:Bias}) and (\ref{eq:Var}), respectively,
	in the Supplemental Materials. If, in addition, Assumption \ref{ass:BKDC} holds, then
	there exist two positive constants $(c_{K1},c_{K2})$ potentially dependent
	on $K$ such that
	\begin{equation*}
	c_{K2}n^{2}\geq \tilde{\mathcal{B}}_{K,n}\geq c_{K1}n^{2}.
	\end{equation*}
\end{theorem}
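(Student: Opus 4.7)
\emph{Proof proposal.} The plan is to reduce the statement to a deterministic-partition problem and then expand the squared ratio in the pseudo-LR around its population counterpart.

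First, since $K\le K_{0}$, Theorem~\ref{thm:oracleDC}(1)--(2) allows me to replace $(\hat{Z}_{K}^{b},\hat{Z}_{K-1})$ by the deterministic pseudo-true pair $(Z_{K}^{b},Z_{K-1})$ almost surely, both of which are coarser than the true partition $Z_{K_{0}}$. The estimators $\hat{P}_{ij}(Z)$ are explicit sample-frequency quantities built from block sums $\sum_{s\in \mathcal{C}_{k},t\in \mathcal{C}_{l}}A_{st}$ and regularized degrees $\hat{d}_{i}$. The crucial observation (flagged in the discussion after Assumption~\ref{ass:rate}) is that in the ratio $\hat{R}_{ij}:=\hat{P}_{ij}(Z_{K}^{b})/\hat{P}_{ij}(Z_{K-1})$ the node-level factors $\hat{\theta}_{i}\hat{\theta}_{j}$ cancel, so $\hat{R}_{ij}$ depends on $A$ only through block-level averages whose fluctuations are of order $(n^{2}\rho_{n})^{-1/2}$---far faster than the convergence rate of $\hat{\theta}_{i}$.

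Next, introduce the population ratio $R_{ij}^{\ast}:=P_{ij}^{Kb}/P_{ij}^{K-1}$, where $P_{ij}^{\cdot}$ is the expectation of the corresponding (scaled) block-rate under the coarsening in question. Since $R_{ij}^{\ast}$ depends on $(i,j)$ only through the pair of coarsened blocks they occupy, it takes finitely many distinct values. I would then expand
\begin{equation*}
L_{n}(Z_{K}^{b},Z_{K-1})=\tfrac{1}{2}\sum_{i\neq j}(\hat{R}_{ij}-1)^{2}=\tilde{\mathcal{B}}_{K,n}+\mathcal{S}_{n}+\mathcal{R}_{n},
\end{equation*}
with bias $\tilde{\mathcal{B}}_{K,n}=\tfrac{1}{2}\sum_{i\neq j}(R_{ij}^{\ast}-1)^{2}$, linear (score) term $\mathcal{S}_{n}=\sum_{i\neq j}(R_{ij}^{\ast}-1)(\hat{R}_{ij}-R_{ij}^{\ast})$, and quadratic remainder $\mathcal{R}_{n}=\tfrac{1}{2}\sum_{i\neq j}(\hat{R}_{ij}-R_{ij}^{\ast})^{2}$. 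Linearizing $\hat{R}_{ij}-R_{ij}^{\ast}$ writes it, to leading order, as a linear functional of $\{A_{st}-P_{st}\}_{s<t}$ with weights of order $(n^{2}\rho_{n})^{-1}$. Bernstein-type tail bounds then give $\mathcal{R}_{n}=O_{p}(\rho_{n}^{-1})=o_{p}(n\rho_{n}^{-1/2})$ under Assumption~\ref{ass:rate}.

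For the limit distribution, I would apply a Lindeberg--Feller CLT to $\mathcal{S}_{n}$, which is a weighted sum of independent centered Bernoullis $\{A_{st}-P_{st}\}_{s<t}$. Because $(R_{ij}^{\ast}-1)^{2}$ is $O(1)$ and the linearization weights are of order $(n^{2}\rho_{n})^{-1}$, $\mathrm{Var}(\mathcal{S}_{n})\asymp n^{2}/\rho_{n}$, and the Lindeberg condition is satisfied by uniform smallness of the weight contributions together with $|A_{st}-P_{st}|\le 1$. This delivers the scaling $n^{-1}\rho_{n}^{1/2}(L_{n}-\tilde{\mathcal{B}}_{K,n})\rightsquigarrow N(0,\tilde{\varpi}_{K,n}^{2})$. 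For the bias envelope, the upper bound $\tilde{\mathcal{B}}_{K,n}\le c_{K2}n^{2}$ is immediate from the boundedness of $P_{ij}$ and Assumption~\ref{ass:id3}. The lower bound $\tilde{\mathcal{B}}_{K,n}\ge c_{K1}n^{2}$ is where Assumption~\ref{ass:BKDC} enters: one shows that $\tilde{\mathcal{B}}_{K,n}=o(n^{2})$ would force $R_{ij}^{\ast}\equiv 1$ on a positive-density set of block pairs, which in turn translates---through the explicit form of $R_{ij}^{\ast}$ in terms of the entries of $\Gamma_{K}^{0b}$ and the limit proportions $\pi_{l\infty}$---into the algebraic condition $\Gamma_{K}^{0b}\in\mathbb{W}_{K}$ that Assumption~\ref{ass:BKDC} rules out.

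The main obstacle is twofold: (i) pushing the cancellation of $\hat{\theta}$-factors through a uniform-in-$(i,j)$ expansion of $\hat{R}_{ij}$ and controlling $\mathcal{R}_{n}$ on a single high-probability event requires simultaneous concentration of all block sums and regularized degrees at the $\sqrt{\log n/(n\rho_{n})}$ scale; and (ii) translating the exceptional set $\mathbb{W}_{K}$ into the analytic assertion ``$R_{ij}^{\ast}\equiv 1$ on asymptotically all block pairs'' requires careful linear-algebraic bookkeeping of how $\Gamma_{K}^{0b}$ determines the coarsened block probabilities under both $Z_{K}^{b}$ and $Z_{K-1}$, and is the step most sensitive to the precise definition of $\mathbb{W}_{K}$ made in the supplement.
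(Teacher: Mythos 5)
Your proposal is correct and follows essentially the same route as the paper: the decomposition into bias, linear (score) term, and quadratic remainder is exactly the paper's $a^{2}-b^{2}=(a-b)^{2}+2(a-b)b$ expansion of the block-level ratios, your Lindeberg--Feller argument for the score term is equivalent to the paper's CLT for the block sums $O_{kl,K}^{b}$ followed by the delta method, and your treatment of the bias envelope (upper bound from boundedness, lower bound from Assumption~\ref{ass:BKDC} ruling out $\Gamma_{K}^{0b}\in\mathbb{W}_{K}$, i.e., all population ratios equal to one) matches Step~3 of the paper's proof. The only presentational difference is that the paper organizes the sum block-pair by block-pair and computes the linearization coefficients $\phi_{l',l}$ explicitly, whereas you work with the generic $(i,j)$ sum and note that $R_{ij}^{\ast}$ takes finitely many values.
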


Theorem \ref{thm:underDC} shows that in the under-fitting case, the
asymptotic bias term that is of order $n^{2}$ will dominate the centered
pseudo-LR that is of order $n\rho _{n}^{-1/2}$. However, when we over-fit
the model, i.e., $K>K_{0}$, the asymptotic bias term will be zero. The
sudden change in the orders of magnitude of the {pseudo-LR} $L_{n}(\hat{Z}%
_{K}^{b},\hat{Z}_{K-1})$ provides useful information on the true number of
communities.

Next, we consider the over-fitting case. Let $z_{K_0+1}$ be a generic $%
n\times (K_0+1)$ membership matrix,
\begin{align}  \label{eq:nk'}
n_{kl}(z_{K_0+1})= & \sum_{i=1}^{n}\sum_{j\neq
i}1\{[z_{K_0+1}]_{ik}=1,[z_{K_0+1}]_{jl}=1\}  \notag \\
=&
\begin{cases}
n_{k}(z_{K_0+1})n_{l}(z_{K_0+1}) & \quad \text{if}\quad k\neq l \\
n_{k}(z_{K_0+1})(n_{k}(z_{K_0+1})-1) & \quad \text{if}\quad k=l,%
\end{cases}%
\end{align}%
and $n_{k}(z_{K_0+1}) = \sum_{l=1}^{K_0+1}n_{kl}(z_{K_0+1})$. We emphasize
the dependence of $n_{kl}$ and $n_k$ on the membership matrix $z_{K_0+1}$
because when $K>K_0$, neither $Z_K$ nor $Z^b_K$ is uniquely defined. The
following assumption restricts the possible realizations $\hat{Z}_{K_0+1}^b$
can take.
\begin{assumption}
	\label{ass:neps} There exists some sufficiently small constant $\varepsilon$
	such that
	\begin{equation*}
	\inf_{1\leq k \leq K_0+1}n_k(\hat{Z}^b_{K_0+1})/n \geq \varepsilon.
	\end{equation*}
\end{assumption}

Assumption \ref{ass:neps} always holds in our simulation. By Theorem \ref%
{thm:oracleDC}, $\hat{Z}_{K_{0}}=Z_{K_{0}}$ $a.s.$ Suppose we obtain $\hat{Z}%
_{K_{0}+1}^{b}$ by splitting the last community (i.e., the $\mathcal{C}%
_{K_{0},K_{0}}$) into two groups by binary segmentation. In simulation, we
observe that the two new groups $\widehat{\mathcal{C}}_{K_{0},K_{0}+1}^{b}$
and $\widehat{\mathcal{C}}_{K_{0}+1,K_{0}+1}^{b}$ have close to even sizes.
In addition, we can modify the binary segmentation procedure to ensure that
Assumption \ref{ass:neps} holds automatically. In particular, suppose $%
n_{K_{0}}(\hat{Z}_{K_{0}+1}^{b})\leq n\varepsilon $, then let
\begin{equation*}
\widehat{\mathcal{C}}_{K_{0},K_{0}+1}^{b,new}=\widehat{\mathcal{C}}%
_{K_{0},K_{0}+1}^{b}\cup \breve{\mathcal{C}}_{K_{0}+1,K_{0}+1}^{b}\quad
\text{and}\quad \widehat{\mathcal{C}}_{K_{0}+1,K_{0}+1}^{b,new}=\widehat{%
\mathcal{C}}_{K_{0},K_{0}}\backslash \widehat{\mathcal{C}}%
_{K_{0},K_{0}+1}^{b,new},
\end{equation*}%
where $\breve{\mathcal{C}}_{K_{0}+1,K_{0}+1}^{b}$ is half of $\widehat{%
\mathcal{C}}_{K_{0}+1,K_{0}+1}^{b}$ by random splitting. Then $\widehat{%
\mathcal{C}}_{K_{0},K_{0}+1}^{b,new}$ and $\widehat{\mathcal{C}}%
_{K_{0}+1,K_{0}+1}^{b,new}$ satisfy Assumption \ref{ass:neps}. Although we
do not know $K_{0}$ a priori, we can apply this modification for any $%
K=1,\cdots ,K_{\max }$. When $K<K_{0}$, Theorem \ref{thm:oracleDC}(2) shows
that, for some sufficiently small $\varepsilon $,
\begin{equation*}
n_{k}(\hat{Z}_{K+1}^{b})=n_{k}(Z_{K+1}^{b})\geq \inf_{k}n_{k,K_{0}}\geq
n\varepsilon \quad a.s.
\end{equation*}%
Therefore, the modification will never take action when $K<K_{0}$, which
implies that all our results still hold under this modification.

\begin{theorem}
	\label{thm:overDC} Suppose that Assumptions \ref{ass:id3}--\ref{ass:neps}
	hold. Then
	\begin{equation*}
	0\leq L_{n}(\hat{Z}_{K_{0}+1}^{b},\hat{Z}_{K_{0}})\leq O_{p}(n\rho
	_{n}^{-1}).
	\end{equation*}%
	In addition, if $h_{n}\rightarrow 0$ and $n\rho _{n}h_{n}\rightarrow \infty $%
	, then
	\begin{equation*}
	P(\hat{K}_{1}\geq K_{0})\rightarrow 1 \quad \text{and}  \quad	P(\hat{K}_{2}=K_{0})\rightarrow 1.
	\end{equation*}%
\end{theorem}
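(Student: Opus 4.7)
The plan is to combine the order-of-magnitude behaviour of $L_{n}$ in the under-fit, just-fit, and over-fit regimes so that the ratio $R(K)$ in Algorithm \ref{algo:1} exhibits a sharp drop precisely at $K=K_{0}$. The consistency of both $\hat{K}_{1}$ and $\hat{K}_{2}$ will then follow by comparing orders of magnitude. The genuinely new ingredient is the first claim, the over-fit upper bound $L_{n}(\hat{Z}_{K_{0}+1}^{b},\hat{Z}_{K_{0}})\leq O_{p}(n\rho_{n}^{-1})$; once this is in hand, the rest is routine.

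For the over-fit bound I would first appeal to Theorem \ref{thm:oracleDC}: part (1) gives $\hat{Z}_{K_{0}}=Z_{K_{0}}$ almost surely, and part (3) with $K=K_{0}$ says $\hat{Z}_{K_{0}+1}^{b}$ is obtained from $\hat{Z}_{K_{0}}$ by splitting exactly one true community $\mathcal{C}_{\hat{k},K_{0}}$ into two subgroups, each of size at least $\varepsilon n$ by Assumption \ref{ass:neps}. Hence the two fitted probability matrices coincide on every entry $(i,j)$ whose endpoints are untouched by the split, and the sum in \eqref{eq:plr} reduces to entries with at least one endpoint in $\mathcal{C}_{\hat{k},K_{0}}$. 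For such entries, both $\hat{P}_{ij}(\hat{Z}_{K_{0}})$ and $\hat{P}_{ij}(\hat{Z}_{K_{0}+1}^{b})$ are sample-frequency-type estimators of the same true probability $P_{ij}=\theta_{i}\theta_{j}B_{kl,K_{0}}$, and the slow-converging $\hat{\theta}$-components mutually cancel in their ratio---the cancellation mechanism flagged in the Introduction. What remains of $\hat{P}_{ij}(\hat{Z}_{K_{0}+1}^{b})/\hat{P}_{ij}(\hat{Z}_{K_{0}})-1$ is a discrepancy between block-probability estimators on the split community versus the parent community, each concentrated around the common target at rate $O_{p}((n\rho_{n})^{-1/2})$ by Bernstein-type bounds on edge counts whose expectations are of order $n^{2}\rho_{n}$. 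Squaring and summing over $O(n^{2})$ affected pairs yields the announced $O_{p}(n\rho_{n}^{-1})$; the lower bound $L_{n}\geq 0$ is immediate.

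Armed with this bound, the orders of $R(K)$ can be catalogued. Theorem \ref{thm:underDC} together with Assumption \ref{ass:BKDC} gives $L_{n}(\hat{Z}_{K},\hat{Z}_{K-1})\asymp n^{2}$ for $2\leq K\leq K_{0}$, and the same reasoning applied to $\hat{Z}_{K}^{b}$, which almost surely equals the pseudo-true $Z_{K}^{b}$ by Theorem \ref{thm:oracleDC}(2), yields $L_{n}(\hat{Z}_{K}^{b},\hat{Z}_{K-1})\asymp n^{2}$ in the same range. Hence for $2\leq K\leq K_{0}-1$ both the numerator and the denominator of $R(K)$ are $\asymp n^{2}$, so $R(K)\asymp 1$; for $K=1$ the deterministic denominator $c_{\eta}n^{2}$ against a numerator of order $n^{2}$ likewise gives $R(1)\asymp 1$; and for $K=K_{0}$ Claim 1 gives a numerator of order $n\rho_{n}^{-1}$ against a denominator of order $n^{2}$, so $R(K_{0})=O_{p}((n\rho_{n})^{-1})=o_{p}(1)$ by Assumption \ref{ass:rate}. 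Because $R(K_{0})=o_{p}(1)$ is strictly smaller than the strictly positive $R(K)$ for every $K<K_{0}$ with probability tending to one, $\hat{K}_{1}=\argmin_{1\leq K\leq K_{\max}}R(K)\geq K_{0}$ w.p.a.1. For $\hat{K}_{2}$, the assumption $h_{n}\rightarrow 0$ ensures $R(K)>h_{n}$ for every $K<K_{0}$ w.p.a.1, while $n\rho_{n}h_{n}\rightarrow \infty$ gives $R(K_{0})/h_{n}=O_{p}(1/(n\rho_{n}h_{n}))=o_{p}(1)$, so $\tilde{K}_{2}=K_{0}$ w.p.a.1; combined with $\hat{K}_{1}\geq K_{0}$ this yields $\hat{K}_{2}=\min(\hat{K}_{1},\tilde{K}_{2})=K_{0}$ w.p.a.1.

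The main obstacle is Claim 1: although Theorem \ref{thm:oracleDC}(3) conveniently localizes the extra community to a single split within one true community, one must make the cancellation of the $\hat{\theta}_{i}$ factors quantitative by expanding $\hat{P}_{ij}(\hat{Z}_{K_{0}+1}^{b})/\hat{P}_{ij}(\hat{Z}_{K_{0}})$, isolating the leading-order difference of two Bernstein-controlled block estimates on the split community, and bounding the remainder uniformly in $(i,j)$ and in the random split location $\hat{k}$ so that it does not inflate the $O_{p}((n\rho_{n})^{-1/2})$ entrywise rate. This is exactly the delicate computation that lets the procedure survive under the mild condition $n\rho_{n}/\log n\to\infty$ of Assumption \ref{ass:rate}, where the Wang--Bickel-type true-likelihood approach would fail. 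Once it is done, Claims 2 and 3 reduce to the order comparisons described above.
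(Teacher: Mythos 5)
Your overall architecture matches the paper's: establish the over-fit upper bound by exploiting the fact that every group in $\hat{Z}_{K_{0}+1}^{b}$ is a subset of a true community (so the population ratio $M_{kl}(\hat{Z}_{K_{0}+1}^{b})/M_{kK_{0},K_{0}}$ is exactly $1$ and the $\hat{d}_{i}\hat{d}_{j}$ factors cancel in the ratio of fitted probabilities), then catalogue the orders of $R(K)$ across $K<K_{0}$, $K=K_{0}$ and conclude. Your treatment of the second and third claims is essentially the paper's argument and is fine (modulo the $K_{0}=1$ case, which the paper handles separately but which is trivial).

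There is, however, a genuine gap in your Claim 1, and it sits exactly where you locate "the main obstacle" but is not the obstacle you describe. Because there is no oracle property for $\hat{Z}_{K_{0}+1}^{b}$, the two subgroups produced by the binary segmentation are not consistent for any deterministic target: not only is the index $\hat{k}$ of the split community random, but the \emph{composition} of the two subgroups is a data-dependent partition of $\mathcal{C}_{\hat{k},K_{0}}$ that can, a priori, be any of exponentially many admissible splits. Your proposal controls uniformity only "in $(i,j)$ and in the random split location $\hat{k}$," i.e., over at most $K_{0}$ choices, which does not address this. The paper's proof handles it by defining the class $\mathcal{V}_{K_{0}+1}$ of all membership matrices whose groups refine the true communities and satisfy Assumption \ref{ass:neps}, and proving a \emph{uniform} Bernstein bound with a union over the $2^{n}$ configurations, yielding $\sup_{z_{K_{0}+1}\in\mathcal{V}_{K_{0}+1}}|O_{kl}(z_{K_{0}+1})-E[O_{kl}(z_{K_{0}+1})]|=O_{a.s.}(n^{3/2}\rho_{n}^{1/2})$. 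This union-bound cost is precisely what produces the entrywise ratio error $O((n\rho_{n})^{-1/2})$ and hence the final $O_{p}(n\rho_{n}^{-1})$. Your derivation asserts the $(n\rho_{n})^{-1/2}$ entrywise rate from a pointwise "Bernstein-type bound on edge counts whose expectations are of order $n^{2}\rho_{n}$"; a pointwise bound would in fact give a deviation of order $n\rho_{n}^{1/2}$ and hence a \emph{sharper} relative rate of $n^{-1}\rho_{n}^{-1/2}$, so the rate you state is not the one your argument delivers --- it only becomes correct after paying for uniformity over the exponentially many splits. Supplying that uniform concentration step (and noting that the population ratios equal one for \emph{every} $z_{K_{0}+1}\in\mathcal{V}_{K_{0}+1}$, not just for the realized split) is what closes the proof.
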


Several remarks are in order. First, Theorem \ref{thm:overDC} establishes
the upper bound for the pseudo-LR in the over-fitting case. Like \cite{WB16}%
, we are unable to obtain its exact limiting distribution because we do not
have the oracle property for $\hat{Z}_{K_{0}+1}^{b}$. The more profound
reason for the lack of oracle property is that we have limited knowledge on
the asymptotic behavior of the $(K_{0}+1)$-th column of the eigenvector
matrix $\widehat{U}_{n}$. Fortunately, the upper bound is sufficient for the
consistent estimation of $K_{0}$ with the help of the tuning parameter $h_{n}
$. Second, we show that $\hat{K}%
_{1}$ cannot under-estimate the number of communities in large samples. This
result is similar to that in \cite{CL17} who showed that NCV does not
under-estimate the number of communities in large samples. Third, to obtain
a consistent estimate of $K_{0},$ we can employ the estimator $\hat{K}_{2}$
which requires to specify the tuning parameter $h_{n}$. This parameter plays
the same role as the penalty term in \cite{WB16}'s BIC-type information
criterion. As the average degree $\bar{d}$ is of order $n\rho
_{n}\rightarrow \infty $, $h_{n}=c_{h}\bar{d}^{-1/2}$ satisfies $%
h_{n}\rightarrow 0$ and $n\rho _{n}h_{n}=c_{h}(n\rho _{n})^{1/2}\rightarrow
\infty $. Similarly, the average degree is not feasible and is replaced by
its sample counterpart in practice. This replacement has theoretical
guarantee as discussed after Assumption \ref{ass:rate}. In Section \ref%
{sec:sim}, we investigate the sensitivity of the performance of $\hat{K}_{2}$
with respect to the constant $c_{h}$. Last, as mentioned in the introduction, our pseudo-LR method has computational
advantages over the existing methods. In particular, it is well known that
the likelihood-based method of \cite{WB16} is computationally expensive even
when one uses a variational EM algorithm to approximate the true likelihood.
The NCV method of \cite{CL17} and the ECV method of \cite{LLZ16} can also be
computationally intensive when the number of folds is large.

\section{Numerical Examples on Simulated Networks}

\label{sec:sim}

\subsection{Background and methods\label{sec:background}}

In this section, we conduct simulations to evaluate the performance of our
proposed method. We call our pseudo-LR estimators $\widehat{K}_{1}$ and $%
\widehat{K}_{2}$ as PLR1 and PLR2, respectively. Moreover, we compare our
proposed method with four other approaches, including LRBIC \citep{WB16},
NCV \citep{CL17}, ECV \citep{LLZ16} and BHMC \citep{LL15}. LRBIC considers a
likelihood-based approach for estimating the latent node labels and
selecting models. LRBIC is only designed for the standard SBMs. It requires
one to set the maximum number of communities ($K_{\max }$) and to choose a
tuning parameter to control the order of the BIC-type penalty. NCV applies
cross-validation (CV)\ from spectral clustering, while ECV uses CV with edge
sampling for choosing between SBM and DCSBM and selecting the number of
communities simultaneously. NCV requires one to set $K_{\max }$ and to
choose two tuning parameters, viz, the number of folds for the CV and the
number of repetitions to reduce the randomness of the estimator due to
random sample splitting. ECV requires one to set $K_{\max }$ and to choose
two tuning parameters, viz, the probability for an edge to be drawn and the
number of replications. BHMC is developed by using the network Bethe-Hessian
matrix with moment correction. It requires the selection of a scalar
parameter to define the Bethe Hessian matrix and another one for
fine-tuning. Like our method, BHMC can be generally applied to both SBM and
DCSBM. We use the R package \textquotedblleft randnet\textquotedblright\ to
implement these four methods, and set $K_{\max }=10$ for all methods that
require a maximal value when searching over $K$'s.

\subsection{Data generation mechanisms and settings \label{sec:DGP}}

We consider the following mechanisms to generate the connectivity matrix $%
\boldsymbol{B}=\{B_{k\ell }\}_{1\leq k,\ell \leq K_{0}}$.

Setting 1 (S1). Let $B_{k\ell }=0.5\rho n^{-1/2}\{1+I(k=\ell )\}$ for $1\leq
k,\ell \leq K_{0}$, and for some $\rho >0$.

Setting 2 (S2). We first simulate $\boldsymbol{W}=(W_{1},\ldots
,W_{M_{0}})^{\top }$ from Unif$(0,0.3)^{M_{0}}$, where Unif$(a,b)^{M_{0}}$
denotes an $M_{0}$-dimensional uniform distribution on $[a,b]$ and $%
M_{0}=(K_{0}+1)K_{0}/2$. Let the main diagonal of $\boldsymbol{B}$ be the $%
K_{0}$ largest elements in $\boldsymbol{W}$ and the upper triangular part of
$\boldsymbol{B}$ contain the rest elements in $\boldsymbol{W}$. Let $%
B_{k\ell }=B_{\ell k}$ for all $1\leq k,\ell \leq K_{0}$. We use the
generated $\boldsymbol{B}$ with the smallest singular value no smaller than $%
0.1$.

All simulation results are based on 200 realizations. S1 considers different
sparsity levels for different values of $\rho $, and S2 allows all entries
in $\boldsymbol{B}$ to be different. The membership vector is generated by
sampling each entry independently from $\{1,\ldots ,K_{0}\}$ with
probabilities $\{0.4,0.6\}$, $\{0.3,0.3,0.4\}$ and $\{0.25,0.25,0.25,0.25\}$
for $K_{0}=2,3$ and $4$, respectively. We consider both SBMs and DCSBMs. For
the DCSBMs, we generate the degree parameters $\theta _{i}$ from Unif$%
(0.2,1) $ and further normalize them to satisfy the condition (\ref%
{eq:thetanormalization}).

\subsection{Results \label{sec:results}}

For our method, we let $\tau =\bar{d}$ and $c_{\eta }=0.05$. Note that for
computing the PLR2 estimator $\widehat{K}_{2}$, we need a tuning parameter $%
h_{n}.$ We set $h_{n}=c_{h}\bar{d}^{-1/2}$. We first would like to examine
the performance of the PLR2 estimator when $c_{h}$ takes different values.
Consider $c_{h}=0.5,1.0,1.5,2.0$. Let $\rho =3,4,5$ for design S1. Tables \ref{TAB:Khat2DCSBM} and \ref{TAB:Khat1DCSBM} report the mean of $\widehat{K}_{2}$ and $\widehat{K}_{1}$ by the PLR2 and PLR1 methods, respectively, and the proportion (prop) of correctly estimating $%
K_{0}$ among $200$ simulated datasets when data are generated from the
DCSBMs, for $n=500,1000$ and $K_{0}=1,2,3,4$. For saving space, Tables \ref{TAB:Khat2SBM} and \ref{TAB:Khat1SBM} given in the Supplemental Materials report those statistics when data are generated from the
SBMs.  It is worth noting that when $%
c_{h}=0$,$\ $the two estimates $\widehat{K}_{1}$ and $\widehat{K}_{2}$ are
exactly the same. Comparing Tables \ref{TAB:Khat2SBM} and \ref%
{TAB:Khat2DCSBM} to Tables \ref{TAB:Khat1SBM} and \ref{TAB:Khat1DCSBM}, we
see that for smaller values of $c_{h}$, the behavior of $\widehat{K}_{2}$ is
more similar to that of $\widehat{K}_{1}$. Moreover, Tables \ref%
{TAB:Khat2SBM} and \ref{TAB:Khat2DCSBM} show that the PLR2 estimator has
similar performance at $c_{h}=0.5,1.0,1.5,2.0$ for design S1, and its
performance improves when the value of $\rho $ or the sample size $n$
increases. However, for design S2, PLR2 behaves better at $c_{h}=0.5,1.0$.
Overall, both PLR1 and PLR2 at $c_{h}=0.5,1.0$ have good performance, and
PLR2 with $c_{h}=1.0$ slightly outperforms PLR1 and PLR2 with $c_{h}=0.5$.

\begin{table}[tbph]
\caption{The mean of $\protect\widehat{K}_{2}$ and the proportion (prop) of
correctly estimating $K$ among $200$ simulated datasets when data are
generated from DCSBMs$.$}
\label{TAB:Khat2DCSBM}
\par
\begin{center}
\begin{adjustbox}{width=1.1\textwidth,height = 3.8cm}
			\hskip -1.5cm	
			\begin{tabular}{|l|cc|cccc|cccc|cccc|cccc|}
				\cline{1-19}
				&  &  & \multicolumn{4}{c|}{$K_{0}=1$} & \multicolumn{4}{c|}{$K_{0}=2$} &
				\multicolumn{4}{c|}{$K_{0}=3$} & \multicolumn{4}{c|}{$K_{4}=4$} \\
				\cline{1-19}
				& $\rho $ & $c_{h}$ & \multicolumn{1}{c}{$0.5$} & $1.0$ & $1.5$ & $2.0$ &
				\multicolumn{1}{c}{$0.5$} & $1.0$ & $1.5$ & $2.0$ & \multicolumn{1}{c}{$0.5$}
				& $1.0$ & $1.5$ & $2.0$ & $0.5$ & $1.0$ & $1.5$ & $2.0$ \\ \cline{1-19}
				&  &  & \multicolumn{16}{c|}{$n=500$} \\ \cline{1-19}
				S1 & $3$ & mean & $1.000$ & $1.000$ & $1.000$ & $1.000$ & $2.095$ & $2.000$
				& $2.000$ & $2.000$ & $3.070$ & $3.070$ & $3.000$ & $3.000$ & $3.675$ & $%
				3.675$ & $3.615$ & $3.380$ \\
				&  & prop & $1.000$ & $1.000$ & $1.000$ & $1.000$ & $0.980$ & $1.000$ & $%
				1.000 $ & $1.000$ & $0.980$ & $0.980$ & $1.000$ & $1.000$ & $0.380$ & $0.380$
				& $0.390$ & $0.370$ \\
				& $4$ & mean & $1.000$ & $1.000$ & $1.000$ & $1.000$ & $2.035$ & $2.000$ & $%
				2.000$ & $2.000$ & $3.025$ & $3.000$ & $3.000$ & $3.000$ & $4.175$ & $4.150$
				& $4.100$ & $4.050$ \\
				&  & prop & $1.000$ & $1.000$ & $1.000$ & $1.000$ & $0.990$ & $1.000$ & $%
				1.000 $ & $1.000$ & $0.990$ & $1.000$ & $1.000$ & $1.000$ & $0.915$ & $0.920$
				& $0.935$ & $0.940$ \\
				& $5$ & mean & $1.000$ & $1.000$ & $1.000$ & $1.000$ & $2.000$ & $2.000$ & $%
				2.000$ & $2.000$ & $3.020$ & $3.000$ & $3.000$ & $3.000$ & $4.045$ & $4.015$
				& $4.000$ & $4.000$ \\
				&  & prop & $1.000$ & $1.000$ & $1.000$ & $1.000$ & $1.000$ & $1.000$ & $%
				1.000$ & $1.000$ & $0.995$ & $1.000$ & $1.000$ & $1.000$ & $0.985$ & $0.995$
				& $1.000$ & $1.000$ \\
				S2 &  & mean & $1.000$ & $1.000$ & $1.000$ & $1.000$ & $2.000$ & $2.000$ & $%
				2.000$ & $2.000$ & $3.000$ & $3.000$ & $2.010$ & $2.000$ & $4.000$ & $4.000$
				& $3.835$ & $3.665$ \\
				&  & prop & $1.000$ & $1.000$ & $1.000$ & $1.000$ & $1.000$ & $1.000$ & $%
				1.000$ & $1.000$ & $1.000$ & $1.000$ & $0.001$ & $0.000$ & $1.000$ & $1.000$
				& $0.910$ & $0.825$ \\ \cline{1-19}
				&  &  & \multicolumn{16}{c|}{$n=1000$} \\ \cline{1-19}
				S1 & $3$ & mean & $1.000$ & $1.000$ & $1.000$ & $1.000$ & $2.050$ & $2.000$
				& $2.000$ & $2.000$ & $3.000$ & $3.000 $ & $3.000$ & $3.000$ & $4.060$ & $%
				4.045$ & $4.025$ & $4.020$ \\
				&  & prop & $1.000$ & $1.000$ & $1.000$ & $1.000$ & $0.990$ & $1.000$ & $%
				1.000$ & $1.000$ & $1.000$ & $1.000$ & $1.000$ & $1.000$ & $0.980$ & $0.985$
				& $0.990$ & $0.995$ \\
				& $4$ & mean & $1.000$ & $1.000$ & $1.000$ & $1.000$ & $2.000$ & $2.000$ & $%
				2.000$ & $2.000$ & $3.000$ & $3.000$ & $3.000$ & $3.000$ & $4.020$ & $4.000$
				& $4.000$ & $4.000$ \\
				&  & prop & $1.000$ & $1.000$ & $1.000$ & $1.000$ & $1.000$ & $1.000$ & $%
				1.000$ & $1.000$ & $1.000$ & $1.000$ & $1.000$ & $1.000$ & $0.995$ & $1.000$
				& $1.000$ & $1.000$ \\
				& $5$ & mean & $1.000$ & $1.000$ & $1.000$ & $1.000$ & $2.000$ & $2.000$ & $%
				2.000$ & $2.000$ & $3.000$ & $3.000$ & $3.000$ & $3.000$ & $4.020$ & $4.000$
				& $4.000$ & $4.000$ \\
				&  & prop & $1.000$ & $1.000$ & $1.000$ & $1.000$ & $1.000$ & $1.000$ & $%
				1.000$ & $1.000$ & $1.000$ & $1.000$ & $1.000$ & $1.000$ & $0.990$ & $1.000$
				& $1.000$ & $1.000$ \\
				S2 &  & mean & $1.000$ & $1.000$ & $1.000$ & $1.000$ & $2.000$ & $2.000$ & $%
				2.000$ & $2.000$ & $3.000$ & $3.000$ & $3.000$ & $2.030$ & $4.000$ & $4.000$
				& $4.000$ & $3.210$ \\
				&  & prop & $1.000$ & $1.000$ & $1.000$ & $1.000$ & $1.000$ & $1.000$ & $%
				1.000$ & $1.000$ & $1.000$ & $1.000$ & $1.000$ & $0.030$ & $1.000$ & $1.000$
				& $1.000$ & $0.605$ \\ \cline{1-19}
			\end{tabular}%
		\end{adjustbox}
\end{center}
\end{table}

\begin{table}[tbph]
\caption{The mean of $\protect\widehat{K}_{1}$ and the proportion (prop) of
correctly estimating $K_{0}$ among $200$ simulated datasets when data are
generated from DCSBMs.}
\label{TAB:Khat1DCSBM}
\par
\begin{center}
\begin{adjustbox}{width=\textwidth}
			\begin{tabular}{|l|c|c|cccc|cccc|}
				\cline{1-11}
				&  &  & \multicolumn{4}{c}{$n=500$} & \multicolumn{4}{|c|}{$n=1000$} \\
				\cline{1-11}
				&  $\rho $ &  & $K_{0}=1$ & $K_{0}=2$ & $K_{0}=3$ & $K_{4}=4$ & $K_{0}=1$ & $K_{0}=2$
				& $K_{0}=3$ & $K_{4}=4$ \\ \cline{1-11}
				S1 & $3$ & mean & $1.000$ & $2.095$ & $3.070$ & $3.675$ & $1.000$ & $2.050$
				& $3.000$ & $4.060$ \\
				&  & prop & $1.000$ & $0.980$ & $0.980$ & $0.380$ & $1.000$ & $0.990$ & $%
				1.000$ & $0.980$ \\
				& $4$ & mean & $1.000$ & $2.090$ & $3.025$ & $4.175$ & $1.000$ & $2.000$ & $%
				3.000$ & $4.020$ \\
				&  & prop & $1.000$ & $0.980$ & $0.990$ & $0.915$ & $1.000$ & $1.000$ & $%
				1.000$ & $0.995$ \\
				& $5$ & mean & $1.000$ & $2.035$ & $3.030$ & $4.045$ & $1.000$ & $2.000$ & $%
				3.000$ & $4.045$ \\
				&  & prop & $1.000$ & $0.990$ & $0.995$ & $0.985$ & $1.000$ & $1.000$ & $%
				1.000$ & $0.985$ \\
				S2 &  & mean & $1.000$ & $2.000$ & $3.035$ & $4.005$ & $1.000$ & $2.000$ & $%
				3.000$ & $4.000$ \\
				&  & prop & $1.000$ & $1.000$ & $0.995$ & $0.995$ & $1.000$ & $1.000$ & $%
				1.000$ & $1.000$ \\ \cline{1-11}
			\end{tabular}%
		\end{adjustbox}
\end{center}
\end{table}

Based on the above results, we let $c_{h}=1.0$ for the PLR2 estimator. For
evaluating the performance of the six methods at different sparsity levels,
we let $\rho =0.5,1,2,3,4,5,6$ for design S1, so that the average expected
degree ranges from 7.0 to 83.9, for instance, at $K_{0}=4$ and $n=500$ for
the DCSBMs. Figure \ref{FIG:prop} shows the proportions of correctly
estimating $K_{0}$ among $200$ simulated datasets versus the values of $\rho
$ for the six methods: PLR1 (solid lines), PLR2 (dash-dot lines), LRBIC
(dashed lines), NCV (dotted lines), ECV (thin dash-dot lines) and BHMC (thin
dotted lines), when data are simulated from design S1 with $K_{0}=2,3,4$ and
$n=500$. The results for the SBMs and DCSBMs are shown in the left and right
panels, respectively. We observe that our proposed methods PLR1 and PLR2
have similar performance with PLR2 moderately better when $K_{0}=2$.
Moreover, PLR1 and PLR2 have larger proportions of correctly estimating $%
K_{0}$ than the other four methods at small values of $\rho $.\ This
indicates that PLR1 and PLR2 outperform other methods for semi-dense
designs. The BHMC method performs better than LRBIC, NCV and ECV at $%
K_{0}=2,3$, but its performance becomes inferior to that of the other three
methods when $K_{0}=4$. It is worth noting that for larger $K_{0}$, it
correspondingly requires a larger $\rho $ in order to successfully estimate $%
K_{0}$. When $\rho $ is sufficiently large, eventually all methods can
successfully estimate $K_{0}$. Compared to the other four methods, PLR1 and
PLR2 require less constraints on the sparsity level $\rho $ in order to
correctly estimate $K_{0}$. For example, for the DCSBMs with $K_{0}=4$, the
proportions of correctly estimating $K_{0}$ are 0.38 for PLR1 and PLR2,
whereas the proportions are close to zero for other methods at $\rho =3$.
For the DCSBMs with $K_{0}=2$, the proportions are 0.71 and 0.89 for PLR1
and PLR2, respectively, and they are less than 0.1 for other methods at $%
\rho =0.5$.

For further demonstration, Tables \ref{TAB:S1S2K2}-\ref{TAB:S1S2K4} given in the Supplemental Materials report
the mean of the estimated number of communities and the proportion (prop) of
correctly estimating $K_{0}$ for designs S1 and S2 with $n=500$. For S1, we
observe the same pattern as shown in Figure \ref{FIG:prop}. For S2 in which
all entries of $\boldsymbol{B}$ are different, the six methods have
comparable performance.

\begin{figure}[tbp]
\caption{The proportions of correctly estimating $K_{0}$ versus the values
of $\protect\rho $ for the six methods, when data are simulated from design
S1 with $K_{0}=2,3,4$ and $n=500$. }
\label{FIG:prop}
\centering
$
\begin{array}{cc}
\includegraphics[width=6cm,height=6cm]{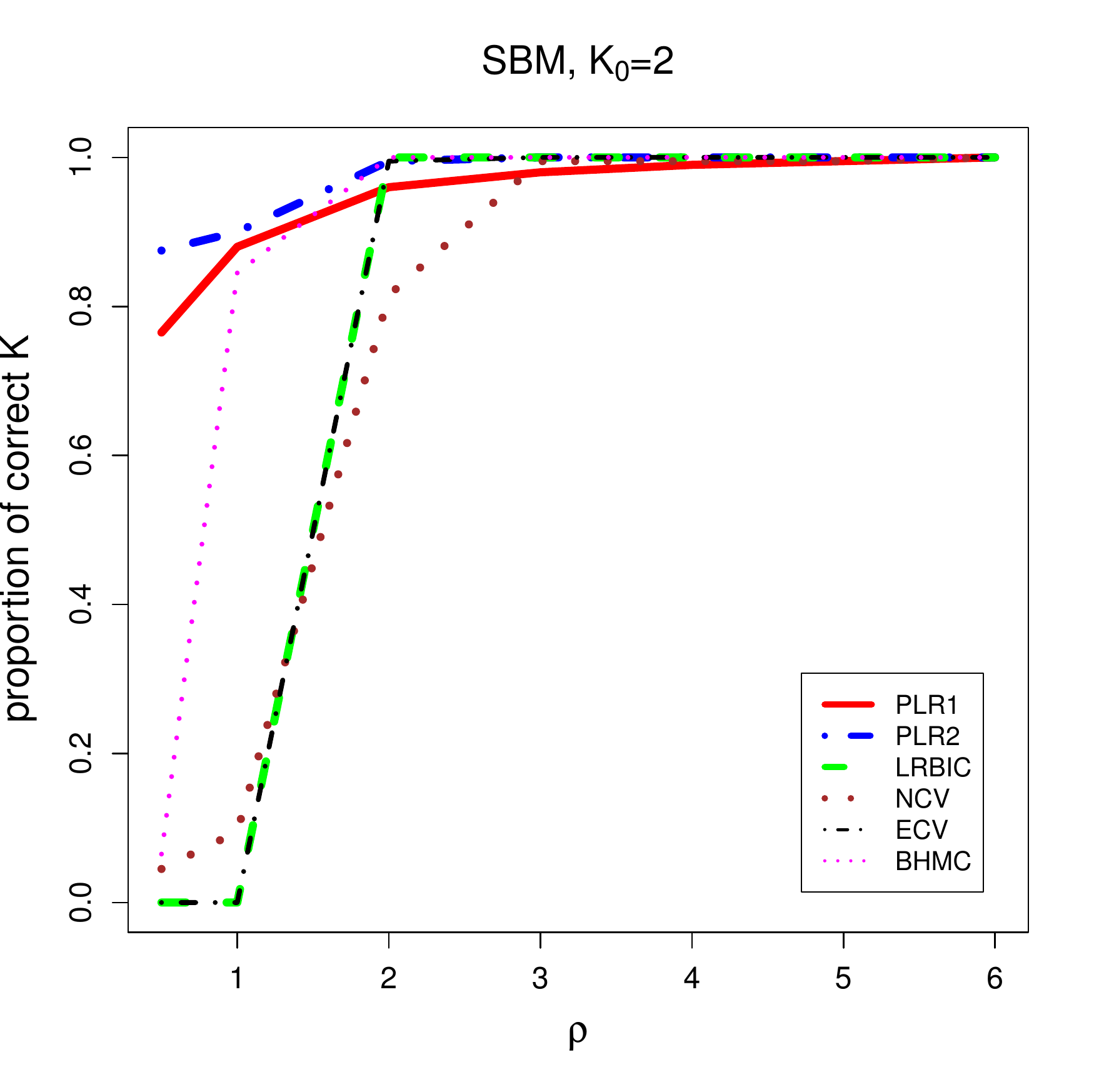} & %
\includegraphics[width=6cm,height=6cm]{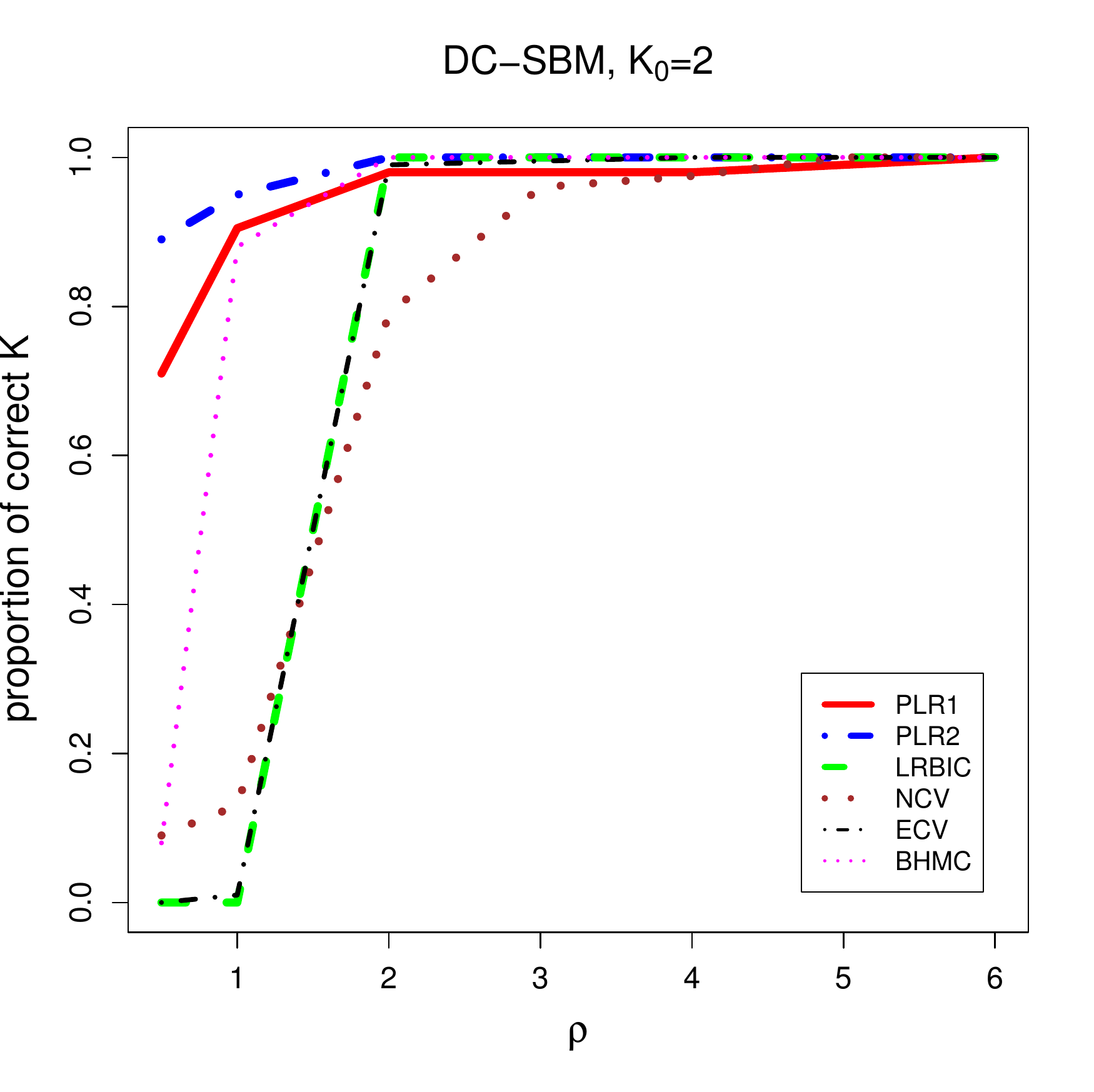} \\
\includegraphics[width=6cm,height=6cm]{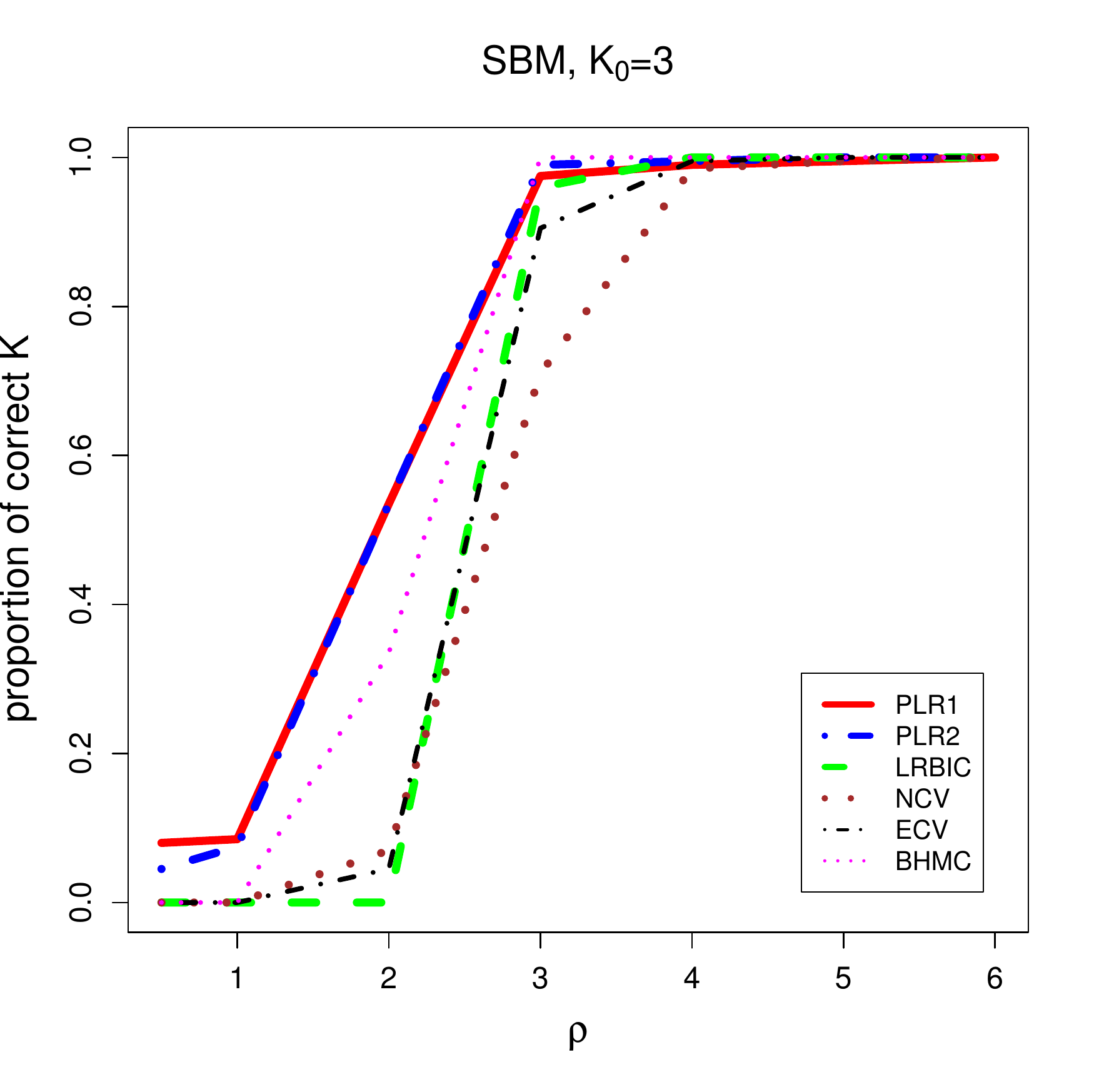} & %
\includegraphics[width=6cm,height=6cm]{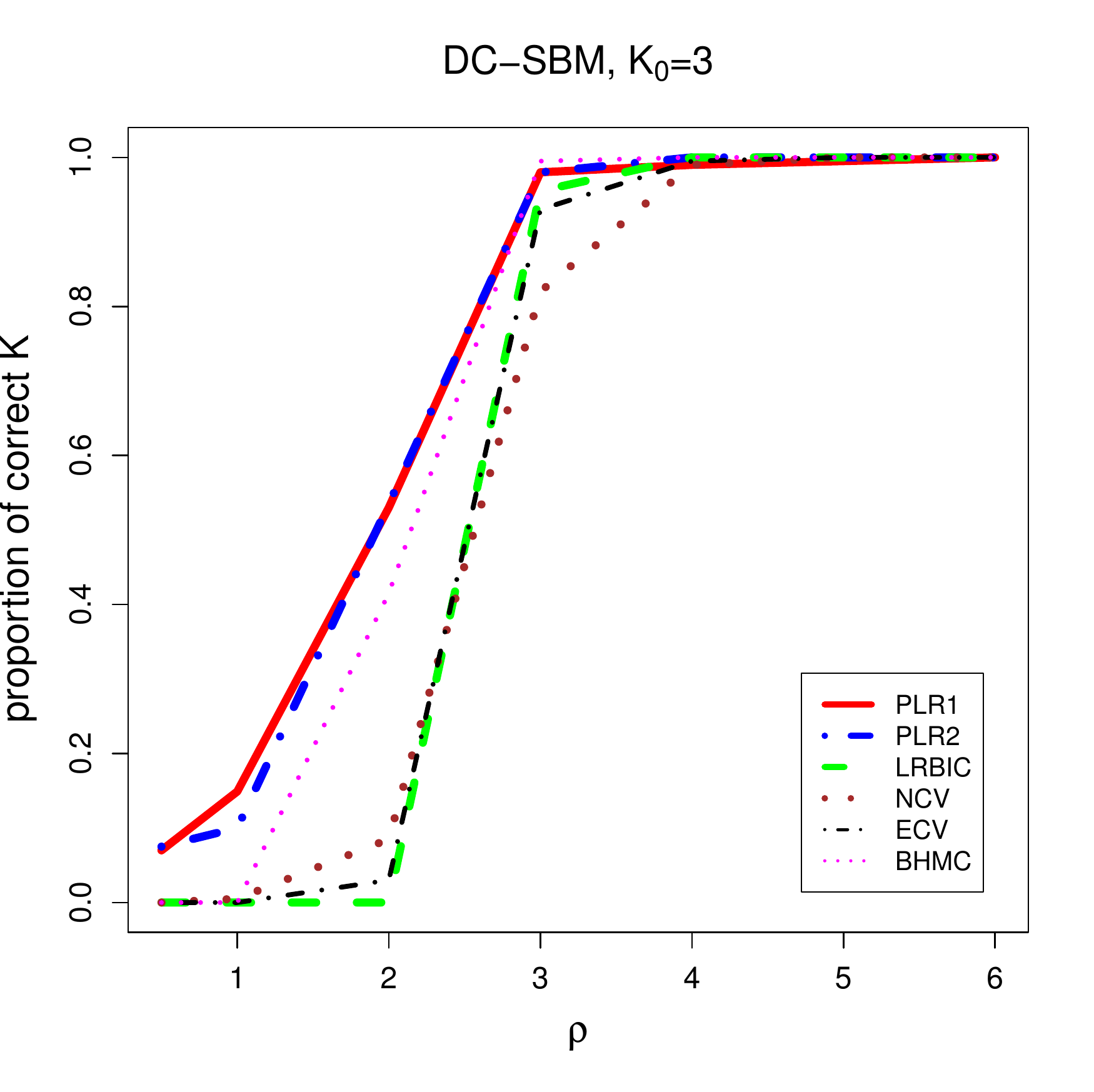} \\
\includegraphics[width=6cm,height=6cm]{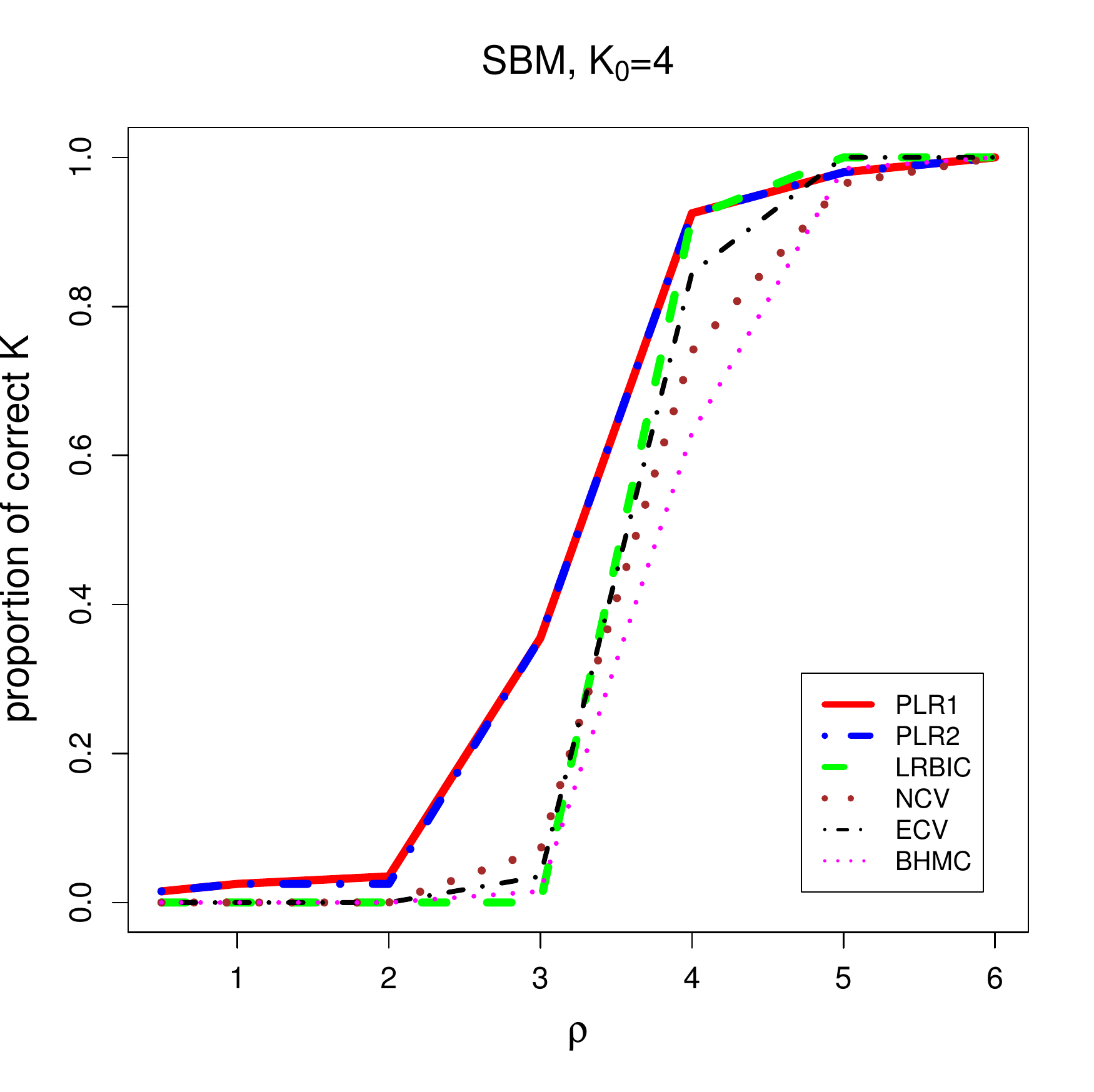} & %
\includegraphics[width=6cm,height=6cm]{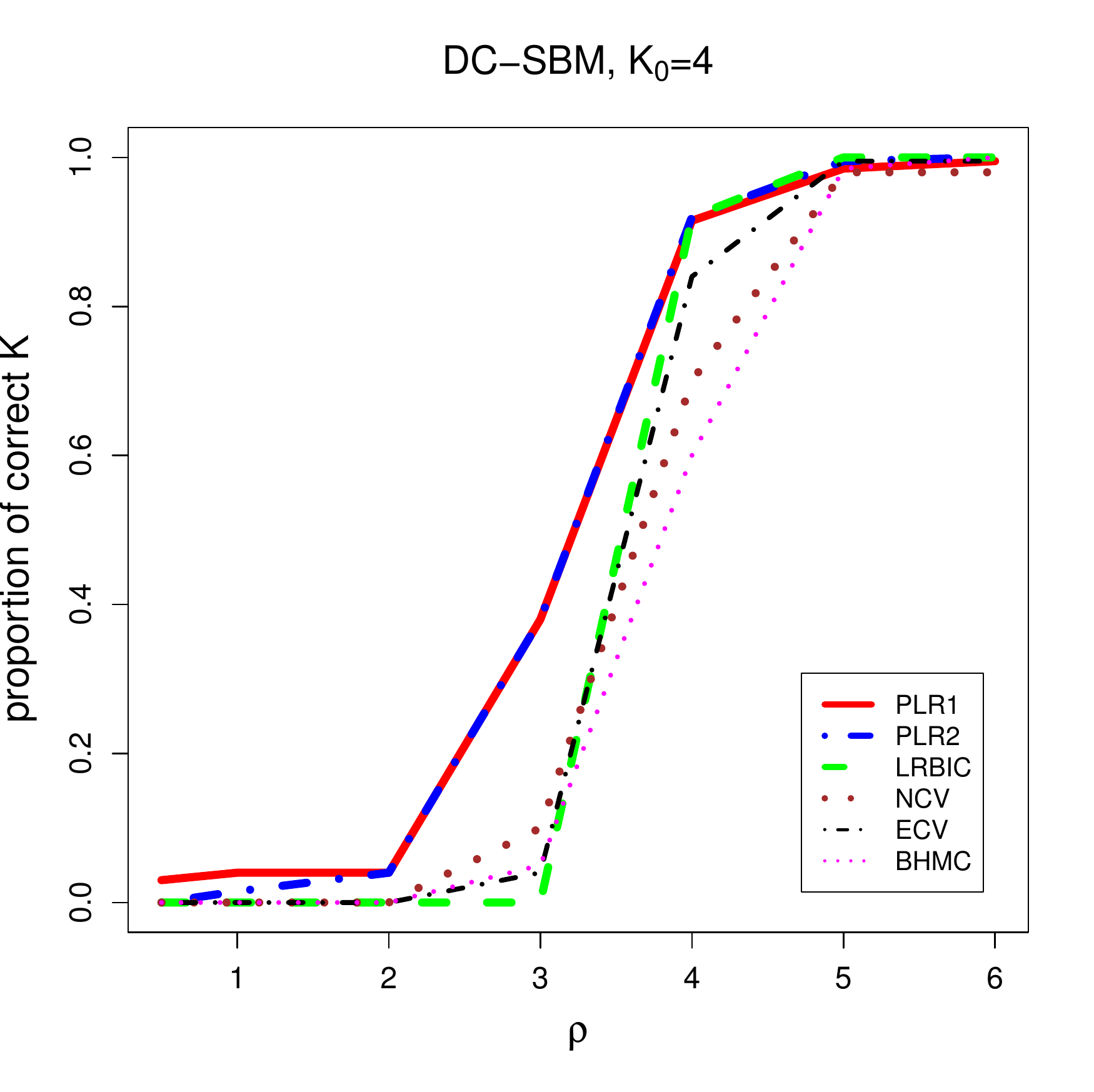}%
\end{array}
$\vskip -1cm
\end{figure}
\section{Real Data Examples\label{sec:real}}

In this section, we evaluate the performance of our method on several
real-world networks.

\subsection{Jazz musicians network}

We apply the methods to analyze the collaboration network of Jazz musicians.
The data are obtained from \textit{The Red Hot Jazz Archive} digital
database (www.redhotjazz.com). In our analysis, we include 198 bands that
performed between 1912 and 1940. We study the community structure of the
band network in which there are 198 nodes representing bands and 2742
unweighted edges indicating at least one common musician between two bands.
The left panel of Figure \ref{Fig:jazz} shows the degree distribution for
the jazz band network. The minimal, average and maximum degrees of this
network are 1.0, 27.7 and 100.0, respectively. Moreover, the distribution of
degrees spreads over the range from 1 to 62 with four degree values outside
this range. This indicates that the node degrees are highly varying for this
network.

Let $K_{\max }=10$ for all methods. We apply our proposed PLR1 and PLR2
methods to estimate the number of communities and obtain that $\widehat{K}%
_{1}=3$ and $\widehat{K}_{2}=3$, so that three communities are identified by
both methods. For further illustration, the right panel of Figure \ref%
{Fig:jazz} depicts the band network with 198 nodes divided into three
communities. The results confirm the community structure mentioned in \cite%
{GD03} that the band network is divided into two large communities based on
geographical locations where the bands recorded, and the largest community
also splits into two communities due to a racial segregation. Moreover, we
obtain the estimated edge probabilities within communities which are $%
\widehat{B}_{kk}=0.349,0.297,0.358$ for $k=1,2,3$, respectively, and edge
probabilities between communities which are $\widehat{B}_{12}=0.029$, $%
\widehat{B}_{13}=0.087$ and $\widehat{B}_{23}=0.007$. Lastly, we obtain the
estimated number of communities as $8$, $3$, $6$ and $7$, respectively, by
the LRBIC, NCV, ECV and BHMC methods.
\begin{figure}[tbp]
\caption{Left panel shows the degree distribution; right panel depicts the
jazz band network with three communities. }
\label{Fig:jazz}
\centering
\vspace{0.5cm} $
\begin{array}{cc}
\textbf{Degree distribution of jazz band network} \vspace{-0.4cm} & \textbf{%
\ Jazz band network} \vspace{-0.4cm} \\
\includegraphics[width = 0.45\linewidth]{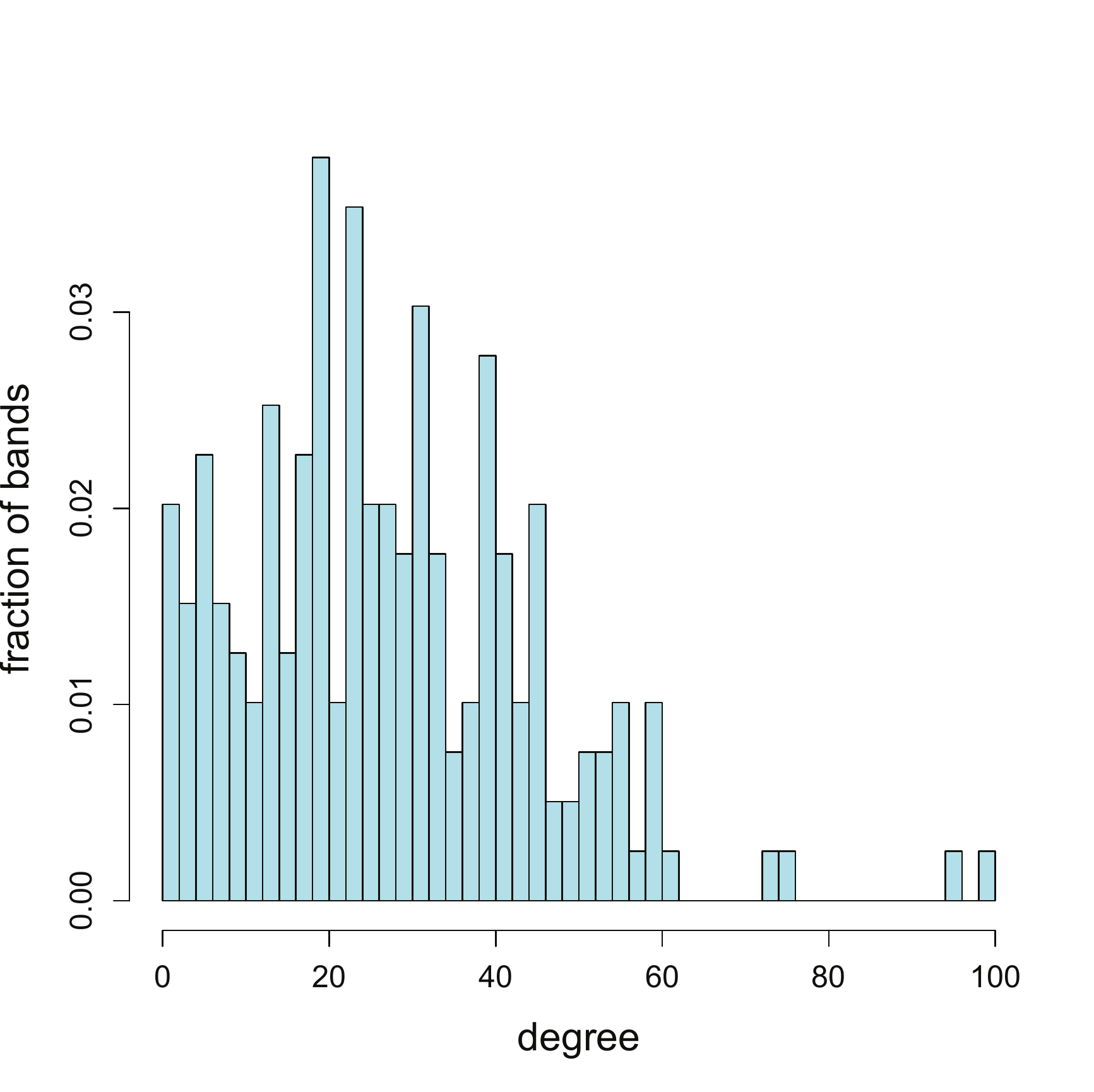} & %
\includegraphics[width = 0.45\linewidth]{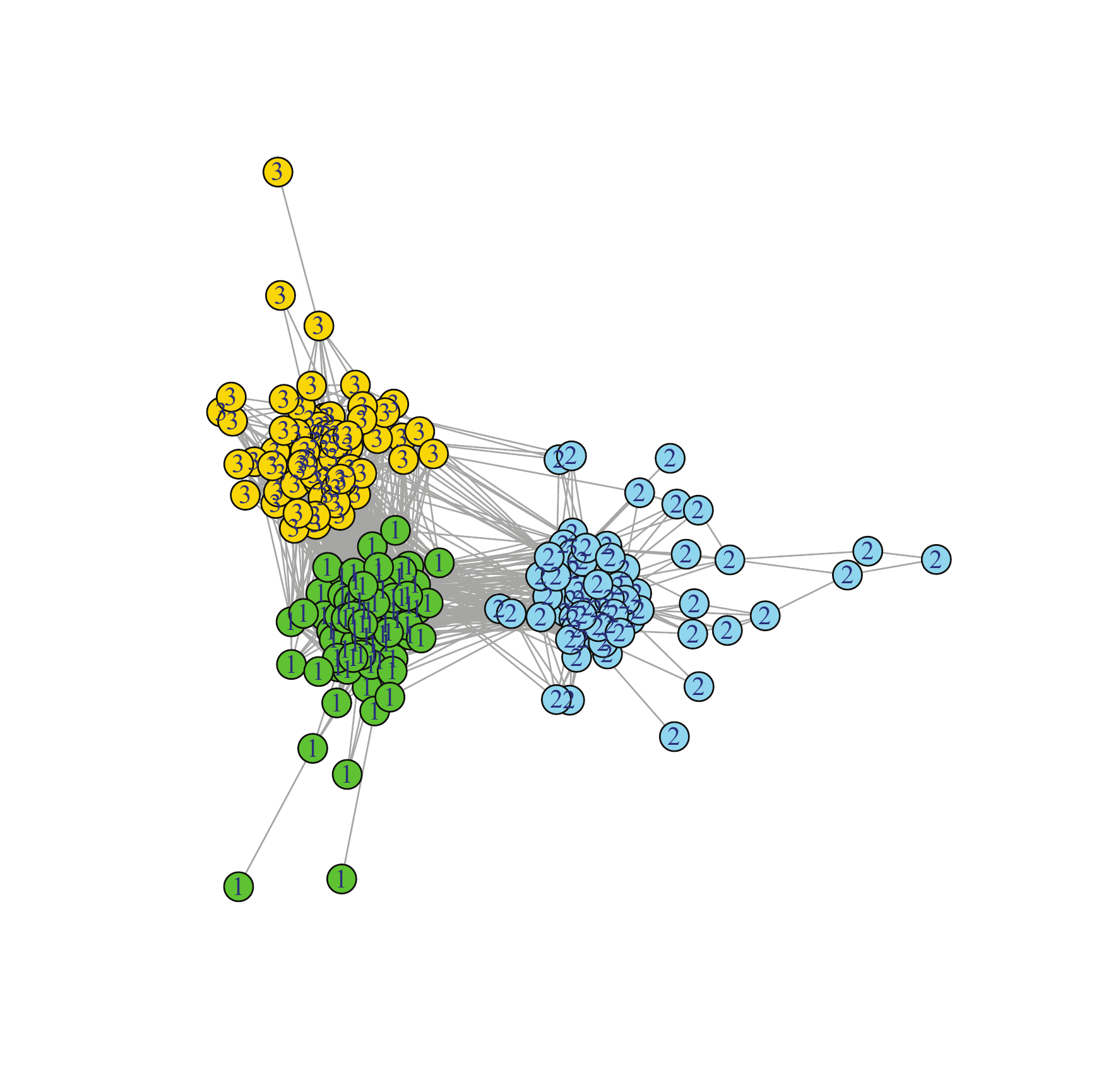}%
\end{array}
$%
\end{figure}

\subsection{Political books network and Facebook friendship network}

We apply our methods to a network of US political books
(available at www.orgnet.com), and to a large social network which contains friendship
data of Facebook users (available at www.snap.stanford.edu). The detailed descriptions of the data applications as well as the numerical results are given in Section \ref{sec:add_application} of the Supplemental Materials. 

\section{Conclusion}

\label{sec:concl}We propose a new pseudo conditional likelihood ratio method
for selecting the number of communities in DCSBMs. The method can be
naturally applied to SBMs. For estimating the model, we consider the
spectral clustering together with a binary segmentation algorithm. This
estimation approach enables us to establish the limiting distribution of the
pseudo likelihood ratio when the model is under-fitted, and derive the upper
bound for it when the model is over-fitted. Based on these properties, we
show the consistency of our estimator for the true number of communities.
Our method is computationally fast as the estimation is based on spectral
clustering, and it also has appealing theoretical properties for the
semi-dense and degree-corrected designs. Moreover, our numerical results
show that the proposed method has good finite sample performance in various
simulation designs and real data applications, and it outperforms several
other popular methods in semi-dense networks.

\section*{Acknowledgement}

The research of Ma is supported in part by the U.S. NSF grant DMS-17-12558.
Su acknowledges the funding support provided by the Lee Kong Chian Fund for
Excellence. Zhang acknowledges the funding support provided by the Singapore
Ministry of Education Tier 2 grant under grant no. MOE2018-T2-2-169 and the
Lee Kong Chian fellowship.

\section*{Supplemental Materials}

\label{SM} Supplemental Materials include more details on the algorithms,
additional simulation and real application results, and the proofs of the main results in the
paper and some technical lemmas.

\newpage
\appendix

\begin{center}
	\huge{
	Supplemental Materials for ``Determining the Number of Communities in
	Degree-corrected Stochastic Block Models"}
\end{center}

\begin{abstract}
This supplement includes five sections. Section A contains more details on
the algorithms. Sections B and C report some additional simulation and real application results.
Section D contains the proofs of the main results in the paper. Section E
provides some technical lemmas and their proofs used in the proofs of the
main results.
	
	\noindent \textbf{Key words and phrases: } Clustering, community detection,
	degree-corrected stochastic block model, k-means, regularization.\vspace{2mm}
\end{abstract}
\setcounter{page}{1} \renewcommand\thesection{\Alph{section}} %

\section{More details on Algorithms \protect\ref{algo:1} and \protect\ref%
	{algo:2}}

\label{sec:appago}

\subsection{Estimators $\hat{P}_{ij}(\hat{Z}_K)$ and $\hat{P}_{ij}(\hat{Z}%
	_K^b)$}

By \cite{WSW2016}, for a given number of communities $K$ and a generic
estimator $\hat{Z}_{K}$ of the community memberships with corresponding
estimated communities $\{\widehat{\mathcal{C}}_{k,K}\}_{k=1}^{K}$, the
maximum likelihood estimators (MLEs) for $\theta _{i}$ and $B_{kl}(\hat{Z}%
_{K})$ in DCSBM are $\hat{\theta}_{i}=\frac{\hat{d}_{i}\hat{n}_{k,K}}{%
	\sum_{i^{\prime }\in \widehat{\mathcal{C}}_{k,K}}\hat{d}_{i^{\prime }}}$ for
$i\in \widehat{\mathcal{C}}_{k,K}$ and $\hat{B}_{kl}(\hat{Z}_{K})=\frac{\hat{%
		O}_{kl,K}}{\hat{n}_{kl,K}}$ for $k,l=1,\cdots ,K$, respectively, where $\hat{%
	n}_{k,K}=\sum_{i=1}^{n}1\{[\hat{Z}_{K}]_{ik}=1\}$,
\begin{equation}
\hat{O}_{kl,K}=\sum_{i=1}^{n}\sum_{j\neq i}1\{[\hat{Z}_{K}]_{ik}=1,[\hat{Z}%
_{K}]_{jl}=1\}A_{ij};  \label{eq:O}
\end{equation}%
\begin{align}
\hat{n}_{kl,K}=& \sum_{i=1}^{n}\sum_{j\neq i}1\{[\hat{Z}_{K}]_{ik}=1,[\hat{Z}%
_{K}]_{jl}=1\}  \notag  \label{eq:nk} \\
=&
\begin{cases}
\hat{n}_{k,K}\hat{n}_{l,K} & \quad \text{if}\quad k\neq l \\
\hat{n}_{k,K}(\hat{n}_{k,K}-1) & \quad \text{if}\quad k=l.%
\end{cases}%
\end{align}%
Therefore, for $i\in \widehat{\mathcal{C}}_{k,K}$ and $j\in \widehat{%
	\mathcal{C}}_{l,K}$, when $k\neq l$,
\begin{align*}
\hat{P}_{ij}(\hat{Z}_{K})=& \hat{\theta}_{i}\hat{\theta}_{j}\hat{B}_{kl}(%
\hat{Z}_{K})=\frac{\hat{O}_{kl,K}\hat{d}_{i}\hat{d}_{j}}{(\sum_{i^{\prime
		}\in \widehat{\mathcal{C}}_{k,K}}\hat{d}_{i^{\prime }})(\sum_{j^{\prime }\in
		\widehat{\mathcal{C}}_{l,K}}\hat{d}_{j^{\prime }})} \\
=& \frac{\hat{O}_{kl,K}\hat{d}_{i}\hat{d}_{j}}{(\sum_{l^{\prime }=1}^{K}\hat{%
		O}_{kl^{\prime },K})(\sum_{l^{\prime }=1}^{K}\hat{O}_{ll^{\prime },K})};
\end{align*}%
when $k=l$ and $i,j\in \widehat{\mathcal{C}}_{k,K}$,
\begin{equation*}
\hat{P}_{ij}(\hat{Z}_{K})=\frac{\hat{O}_{kk,K}\hat{d}_{i}\hat{d}_{j}}{%
	\sum_{i^{\prime },j^{\prime }\in \widehat{\mathcal{C}}_{k,K},i^{\prime }\neq
		j^{\prime }}\hat{d}_{i^{\prime }}\hat{d}_{j^{\prime }}}.
\end{equation*}%
We can compute $\hat{P}_{ij}(\hat{Z}_{K}^{b})$ in the same manner by
replacing $\hat{Z}_{K}$ in the above procedure by $\hat{Z}_{K}^{b}$.

\subsection{More details on the k-means algorithm}

In Algorithm \ref{algo:2}, we propose to estimate $\hat{Z}_K$ and $\hat{Z}%
_{K+1}^b$ by the k-means algorithm. Let $\{\beta_i\}_{i \in \mathcal{C}}$ be
a sequence of $d_\beta \times 1$ vectors. The k-means algorithm with $K$
centroids divides $\{\beta_i\}_{i \in \mathcal{C}}$ into $K$ clusters via
solving the following minimization problem:
\begin{align}  \label{eq:kmeans}
(\alpha_1^*,\cdots,\alpha_K^*) = \argmin_{\alpha_1,\cdots,\alpha_K}\sum_{i
	\in \mathcal{C}} \min_{1 \leq k \leq K}||\beta_i - \alpha_k||^2,
\end{align}
where the $i$-th node is classified into cluster $k$ if $k = \argmin_{1\leq
	l \leq K}||\beta_i - \alpha_{l}^*||$ and if there exists a tie, i.e., $%
\argmin_{1\leq l \leq K}||\beta_i - \alpha_{l}^*||$ is not a singleton, then
we denote $k$ as the smallest minimizer. Then, $\hat{Z}_K$ is obtained by
solving \eqref{eq:kmeans} with $\beta_i = \hat{\nu}_{iK}$, $i=1,\cdots,n$ with $K$ centroids. For $\hat{Z}_{K+1}^b$, the binary segmentation step is implemented via
solving \eqref{eq:kmeans} with 2 centroids and $\beta_i = \hat{\nu}_{iK+1}$,
$i \in \widehat{\mathcal{C}}_{k,K}$, for $k = 1,\cdots,K$.

In Section \ref{sec:est}, we define $(Z_K,Z_K^b)$ by applying Algorithm \ref%
{algo:2} on $\nu_{iK}$. In view of Theorem \ref{thm:id3}(2), $\nu _{iK}$
takes $L_K$ distinct values $(\bar{\nu}_{1K},\cdots,\bar{\nu}_{L_KK})$. Let
\begin{equation*}
\pi_{l,K} = \#\{i:\nu_{iK} = \bar{\nu}_{lK}\}/n \geq \inf_{1\leq k \leq
	K_0}\pi_{kn}
\end{equation*}
and $g_{iK}$ be the membership for node $i$ obtained this way, i.e., $g_{iK}=%
\argmin_{1\leq k\leq K}||\nu _{iK}-\alpha _{k}^{\ast }||$ where
\begin{align}
\{\alpha _{k}^{\ast }\}_{k=1}^{K}=& \argmin_{\alpha _{1},\cdots ,\alpha
	_{K}}n^{-1}\sum_{i=1}^{n}\min_{1\leq k\leq K}||\nu_{iK}-\alpha _{k}||^{2}
\notag \\
=& \argmin_{\alpha _{1},\cdots ,\alpha _{K}}\sum_{l=1}^{L_K}\pi
_{l,K}\min_{1\leq k\leq K}||\bar{\nu} _{lK}-\alpha _{k}||^{2}.  \label{eq:ZK}
\end{align}%
Then $[Z_{K}]_{ik}=1$ if $g_{iK}=k$, $[Z_{K}]_{ik}=0$ otherwise, and $%
\mathcal{C}_{k,K}=\{i:g_{iK}=k\}$. We define $Z_{K+1}^{b}$ for $K=1,\cdots
,K_{0}-1$ as follows.


\begin{enumerate}
	\item Given $\{\mathcal{C}_{k,K}\}_{k=1}^{K}$, let $\widetilde{\mathcal{C}}%
	_{k,K}^{l}=\mathcal{C}_{k,K}\cap  G_{l,K+1}$, for $l=1,\cdots ,L_{K}$,%
	\footnote{%
		As can be shown, $\widetilde{\mathcal{C}}_{k,K}^{l} = G_{l,K+1}$ or $%
		\emptyset.$} where $G_{l,K+1}$ is defined in Theorem \ref{thm:id3}(2). We divide each $\mathcal{C}_{k,K}$ into two subgroups by
	applying the k-means  algorithm to $\{\nu _{iK+1}\}_{i\in \mathcal{C}_{k,K}}$
	with two centroids. Denote  the two subgroups as $\mathcal{C}_{k,K}(1)$ and $%
	\mathcal{C}_{k,K}(2)$. Note that, by the proof of Theorem \ref{thm:id3}(2),
	for $i \in \tilde{ \mathcal{C}}^l_{k,K}$, $\nu _{iK+1}$ take the same value.
	
	\item For each $k=1,\cdots ,K$, compute
	\begin{equation}
	Q_{K}(k)=\frac{\Phi (\mathcal{C}_{k,K})-\Phi (\mathcal{C}_{k,K}(1))-\Phi (%
		\mathcal{C}_{k,K}(2))}{\#\mathcal{C}_{k,K}},  \label{eq:Q}
	\end{equation}%
	where for an arbitrary index set $\mathcal{C}$, $\Phi (\mathcal{C}%
	)=\sum_{i\in \mathcal{C}}||\nu _{iK+1}-\frac{\sum_{i\in \mathcal{C}}\nu
		_{iK+1}}{\#\mathcal{C}}||^{2}.$
	
	\item Choose $k^{\ast }=\argmax_{1\leq k\leq K}Q_{K}(k)$. Denote
	\begin{equation*}
	\{\mathcal{C}_{k,K+1}^{b}\}_{k=1}^{K+1}=\{\{\mathcal{C}_{k,K}\}_{k<k^{\ast
	}},\mathcal{C}_{k^{\ast },K}(1),\newline
	\{\mathcal{C}_{k,K}\}_{k>k^{\ast }},\mathcal{C}_{k^{\ast },K}(2)\}
	\end{equation*}%
	as the new groups in $Z_{K+1}^{b}$.
\end{enumerate}

\bigskip

\section{Additional simulation results}

\label{sec:add_simulations}
Tables \ref{TAB:Khat2SBM} and \ref{TAB:Khat1SBM} given below report the mean of $\widehat{K}_{2}$ and $\widehat{K}_{1}$ by the PLR2 and PLR1 methods, respectively, and the proportion (prop) of correctly estimating $%
K_{0}$ among $200$ simulated datasets when data are generated from the
SBMs described in Section \ref{sec:DGP}, for $n=500,1000$ and $K_{0}=1,2,3,4$.

\begin{table}[tbph]
	\caption{The mean of $\protect\widehat{K}_{2}$ and the proportion (prop) of
		correctly estimating $K_{0}$ among $200$ simulated datasets when data are
		generated from SBMs$.$}
	\label{TAB:Khat2SBM}
	\par
	\begin{center}
		\begin{adjustbox}{width=1.1\textwidth,height = 3.8cm}
			\hskip -1.5cm					
			\begin{tabular}{|l|c|c|cccc|cccc|cccc|cccc|}
				\cline{1-19}
				&  &  & \multicolumn{4}{c|}{$K_{0}=1$} & \multicolumn{4}{c|}{$K_{0}=2$} &
				\multicolumn{4}{c|}{$K_{0}=3$} & \multicolumn{4}{c|}{$K_{4}=4$} \\
				\cline{1-19}
				& $\rho $ & $c_{h}$ & \multicolumn{1}{c}{$0.5$} & $1.0$ & $1.5$ & $2.0$ &
				\multicolumn{1}{c}{$0.5$} & $1.0$ & $1.5$ & $2.0$ & \multicolumn{1}{c}{$0.5$}
				& $1.0$ & $1.5$ & $2.0$ & $0.5$ & $1.0$ & $1.5$ & $2.0$ \\ \cline{1-19}
				&  &  & \multicolumn{16}{c|}{$n=500$} \\ \cline{1-19}
				S1 & $3$ & mean & $1.035$ & $1.000$ & $1.000$ & $1.000$ & $2.025$ & $2.000$
				& $2.000$ & $2.000$ & $3.060$ & $3.060$ & $3.000$ & $3.000$ & $3.465$ & $%
				3.465$ & $3.430$ & $3.355$ \\
				&  & prop & $0.995$ & $1.000$ & $1.000$ & $1.000$ & $0.995$ & $1.000$ & $%
				1.000$ & $1.000$ & $0.990$ & $0.990$ & $1.000$ & $1.000$ & $0.355$ & $0.355$
				& $0.350$ & $0.330$ \\
				& $4$ & mean & $1.000$ & $1.000$ & $1.000$ & $1.000$ & $2.030$ & $2.000$ & $%
				2.000$ & $2.000$ & $3.115$ & $3.015$ & $3.000$ & $3.000$ & $4.085$ & $4.085$
				& $4.085$ & $4.005$ \\
				&  & prop & $1.000$ & $1.000$ & $1.000$ & $1.000$ & $0.995$ & $1.000$ & $%
				1.000$ & $1.000$ & $0.975$ & $0.995$ & $1.000$ & $1.000$ & $0.925$ & $0.925$
				& $0.925$ & $0.925$ \\
				& $5$ & mean & $1.000$ & $1.000$ & $1.000$ & $1.000$ & $2.000$ & $2.000$ & $%
				2.000$ & $2.000$ & $3.000$ & $3.000$ & $3.000$ & $3.000$ & $4.060$ & $4.060$
				& $4.060$ & $4.000$ \\
				&  & prop & $1.000$ & $1.000$ & $1.000$ & $1.000$ & $1.000$ & $1.000$ & $%
				1.000$ & $1.000$ & $1.000$ & $1.000$ & $1.000$ & $1.000$ & $0.980$ & $0.980$
				& $0.980$ & $1.000$ \\
				S2 &  & mean & $1.000$ & $1.000$ & $1.000$ & $1.000$ & $2.000$ & $2.000$ & $%
				2.000$ & $2.000$ & $3.000$ & $3.000$ & $2.035$ & $2.000$ & $4.000$ & $3.995$
				& $3.820$ & $3.620$ \\
				&  & prop & $1.000$ & $1.000$ & $1.000$ & $1.000$ & $1.000$ & $1.000$ & $%
				1.000$ & $1.000$ & $1.000$ & $1.000$ & $0.035$ & $0.000$ & $1.000$ & $0.995$
				& $0.895$ & $0.795$ \\ \cline{1-19}
				&  &  & \multicolumn{16}{c|}{$n=1000$} \\ \cline{1-19}
				S1 & $3$ & mean & $1.000$ & $1.000$ & $1.000$ & $1.000$ & $2.055$ & $2.000$
				& $2.000$ & $2.000$ & $3.040$ & $3.005$ & $3.000$ & $3.000$ & $4.080$ & $%
				4.050$ & $4.020$ & $3.990$ \\
				&  & prop & $1.000$ & $1.000$ & $1.000$ & $1.000$ & $0.990$ & $1.000$ & $%
				1.000$ & $1.000$ & $0.985$ & $0.995$ & $1.000$ & $1.000$ & $0.980$ & $0.990$
				& $0.995$ & $0.995$ \\
				& $4$ & mean & $1.000$ & $1.000$ & $1.000$ & $1.000$ & $2.000$ & $2.000$ & $%
				2.000$ & $2.000$ & $3.015$ & $3.000$ & $3.000$ & $3.000$ & $4.020$ & $4.000$
				& $4.000$ & $4.000$ \\
				&  & prop & $1.000$ & $1.000$ & $1.000$ & $1.000$ & $1.000$ & $1.000$ & $%
				1.000$ & $1.000$ & $0.995$ & $1.000$ & $1.000$ & $1.000$ & $0.995$ & $1.000$
				& $1.000$ & $1.000$ \\
				& $5$ & mean & $1.000$ & $1.000$ & $1.000$ & $1.000$ & $2.000$ & $2.000$ & $%
				2.000$ & $2.000$ & $3.045$ & $3.000$ & $3.000$ & $3.000$ & $4.030$ & $4.020$
				& $4.000$ & $4.000$ \\
				&  & prop & $1.000$ & $1.000$ & $1.000$ & $1.000$ & $1.000$ & $1.000$ & $%
				1.000$ & $1.000$ & $0.990$ & $1.000$ & $1.000$ & $1.000$ & $0.990$ & $0.995$
				& $1.000$ & $1.000$ \\
				S2 &  & mean & $1.000$ & $1.000$ & $1.000$ & $1.000$ & $2.000$ & $2.000$ & $%
				2.000$ & $2.000$ & $3.000$ & $3.000$ & $3.000$ & $2.035$ & $4.000$ & $4.000$
				& $4.000$ & $3.320$ \\
				&  & prop & $1.000$ & $1.000$ & $1.000$ & $1.000$ & $1.000$ & $1.000$ & $%
				1.000$ & $1.000$ & $1.000$ & $1.000$ & $1.000$ & $0.035$ & $1.000$ & $1.000$
				& $1.000$ & $0.660$ \\ \cline{1-19}
			\end{tabular}%
		\end{adjustbox}		
	\end{center}
\end{table}

\begin{table}[tbph]
	\caption{The mean of $\protect\widehat{K}_{1}$ and the proportion (prop) of
		correctly estimating $K_{0}$ among $200$ simulated datasets when data are
		generated from SBMs.}
	\label{TAB:Khat1SBM}
	\par
	\begin{center}
		\begin{adjustbox}{width=\textwidth}
			\begin{tabular}{|l|c|c|cccc|cccc|}
				\cline{1-11}
				&  &  & \multicolumn{4}{c}{$n=500$} & \multicolumn{4}{|c|}{$n=1000$} \\
				\cline{1-11}
				&  $\rho $ &  & $K_{0}=1$ & $K_{0}=2$ & $K_{0}=3$ & $K_{4}=4$ & $K_{0}=1$ & $K_{0}=2$
				& $K_{0}=3$ & $K_{4}=4$ \\ \cline{1-11}
				S1 & $3$ & mean & $1.035$ & $2.095$ & $3.115$ & $3.465$ & $1.000$ & $2.055$
				& $3.040$ & $4.080$ \\
				&  & prop & $0.995$ & $0.980$ & $0.975$ & $0.355$ & $1.000$ & $0.990$ & $%
				0.985$ & $0.980$ \\
				& $4$ & mean & $1.000$ & $2.045$ & $3.060$ & $4.085$ & $1.000$ & $2.000$ & $%
				3.015$ & $4.020$ \\
				&  & prop & $1.000$ & $0.990$ & $0.990$ & $0.925$ & $1.000$ & $1.000$ & $%
				0.995$ & $0.995$ \\
				& $5$ & mean & $1.000$ & $2.020$ & $3.015$ & $4.060$ & $1.000$ & $2.000$ & $%
				3.045$ & $4.030$ \\
				&  & prop & $1.000$ & $0.995$ & $0.995$ & $0.980$ & $1.000$ & $1.000$ & $%
				0.990$ & $0.990$ \\
				S2 &  & mean & $1.000$ & $2.000$ & $3.110$ & $4.000$ & $1.000$ & $2.000$ & $%
				3.000$ & $4.000$ \\
				&  & prop & $1.000$ & $1.000$ & $0.980$ & $1.000$ & $1.000$ & $1.000$ & $%
				1.000$ & $1.000$ \\ \cline{1-11}
			\end{tabular}%
		\end{adjustbox}
	\end{center}
\end{table}

For further comparisons of the six methods, PLR1, PLR2, LRBIC, NCV, ECV and BHMC, mentioned in Section \ref{sec:background}, Tables \ref{TAB:S1S2K2}-\ref{TAB:S1S2K4} report
the mean of the estimated number of communities and the proportion (prop) of
correctly estimating $K_{0}$ for designs S1 and S2 with $n=500$. For S1, we
observe the same pattern as shown in Figure \ref{FIG:prop}. For S2 in which
all entries of $\boldsymbol{B}$ are different, the six methods have
comparable performance.

\begin{table}[tbph]
	\caption{The mean of $\protect\widehat{K}$ by the six methods and the
		proportion (prop) of correctly estimating $K_{0}$ among $200$ simulated
		datasets for $K_{0}=2$ and $n=500$.}
	\label{TAB:S1S2K2}
	\par
	\begin{center}
		\begin{adjustbox}{width=0.85\textwidth}
			\begin{tabular}{|l|c|cccccccc|}
				\cline{1-10}
				&  & \multicolumn{7}{c}{S1} & S2 \\
				& $\rho $ & $0.5$ & $1$ & $2$ & $3$ & $4$ & $5$ & $6$ &  \\ \cline{1-10}
				&  & \multicolumn{8}{c|}{SBM} \\ \cline{1-10}
				PLR1 & mean & 2.865 & 2.380 & 2.235 & 2.095 & 2.045 & 2.020 & 2.000 & 2.000
				\\
				& prop & 0.765 & 0.880 & 0.960 & 0.980 & 0.990 & 0.995 & 1.000 & 1.000 \\
				PLR2 & mean & 2.290 & 2.285 & 2.025 & 2.000 & 2.000 & 2.000 & 2.000 & 2.000
				\\
				& prop & 0.875 & 0.900 & 0.995 & 1.000 & 1.000 & 1.000 & 1.000 & 1.000 \\
				LRBIC & mean & 1.000 & 1.000 & 2.000 & 2.000 & 2.000 & 2.000 & 2.000 & 2.000
				\\
				& prop & 0.000 & 0.000 & 1.000 & 1.000 & 1.000 & 1.000 & 1.000 & 1.000 \\
				NCV & mean & 1.055 & 1.105 & 2.205 & 2.005 & 2.010 & 2.020 & 2.000 & 2.005
				\\
				& prop & 0.045 & 0.095 & 0.815 & 0.995 & 0.990 & 0.995 & 1.000 & 0.995 \\
				ECV & mean & 1.000 & 1.000 & 2.005 & 2.000 & 2.000 & 2.000 & 2.000 & 2.000
				\\
				& prop & 0.000 & 0.000 & 0.995 & 1.000 & 1.000 & 1.000 & 1.000 & 1.000 \\
				BHMC & mean & 1.065 & 1.865 & 2.000 & 2.000 & 2.000 & 2.000 & 2.000 & 2.000
				\\
				& prop & 0.065 & 0.845 & 1.000 & 1.000 & 1.000 & 1.000 & 1.000 & 1.000 \\
				\cline{1-10}
				&  & \multicolumn{8}{c|}{DCSBM} \\ \cline{1-10}
				PLR1 & mean & 3.015 & 2.425 & 2.120 & 2.095 & 2.090 & 2.035 & 2.025 & 2.000
				\\
				& prop & 0.710 & 0.905 & 0.980 & 0.980 & 0.980 & 0.990 & 0.995 & 1.000 \\
				PLR2 & mean & 2.275 & 2.205 & 2.000 & 2.000 & 2.000 & 2.000 & 2.000 & 2.000
				\\
				& prop & 0.890 & 0.950 & 1.000 & 1.000 & 1.000 & 1.000 & 1.000 & 1.000 \\
				LRBIC & mean & 1.000 & 1.000 & 2.000 & 2.000 & 2.000 & 2.000 & 2.000 & 2.000
				\\
				& prop & 0.000 & 0.000 & 1.000 & 1.000 & 1.000 & 1.000 & 1.000 & 1.000 \\
				NCV & mean & 1.150 & 1.170 & 2.040 & 1.970 & 1.995 & 2.000 & 2.000 & 2.005
				\\
				& prop & 0.090 & 0.130 & 0.790 & 0.960 & 0.975 & 1.000 & 1.000 & 0.995 \\
				ECV & mean & 1.000 & 1.010 & 2.000 & 2.005 & 2.000 & 2.000 & 2.000 & 2.000
				\\
				& prop & 0.000 & 0.010 & 0.990 & 0.995 & 1.000 & 1.000 & 1.000 & 1.000 \\
				BHMC & mean & 1.080 & 1.880 & 2.000 & 2.000 & 2.000 & 2.000 & 2.000 & 2.000
				\\
				& prop & 0.080 & 0.880 & 1.000 & 1.000 & 1.000 & 1.000 & 1.000 & 1.000 \\
				\cline{1-10}
			\end{tabular}%
		\end{adjustbox}
	\end{center}
\end{table}
\begin{table}[tbph]
	\caption{The mean of $\protect\widehat{K}$ by the six methods and the
		proportion (prop) of correctly estimating $K_{0}$ among $200$ simulated
		datasets for $K_{0}=3$ and $n=500$.}
	\label{TAB:S1S2K3}
	\par
	\begin{center}
		\begin{adjustbox}{width=0.85\textwidth}
			\begin{tabular}{|l|c|cccccccc|}
				\cline{1-10}
				&  & \multicolumn{7}{c}{S1} & S2 \\
				& $\rho $ & $0.5$ & $1$ & $2$ & $3$ & $4$ & $5$ & $6$ &  \\ \cline{1-10}
				&  & \multicolumn{8}{c|}{SBM} \\ \cline{1-10}
				PLR1 & mean & 3.035 & 2.715 & 2.975 & 3.115 & 3.060 & 3.015 & 3.000 & 3.110
				\\
				& prop & 0.080 & 0.085 & 0.535 & 0.975 & 0.990 & 0.995 & 1.000 & 0.980 \\
				PLR2 & mean & 2.125 & 2.595 & 2.975 & 3.060 & 3.015 & 3.000 & 3.000 & 3.000
				\\
				& prop & 0.045 & 0.075 & 0.535 & 0.990 & 0.995 & 1.000 & 1.000 & 1.000 \\
				LRBIC & mean & 1.000 & 1.000 & 1.005 & 2.960 & 3.000 & 3.000 & 3.000 & 3.000
				\\
				& prop & 0.000 & 0.000 & 0.000 & 0.960 & 1.000 & 1.000 & 1.000 & 1.000 \\
				NCV & mean & 1.045 & 1.050 & 1.495 & 2.830 & 3.015 & 3.015 & 3.000 & 3.030
				\\
				& prop & 0.000 & 0.000 & 0.070 & 0.710 & 0.985 & 0.995 & 1.000 & 0.970 \\
				ECV & mean & 1.000 & 1.000 & 1.400 & 2.905 & 3.005 & 3.000 & 3.000 & 3.005
				\\
				& prop & 0.000 & 0.000 & 0.045 & 0.905 & 0.995 & 1.000 & 1.000 & 0.995 \\
				BHMC & mean & 1.055 & 1.160 & 2.335 & 3.000 & 3.000 & 3.000 & 3.000 & 3.000
				\\
				& prop & 0.000 & 0.000 & 0.335 & 1.000 & 1.000 & 1.000 & 1.000 & 1.000 \\
				\cline{1-10}
				&  & \multicolumn{8}{c|}{DCSBM} \\ \cline{1-10}
				PLR1 & mean & 2.925 & 2.930 & 3.180 & 3.070 & 3.025 & 3.030 & 3.025 & 3.035
				\\
				& prop & 0.070 & 0.149 & 0.530 & 0.980 & 0.990 & 0.995 & 0.995 & 0.995 \\
				PLR2 & mean & 2.125 & 2.830 & 3.150 & 3.070 & 3.000 & 3.000 & 3.000 & 3.000
				\\
				& prop & 0.075 & 0.100 & 0.535 & 0.980 & 1.000 & 1.000 & 1.000 & 1.000 \\
				LRBIC & mean & 1.000 & 1.000 & 1.025 & 2.955 & 3.000 & 3.000 & 3.000 & 3.000
				\\
				& prop & 0.000 & 0.000 & 0.000 & 0.955 & 1.000 & 1.000 & 1.000 & 1.000 \\
				NCV & mean & 1.040 & 1.065 & 1.595 & 2.955 & 3.000 & 3.005 & 3.000 & 3.010
				\\
				& prop & 0.005 & 0.000 & 0.085 & 0.820 & 0.990 & 0.995 & 1.000 & 0.990 \\
				ECV & mean & 1.000 & 1.000 & 1.350 & 2.940 & 3.005 & 3.000 & 3.000 & 3.000
				\\
				& prop & 0.000 & 0.000 & 0.030 & 0.930 & 0.995 & 1.000 & 1.000 & 1.000 \\
				BHMC & mean & 1.055 & 1.145 & 2.415 & 2.995 & 3.000 & 3.000 & 3.000 & 3.000
				\\
				& prop & 0.000 & 0.000 & 0.415 & 0.995 & 1.000 & 1.000 & 1.000 & 1.000 \\
				\cline{1-10}
			\end{tabular}%
		\end{adjustbox}
	\end{center}
\end{table}
\begin{table}[tbph]
	\caption{The mean of $\protect\widehat{K}$ by the six methods and the
		proportion (prop) of correctly estimating $K_{0}$ among $200$ simulated
		datasets for $K_{0}=4$ and $n=500$.}
	\label{TAB:S1S2K4}
	\par
	\begin{center}
		\begin{adjustbox}{width=0.85\textwidth}
			\begin{tabular}{|l|c|cccccccc|}
				\cline{1-10}
				&  & \multicolumn{7}{c}{S1} & S2 \\
				& $\rho $ & $0.5$ & $1$ & $2$ & $3$ & $4$ & $5$ & $6$ &  \\ \cline{1-10}
				&  & \multicolumn{8}{c|}{SBM} \\ \cline{1-10}
				PLR1 & mean & 2.665 & 2.850 & 3.200 & 3.465 & 4.085 & 4.060 & 4.000 & 4.000
				\\
				& prop & 0.015 & 0.025 & 0.035 & 0.355 & 0.925 & 0.980 & 1.000 & 1.000 \\
				PLR2 & mean & 2.300 & 2.850 & 2.665 & 3.465 & 4.085 & 4.060 & 4.000 & 3.995
				\\
				& prop & 0.015 & 0.025 & 0.025 & 0.355 & 0.925 & 0.980 & 1.000 & 0.995 \\
				LRBIC & mean & 1.000 & 1.000 & 1.000 & 1.005 & 3.840 & 4.000 & 4.000 & 4.000
				\\
				& prop & 0.000 & 0.000 & 0.000 & 0.000 & 0.920 & 1.000 & 1.000 & 1.000 \\
				NCV & mean & 1.015 & 1.020 & 1.004 & 1.500 & 4.030 & 4.005 & 4.000 & 4.060
				\\
				& prop & 0.000 & 0.000 & 0.000 & 0.070 & 0.740 & 0.965 & 1.000 & 0.940 \\
				ECV & mean & 1.000 & 1.000 & 1.000 & 1.370 & 3.905 & 4.000 & 4.000 & 4.000
				\\
				& prop & 0.000 & 0.000 & 0.000 & 0.035 & 0.845 & 1.000 & 1.000 & 1.000 \\
				BHMC & mean & 1.035 & 1.020 & 1.200 & 2.330 & 3.610 & 3.985 & 4.000 & 4.000
				\\
				& prop & 0.000 & 0.000 & 0.000 & 0.015 & 0.630 & 0.985 & 1.000 & 1.000 \\
				\cline{1-10}
				&  & \multicolumn{8}{c|}{DCSBM} \\ \cline{1-10}
				PLR1 & mean & 2.750 & 2.780 & 2.765 & 3.675 & 4.175 & 4.045 & 4.010 & 4.005
				\\
				& prop & 0.030 & 0.040 & 0.040 & 0.380 & 0.915 & 0.985 & 0.995 & 0.995 \\
				PLR2 & mean & 2.105 & 2.655 & 2.745 & 3.675 & 4.150 & 4.015 & 4.000 & 4.005
				\\
				& prop & 0.000 & 0.015 & 0.040 & 0.380 & 0.920 & 0.995 & 1.000 & 0.995 \\
				LRBIC & mean & 1.000 & 1.000 & 1.000 & 1.005 & 3.845 & 4.000 & 4.000 & 4.000
				\\
				& prop & 0.000 & 0.000 & 0.000 & 0.000 & 0.920 & 1.000 & 1.000 & 1.000 \\
				NCV & mean & 1.050 & 1.003 & 1.045 & 1.805 & 4.005 & 4.015 & 4.020 & 4.060
				\\
				& prop & 0.000 & 0.000 & 0.000 & 0.100 & 0.700 & 0.980 & 0.980 & 0.940 \\
				ECV & mean & 1.000 & 1.000 & 1.000 & 1.435 & 3.895 & 4.000 & 4.005 & 4.005
				\\
				& prop & 0.000 & 0.000 & 0.000 & 0.040 & 0.840 & 1.000 & 0.995 & 0.995 \\
				BHMC & mean & 1.075 & 1.015 & 1.285 & 2.360 & 3.575 & 3.985 & 4.000 & 4.000
				\\
				& prop & 0.000 & 0.000 & 0.000 & 0.050 & 0.600 & 0.985 & 1.000 & 1.000 \\
				\cline{1-10}
			\end{tabular}
		\end{adjustbox}
	\end{center}
\end{table}

As suggested by one referee, we can replace the pseudo likelihood function
by the k-means loss function to compare the estimated $K$ communities with
the estimated $K+1$ communities obtained from our spectral clustering with
binary segmentation method. To this end, we let $Q_{n}(\hat{Z}_{K+1}^{b},%
\hat{Z}_{K})$ be the difference of the k-means loss functions for the
estimated $K$ and $K+1$ communities obtained from the first $K+1$ normalized
eigenvectors of the regularized graph Laplacian. Then the estimated number
of communities minimizes $\frac{Q_{n}(\hat{Z}_{K+1}^{b},\hat{Z}_{K})/(K+1)}{%
	Q_{n}(\hat{Z} _{K}^{b},\hat{Z}_{K-1})/K}$, and we call this estimator
\textquotedblleft KML\textquotedblright. Note that $Q_{n}(\hat{Z}_{K+1}^{b},%
\hat{Z}_{K})$ involves the eigenvectors with dimension $n \times (K+1)$.
Thus we need to normalize it via dividing it by $K+1$. In addition, we apply
the gap statistic proposed in \cite{T01} for estimating the number of
communities by using the R package \textquotedblleft
cluster\textquotedblright . The gap statistic was proposed for clustering $p$%
-dimensional independent vectors into $K$ groups for $K=1,\cdots ,K_{\max }$%
, where $p$ is fixed and do not change with $K$. We let $p=K_{\max }$ in our
setting, so that we apply this method to the first $K_{\max }$ normalized
eigenvectors of the regularized graph Laplacian. Moreover, \cite{YSC18}
proposed a semi-definite programming method (SPUR) for determining the
number of communities in SBMs. We compare our proposed estimator PLR1 with
these three estimators, KML, GAP and SPUR. Since the proposed estimator PLR2
performs slightly better than PLR1, we only compare PLR1 with other three
estimators.

Table \ref{TAB:PLR1} reports the mean of the estimated number of communities
by the four methods, PLR1, KML, GAP and SPUR, and the proportion (prop) of
correctly estimating $K_{0}$ among $200$ simulated datasets when data are
generated from the SBMs and designs S1 and S2 given in Section \ref{sec:DGP}
with $n=500$. In Table \ref{TAB:PLR1DC}, we report those statistics for the
three methods, PLR1, KML, and GAP, when the data are generated from the
DCSBMs given in Section \ref{sec:DGP}, as the SPUR method was proposed only
for the SBMs. Tables \ref{TAB:PLR1} and \ref{TAB:PLR1DC} show that our
proposed PLR1 has the best performance for all cases. Specifically, the gap
statistic method applies the k-means to $p$-dimensional vectors, where $p$
is fixed and is not allowed to change with $K$. Hence, it is not directly
applicable to network data clustering. As a result, it performs worse than
other methods. The KML method performs better than the GAP and SPUR for most
cases of design S1, but it is inferior to the proposed PLR1 method,
especially for large $K$'s. This is due to the fact that for determining the
number of communities, the KML method only uses the information from the
eigenvectors, whereas the proposed PLR1 method uses the likelihood which
involves all information from the parameter estimates. Moreover, the
proposed PLR methods are built on the spectral clustering with binary
segmentation algorithm for estimation, and thus they are computationally
fast. They have the advantage over the semi-definite programming method,
SPUR, in terms of computational speed. Computational efficiency needs to be
taken into account for model selection in large network data.

\begin{table}[tbph]
	\caption{The mean of $\protect\widehat{K}$ by the four methods, PLR1, KML,
		GAP and SPUR, and the proportion (prop) of correctly estimating $K_{0}$
		among $200$ simulated datasets when data are generated from SBMs with $n=500$%
		.}
	\label{TAB:PLR1}
	\par
	\begin{center}
		\begin{adjustbox}{width=\textwidth}
			\begin{tabular}{|l|c|c|cccc|cccc|cccc|}
				\cline{1-15}
				&  &  & \multicolumn{4}{c}{$K_{0}=2$} & \multicolumn{4}{|c|}{$K_{0}=3$}& \multicolumn{4}{|c|}{$K_{0}=4$} \\
				\cline{1-15}
				&  $\rho $ &  & PLR1 & KML & GAP & SPUR & PLR1 & KML
				& GAP & SPUR& PLR1 & KML & GAP & SPUR \\ \cline{1-15}
				S1 & $3$ & mean & $2.095$ & $2.110$ & $7.715$ & $1.815$ & $3.115$ & $ 2.955$
				& $8.615$ & $2.540$& $3.465$ & $3.155$ & $9.290$ & $ 3.005$ \\
				&  & prop & $0.980$ & $0.975$ & $0.115$ & $0.815$ & $0.975$ & $0.895$ & $0.060$ & $0.540$ & $0.355$ & $0.140$ & $0.000$ & $0.115$\\
				& $4$ & mean & $2.045$ & $2.085$ & $6.265$ & $1.860$ & $3.060$ & $2.965$ & $%
				6.830$ & $2.655$ & $4.085$ & $3.655$ & $8.115$ & $3.515$\\
				&  & prop & $0.990$ & $0.980$ & $0.265$ & $0.860$ & $0.990$ & $0.975$ & $%
				0.350$ & $0.655$ & $0.925$ & $0.725$ & $0.115$ &  $0.545$\\
				& $5$ & mean & $2.020$ & $2.040$ & $5.080$ & $1.880$ & $3.015$ & $3.020$ & $%
				5.265$ & $2.755$ & $4.060$ & $3.840$ & $6.320$ & $3.735$\\
				&  & prop & $0.995$ & $0.990$ & $0.400$ & $0.880$ & $0.995$ & $0.990$ & $%
				0.610$ & $0.785$ & $0.980$ & $0.900$ & $0.535$ & $0.785$ \\
				S2 &  & mean & $2.000$ & $2.320$ & $9.470$ & $2.000$ & $3.110$ & $3.200$ & $%
				9.265$ & $2.935$ & $4.000$ & $4.000$ & $9.335$ & $3.905$ \\
				&  & prop & $1.000$ & $0.915$ & $0.000$ & $1.000$ & $0.980$ & $0.970$ & $%
				0.000$ & $0.945$ & $1.000$ & $1.000$ & $0.010$ & $0.925$ \\ \cline{1-15}
			\end{tabular}%
		\end{adjustbox}
	\end{center}
\end{table}

\begin{table}[tbph]
	\caption{The mean of $\protect\widehat{K}$ by the three methods, PLR1, KML
		and GAP, and the proportion (prop) of correctly estimating $K_{0}$ among $200
		$ simulated datasets when data are generated from DCSBMs with $n=500$.}
	\label{TAB:PLR1DC}
	\par
	\begin{center}
		\begin{adjustbox}{width=\textwidth}
			\begin{tabular}{|l|c|c|ccc|ccc|ccc|}
				\cline{1-12}
				&  &  & \multicolumn{3}{c}{$K_{0}=2$} & \multicolumn{3}{|c|}{$K_{0}=3$}& \multicolumn{3}{|c|}{$K_{0}=4$} \\
				\cline{1-12}
				&  $\rho $ &  & PLR1 & KML & GAP & PLR1 & KML
				& GAP &  PLR1 & KML & GAP  \\ \cline{1-12}
				S1 & $3$ & mean & $2.095$ & $2.110$ & $8.210$ & $3.070$ & $ 2.895$
				& $8.855$ & $3.675$ & $3.115$ & $9.300$ \\
				&  & prop & $0.980$ & $0.975$ & $0.055$ & $0.980$ & $ 0.875$
				& $0.045$ &$0.380$ & $0.135$ & $0.000$  \\
				& $4$ & mean & $2.090$ & $2.095$ & $6.730$ & $3.025$ & $2.955$ & $7.015 $
				& $4.175$ & $3.525$& $8.585$ \\
				&  & prop & $0.980$ & $0.980$ & $0.315$ & $0.990$ & $0.970$ & $0.175 $
				& $0.915$ & $0.725$& $0.095$  \\
				& $5$ & mean & $2.035$ & $2.040$ & $5.455$ & $3.030$ & $3.050$ & $6.410 $
				& $4.045$ & $3.840$& $6.990$ \\
				&  & prop & $0.990$ & $0.990$ & $0.490$ & $0.995$ & $0.985$ & $0.420 $
				& $0.985$ & $0.900$& $0.410$  \\
				S2 &  & mean & $2.000$ & $2.585$ & $9.375$ & $3.035$ & $3.055$ & $ 9.440$
				& $4.005$ & $4.010$& $9.455$  \\
				&  & prop & $1.000$ & $0.850$ & $0.000$ & $0.995$ & $0.990$ & $ 0.000$
				& $0.995$ & $0.990$& $0.010$  \\ \cline{1-12}
			\end{tabular}%
		\end{adjustbox}
	\end{center}
\end{table}

Lastly, for the DCSBMs, we generate the degree parameters $\theta _{i}$ from
the Pareto distribution with the scale parameter 1 and the shape parameter
5, and further normalize them to satisfy the condition (\ref%
{eq:thetanormalization}). Tables \ref{TAB:Khat1DCSBMPareto} and \ref%
{TAB:Khat2DCSBMPareto} report the mean of $\widehat{K}_{1}$ and $\widehat{K}%
_{2}$ with $c_{h}=1.0$, respectively, and the proportion (prop) of correctly
estimating $K_{0}$ among $200$ simulated datasets. We see that both PLR1 and
PLR2 perform well, and the results in Tables \ref{TAB:Khat1DCSBMPareto} and %
\ref{TAB:Khat2DCSBMPareto} are comparable to those for $\widehat{K}_{1}$ and
$\widehat{K}_{2}$ with $c_{h}=1.0$ shown in Tables \ref{TAB:Khat2DCSBM} and %
\ref{TAB:Khat1DCSBM} when $\theta _{i}$ are generated from the uniform
distribution.

\begin{table}[tbph]
	\caption{The mean of $\protect\widehat{K}_{1}$ and the proportion (prop) of
		correctly estimating $K_{0}$ among $200$ simulated datasets when data are
		simulated from DCSBMs with the degree parameters $\protect\theta _{i}$
		generated from the Pareto distribution.}
	\label{TAB:Khat1DCSBMPareto}
	\par
	\begin{center}
		\begin{adjustbox}{width=\textwidth}
			\begin{tabular}{|l|c|c|cccc|cccc|}
				\cline{1-11}
				&  &  & \multicolumn{4}{c}{$n=500$} & \multicolumn{4}{|c|}{$n=1000$} \\
				\cline{1-11}
				&  $\rho $ &  & $K_{0}=1$ & $K_{0}=2$ & $K_{0}=3$ & $K_{4}=4$ & $K_{0}=1$ & $K_{0}=2$
				& $K_{0}=3$ & $K_{4}=4$ \\ \cline{1-11}
				S1 & $3$ & mean & $1.085$ & $2.095$ & $3.135$ & $3.510$ & $1.000$ & $2.090$ & $3.035$ & $4.045$ \\
				&  & prop & $0.965$ & $0.985$ & $0.950$ & $0.360$ & $1.000$ & $0.985$ & $0.990$ & $0.990$ \\
				& $4$ & mean & $1.010$ & $2.080$ & $3.040$ & $4.140$ & $1.000$ & $2.050$ & $3.000$ & $4.040$ \\
				&  & prop & $0.995$ & $0.985$ & $0.990$ & $0.910$ & $1.000$ & $0.990$ & $1.000$ & $0.990$ \\
				& $5$ & mean & $1.000$ & $2.000$ & $3.000$ & $4.045$ & $1.000$ & $2.000$ & $3.000$ & $4.035$ \\
				&  & prop & $1.000$ & $1.000$ & $1.000$ & $0.985$ & $1.000$ & $1.000$ & $1.000$ & $0.990$ \\
				S2 &  & mean & $1.000$ & $2.000$ & $3.000$ & $4.000$ & $1.000$ & $2.000$ & $3.000$ & $4.000$ \\
				&  & prop & $1.000$ & $1.000$ & $1.000$ & $1.000$ & $1.000$ & $1.000$ & $1.000$ & $1.000$ \\ \cline{1-11}
			\end{tabular}%
		\end{adjustbox}
	\end{center}
\end{table}

\begin{table}[tbph]
	\caption{The mean of $\protect\widehat{K}_{2}$ and the proportion (prop) of
		correctly estimating $K_{0}$ among $200$ simulated datasets when data are
		simulated from DCSBMs with the degree parameters $\protect\theta _{i}$
		generated from the Pareto distribution.}
	\label{TAB:Khat2DCSBMPareto}
	\par
	\begin{center}
		\begin{adjustbox}{width=\textwidth}
			\begin{tabular}{|l|c|c|cccc|cccc|}
				\cline{1-11}
				&  &  & \multicolumn{4}{c}{$n=500$} & \multicolumn{4}{|c|}{$n=1000$} \\
				\cline{1-11}
				&  $\rho $ &  & $K_{0}=1$ & $K_{0}=2$ & $K_{0}=3$ & $K_{4}=4$ & $K_{0}=1$ & $K_{0}=2$
				& $K_{0}=3$ & $K_{4}=4$ \\ \cline{1-11}
				S1 & $3$ & mean & $1.085$ & $2.000$ & $3.080$ & $3.510$ & $1.000$ & $2.000$ & $3.015$ & $4.045$ \\
				&  & prop & $0.965$ & $1.000$ & $0.965$ & $0.360$ & $1.000$ & $1.000$ & $0.995$ & $0.990$ \\
				& $4$ & mean & $1.010$ & $2.000$ & $3.000$ & $4.140$ & $1.000$ & $2.000$ & $3.000$ & $4.040$ \\
				&  & prop & $0.995$ & $1.000$ & $1.000$ & $0.910$ & $1.000$ & $1.000$ & $1.000$ & $0.990$ \\
				& $5$ & mean & $1.000$ & $2.000$ & $3.000$ & $4.020$ & $1.000$ & $2.000$ & $3.000$ & $4.000$ \\
				&  & prop & $1.000$ & $1.000$ & $1.000$ & $0.990$ & $1.000$ & $1.000$ & $1.000$ & $1.000$ \\
				S2 &  & mean & $1.000$ & $2.000$ & $3.000$ & $4.000$ & $1.000$ & $2.000$ & $3.000$ & $4.000$ \\
				&  & prop & $1.000$ & $1.000$ & $1.000$ & $1.000$ & $1.000$ & $1.000$ & $1.000$ & $1.000$ \\ \cline{1-11}
			\end{tabular}%
		\end{adjustbox}
	\end{center}
\end{table}

\section{Additional real data applications}

\label{sec:add_application}

\subsection{Political books network}

We investigate the community structure of a network of US political books (available at www.orgnet.com) by different methods.
In this network, there
are 105 nodes representing books about US politics published around the 2004
presidential election and sold by the online bookseller Amazon.com, and
there are 441 edges representing frequent co-purchasing of books by the same
buyers. The left graph of Figure \ref{Fig:book} shows the degree
distribution for the political books network with the average degree being
8.4. We see that the degree has a right skewed distribution with most values
ranging from 2 to 9. Let $K_{\max }=10$. We identify $\widehat{K}_{1}=%
\widehat{K}_{2}=3$ communities by both PLR1 and PLR2. This result is
consistent with the ground-truth community structure that these books are
actually divided into three categories \textquotedblleft
liberal\textquotedblright , \textquotedblleft neutral\textquotedblright\ and
\textquotedblleft conservative\textquotedblright\ according to their
political views \citep{Newman06}. For further demonstration, we plot the
political books network with three communities in the right panel of Figure %
\ref{Fig:book}. Groups 1, 2 and 3 represent the estimated communities of
liberal, conservative and neutral books. We also obtain the estimated edge
probabilities within communities which are $\widehat{B}%
_{kk}=0.219,0.224,0.164$ for $k=1,2,3$, and the edge probabilities between
communities which are $\widehat{B}_{12}=0.001$, $\widehat{B}_{13}=0.019$ and
$\widehat{B}_{23}=0.224$. We see that groups 1 and 2 from two different
political affiliations are very weakly connected. We apply the LRBIC, NCV,
ECV and BHMC methods, and obtain the estimated number of communities as 3,
6, 8 and 4, respectively, by these four methods.
\begin{figure}[tbp]
	\caption{Left panel shows the degree distribution; right panel depicts the
		political books network with three communities. }
	\label{Fig:book}
	\centering
	\vspace{0.5cm} $
	\begin{array}{cc}
	\textbf{Degree distribution of political books network} \vspace{-0.4cm} &
	\textbf{Political books network} \vspace{-0.4cm} \\
	\includegraphics[width = 0.45\linewidth]{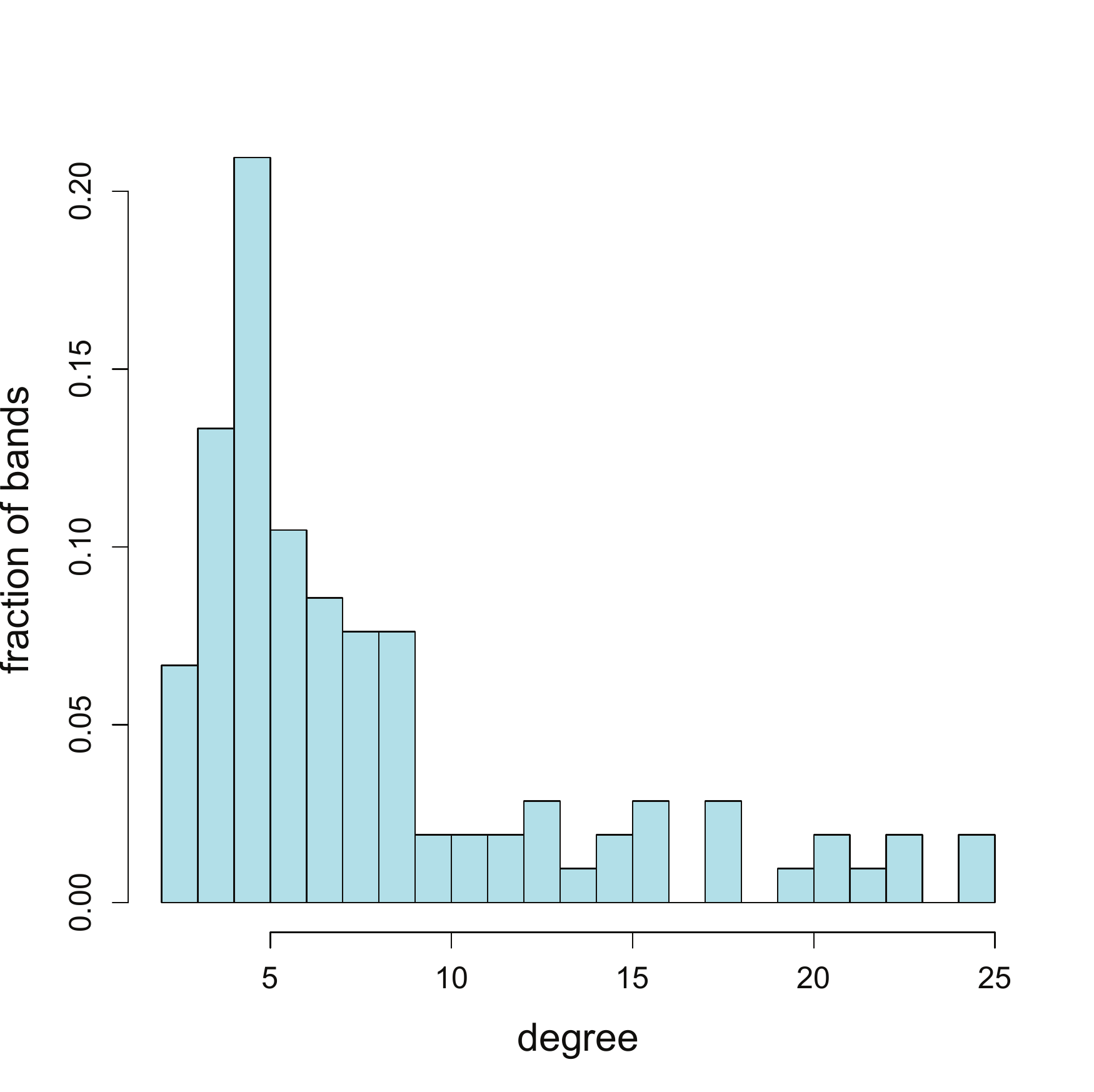} & %
	\includegraphics[width = 0.45\linewidth]{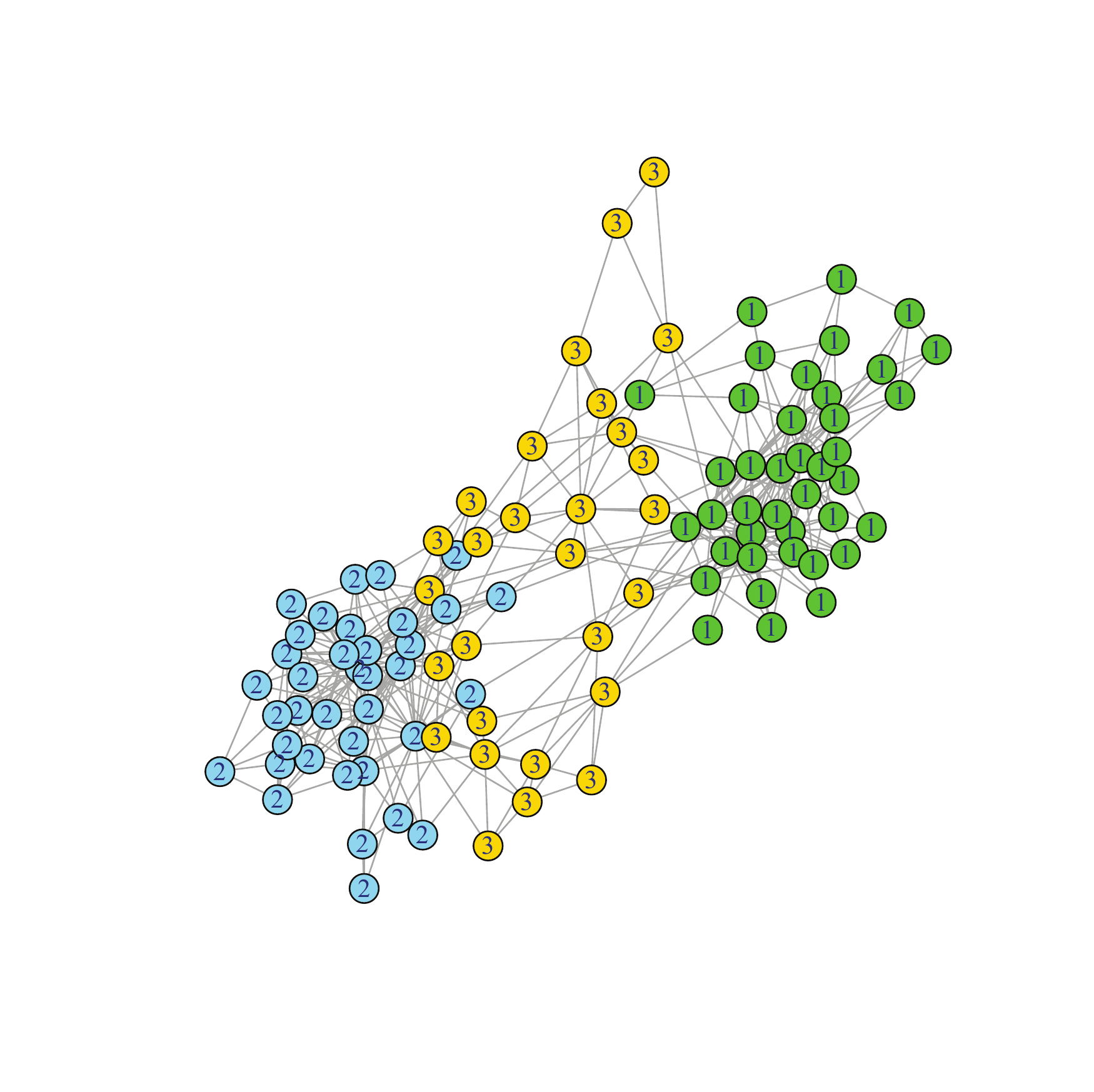}%
	\end{array}
	$%
\end{figure}

\subsection{Facebook friendship network}

We apply our methods to a large social network which contains friendship
data of Facebook users (available at www.snap.stanford.edu). A node
represents a user and an edge represents a friendship between two users. The
data have 4039 nodes and 88218 edges. We use the nodes with the degree
between 10 and 300. As a result, there are 2901 nodes and 80259 edges in our
analysis. The left graph of Figure \ref{Fig:facebook} shows the degree
distribution for the Facebook friendship network with the average degree
being 55.33. The degree distribution is again right skewed. Let $K_{\max
}=20 $. By using the proposed PLR1 and PLR2 methods, we identify $\widehat{K}%
_{1}=\widehat{K}_{2}=11$ communities. The right panel of Figure \ref%
{Fig:facebook} shows the estimated community structure of the Facebook
friendship network with eleven identified communities. We can observe
sub-communities of friends who are tightly connected through mutual
friendships. Lastly, the LRBIC, NCV, ECV and BHMC methods found $19$, $19$, $%
20$ and $14$ communities, respectively.

\begin{figure}[tbp]
	\caption{Left panel shows the degree distribution; right panel depicts the
		facebook friendship network with eleven communities. }
	\label{Fig:facebook}
	\centering
	\vspace{0.5cm} $
	\begin{array}{cc}
	\textbf{Degree distribution of facebook network} \vspace{-0.4cm} & \textbf{\
		Facebook network} \vspace{-0.4cm} \\
	\includegraphics[width = 0.45\linewidth]{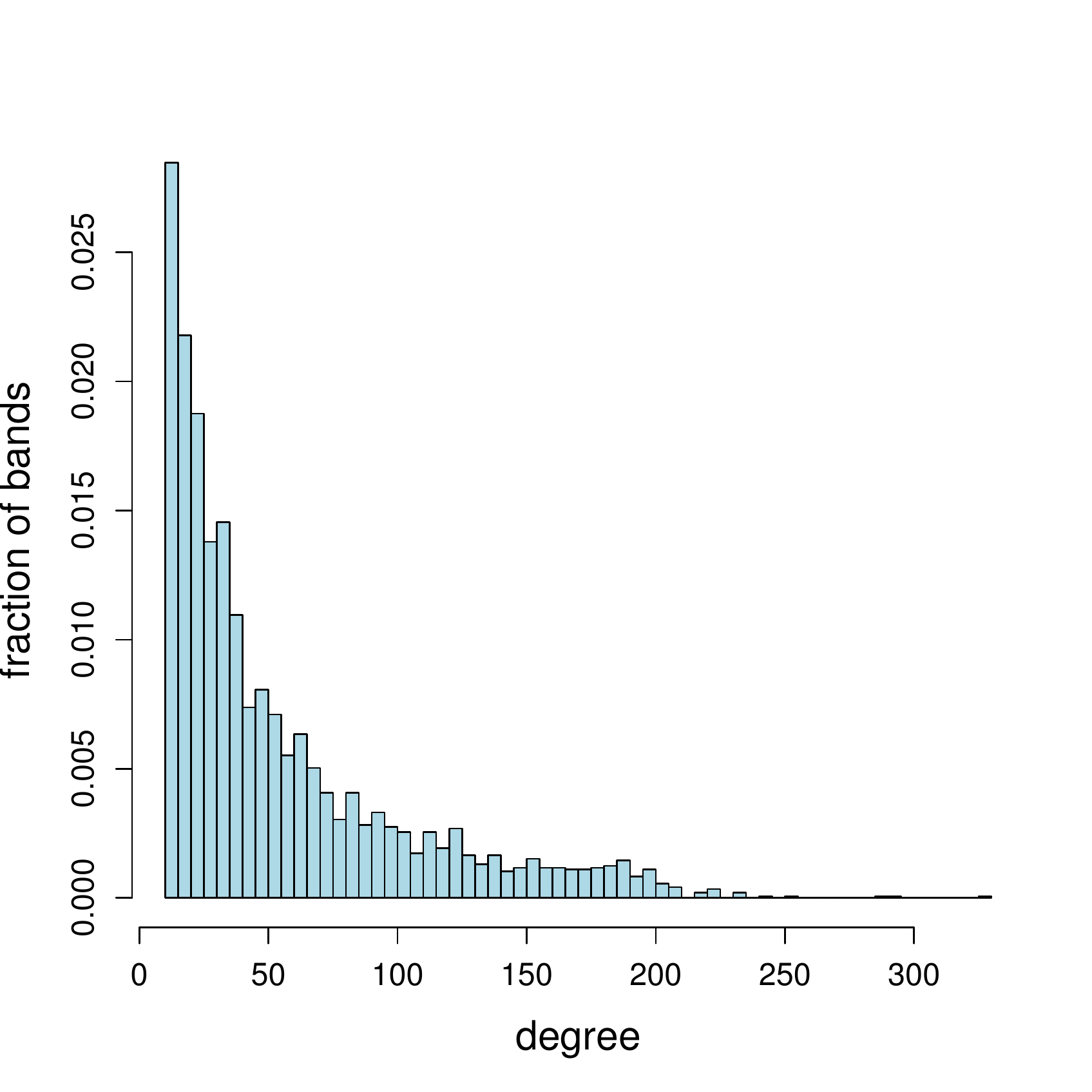} & %
	\includegraphics[width = 0.45\linewidth]{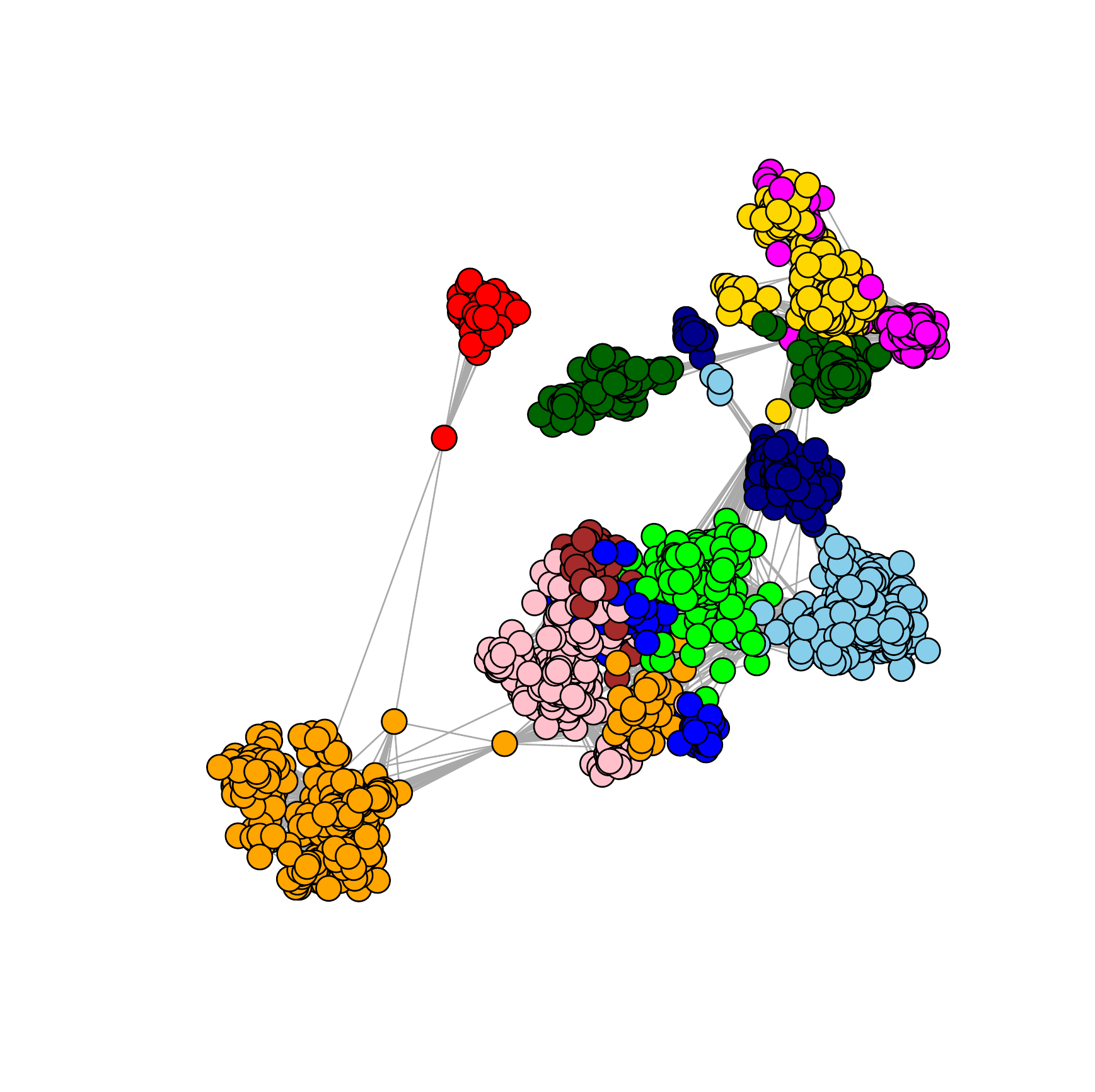}%
	\end{array}
	$%
\end{figure}

\section{Proofs of results in Section \protect\ref{sec:theory}}

\subsection{Proof of Theorem \protect\ref{thm:id3}}

The first result is proved in \citet[Theorem 3.3]{SWZ17}. For part (2), by
Lemma \ref{lem:id3}(1), if $i\in \mathcal{C}_{k,K_{0}}$, then
\begin{equation*}
u_{i}^{T}(K)=(\theta _{i}^{\tau })^{1/2}(n_{k,K_{0}}^{\tau
})^{-1/2}S_{n}^{\tau }(K).
\end{equation*}%
Because $S_{n}^{\tau }(K)$ is a $K_{0}\times K$ matrix, it is easy to see
that $L_{K}\leq K_{0}$. By the proof of \citet[Theorem 3.3]{SWZ17}, $%
S_{n}^{\tau }$ is the $K_{0}\times K_{0}$ eigenvector matrix of $(\Pi
_{n}^{\tau })^{1/2}H_{0,K_{0}}(\Pi _{n}^{\tau })^{1/2}$ with the
corresponding eigenvalues ordered from the biggest to the smallest in
absolute values. By Assumptions \ref{ass:id3} and \ref{ass:nk2}, we have
\begin{equation*}
(\Pi _{n}^{\tau })^{1/2}H_{0,K_{0}}(\Pi _{n}^{\tau })^{1/2}\rightarrow \Pi
_{\infty }^{\prime 1/2}H_{0,K_{0}}^{\ast }\Pi _{\infty }^{\prime
	1/2}:=S_{\infty }\Sigma _{\infty }S_{\infty }.
\end{equation*}%
By Davis-Kahan Theorem in \cite{YWS15} and Assumption \ref{ass:nk2}(2),
there exists a $K\times K$ orthogonal matrix $O_{s}$ such that $S_{n}^{\tau
}(K)O_{s}\rightarrow S_{\infty }[K]$ where $S_{\infty }$ is the eigenvector
matrix of $\Pi _{\infty }^{\prime 1/2}H_{0,K_{0}}^{\ast }\Pi _{\infty
}^{\prime 1/2}$ and is of full rank. Therefore, if $i\in \mathcal{C}%
_{k,K_{0}}$ and $j\in \mathcal{C}_{l,K_{0}}$,
\begin{align}
\left\Vert \frac{u_{i}^{T}(K)}{||u_{i}^{T}(K)||}-\frac{u_{j}^{T}(K)}{%
	||u_{j}^{T}(K)||}\right\Vert =& \left\Vert \left( \frac{[S_{n}^{\tau
	}]_{k}(K)}{||[S_{n}^{\tau }]_{k}(K)||}-\frac{[S_{n}^{\tau }]_{l}(K)}{%
	||[S_{n}^{\tau }]_{l}(K)||}\right) O_{s}\right\Vert   \notag
\label{eq:sinfty} \\
\rightarrow & \left\Vert \frac{[S_{\infty }]_{k}(K)}{||[S_{\infty }]_{k}(K)||%
}-\frac{[S_{\infty }]_{l}(K)}{||[S_{\infty }]_{l}(K)||}\right\Vert .
\end{align}%
Because $S_{\infty }$ is of full rank, the first $K$ columns of $S_{\infty }$
should have rank $K$. This implies the $K$-dimensional row vectors $\{\frac{%
	[S_{\infty }]_{k}(K)}{||[S_{\infty }]_{k}(K)||}\}_{k=1}^{K_{0}}$ take at
least $K$ distinct values, which are denoted as $\bar{\nu}_{1,K},\cdots ,%
\bar{\nu}_{L_{K},K}$. Therefore, $L_{K}\geq K$ . Next, we call nodes $i$ and
$j$ are equivalent if both $\frac{u_{i}^{T}(K)}{||u_{i}^{T}(K)||}$ and $%
\frac{u_{j}^{T}(K)}{||u_{j}^{T}(K)||}$ converges to one of $(\bar{\nu}_{l,K})
$, $l=1,\cdots ,L_{K}$. Then $G_{l,K}$ can be constructed as the equivalence
class of the above equivalence relation. Let
\begin{equation*}
I=\biggl\{(k,l):\left\Vert \frac{[S_{\infty }]_{k}(K)}{||[S_{\infty
	}]_{k}(K)||}-\frac{[S_{\infty }]_{l}(K)}{||[S_{\infty }]_{l}(K)||}%
\right\Vert \neq 0,k=1,\cdots ,K_{0},l=1,\cdots ,K_{0}\biggr\}.
\end{equation*}%
In view of the fact that the cardinality of $I$ is finite, we have
\begin{equation*}
c^{\ast }=\min_{(k,l)\in I}\left\Vert \frac{[S_{\infty }]_{k}(K)}{%
	||[S_{\infty }]_{k}(K)||}-\frac{[S_{\infty }]_{l}(K)}{||[S_{\infty
	}]_{l}(K)||}=\min_{\ell \neq \ell ^{\prime }}||\bar{\nu}_{\ell ,K}-\bar{\nu}%
_{\ell ^{\prime },K}||\right\Vert >0.
\end{equation*}

Then, by \eqref{eq:sinfty}, if nodes $i \notin G_{l,K}$,
\begin{equation*}
\liminf_{n}\left\Vert\frac{u_{i}^{T}(K)}{||u_{i}^{T}(K)||}-\bar{\nu}%
_{l,K}\right\Vert\geq c^{\ast }>0.
\end{equation*}
This implies that $\{G_{l,K}\}_{l=1}^{L_{K}}$ constructed as the equivalence
class satisfy the two requirements in Theorem \ref{thm:id3}(2) with $
c=c^{\ast }$.

\subsection{Proof of Theorem \protect\ref{thm:oracleDC}}

\textbf{First, we prove Theorem \ref{thm:oracleDC}(1). }Let $\hat{g}_{iK}$
be the membership estimated by the k-means algorithm with $K$ centroids,
i.e.,
\begin{equation*}
\hat{g}_{iK}=\argmin_{1\leq k\leq K}||\hat{\nu}_{iK}-\hat{\alpha}_{k}||\quad
\text{and}\quad \{\hat{\alpha}_{k}\}_{k=1}^{K}=\argmin_{\alpha _{1},\cdots
	,\alpha _{K}}\frac{1}{n}\sum_{i=1}^{n}\min_{1\leq k\leq K}||\hat{\nu}%
_{iK}-\alpha _{k}||^{2}.
\end{equation*}%
Because the $L_{2}$-norm is invariant under rotation,
\begin{equation}
\hat{g}_{iK}=\argmin_{1\leq k\leq K}||\hat{\nu}_{iK}\hat{O}_{Kn}O_{s}-\hat{%
	\alpha}_{k}||\quad \text{and}\quad \{\hat{\alpha}_{k}\}_{k=1}^{K}=\argmin%
_{\alpha _{1},\cdots ,\alpha _{K}}\frac{1}{n}\sum_{i=1}^{n}\min_{1\leq k\leq
	K}||\hat{\nu}_{iK}\hat{O}_{Kn}O_{s}-\alpha _{k}||^{2}.  \label{eq:Zhat}
\end{equation}%
where $\hat{O}_{Kn}$ is a $K\times K$ orthonormal matrix such that $\hat{O}%
_{Kn}=\bar{U}\bar{V}^{T},$ $\bar{U}\bar{\Sigma}\bar{V}^{T}$ is the singular
value decomposition of $\widehat{U}_{n}(K)^{T}U_{n}(K),$ $U_{n}$ is the
population analogue of $\widehat{U}_{n}:$ $\mathcal{L}_{\tau }=U_{n}\Sigma
_{n}U_{n}^{T}$, and $O_{s}$ is another $K\times K$ orthonormal matrix
defined in the proof of Theorem \ref{thm:id3}(2). Here, $\Sigma _{n}=\text{%
	diag}(\sigma _{1n},\ldots ,\sigma _{K_{0}n},0,...,0)$ is a $n\times n$
matrix and we suppress the dependence of $\bar{U},\bar{\Sigma},$ and $\bar{V}
$ on $K.$ We aim to show
\begin{equation}
\sup_{i}1\{\hat{g}_{iK}\neq g_{iK}\}=0\quad a.s.  \label{eq:g}
\end{equation}%
Suppose that
\begin{equation}
\sup_{1\leq i\leq n}||\hat{\nu}_{iK}^{T}\hat{O}_{Kn}O_{s}-\nu
_{iK}^{T}||\leq c_{1}\quad a.s.,  \label{eq:dc}
\end{equation}%
for some sufficiently small $c_{1}>0$, which we will prove later. In
addition, by \eqref{eq:ZK},
\begin{equation*}
\{\alpha _{k}^{\ast }\}_{k=1}^{K}=\argmin_{\alpha _{1},\cdots ,\alpha
	_{K}}\sum_{l=1}^{K_{0}}\pi _{ln}\min_{1\leq k\leq K}||\bar{\nu}_{lK}-\alpha
_{k}||^{2}.
\end{equation*}%
Then for any $k=1,\cdots ,K$, we have
\begin{equation*}
\alpha _{k}^{\ast }=\sum_{l\leq K_{0}:\mathcal{C}_{l,K_{0}}\subset \mathcal{C%
	}_{k,K}}\psi _{n,k,l}\bar{\nu}_{lK},
\end{equation*}%
or in matrix form,
\begin{equation*}
(\alpha _{1}^{\ast },\cdots ,\alpha _{K}^{\ast })=(\bar{\nu}_{1K},\cdots ,%
\bar{\nu}_{L_{K},K})\Psi _{n}^{\prime },
\end{equation*}%
where $\psi _{n,k,l}=\pi _{ln}/(\sum_{l\leq K_{0}:\mathcal{C}%
	_{l,K_{0}}\subset \mathcal{C}_{k,K}}\pi _{ln})$ for $k=1,\cdots ,K$ and $%
l=1,\cdots ,L_{K},$ and $\Psi _{n}=[\psi _{n,k,l}]$. Note that $L_{K}\geq K$%
. By Assumption \ref{ass:nk2}, $\Psi _{n}\rightarrow \Psi _{\infty }$, where
$[\Psi _{\infty }]_{k,l}=\pi _{l\infty }/\sum_{l\leq K_{0}:\mathcal{C}%
	_{l,K_{0}}\subset \mathcal{C}_{k,K}}\pi _{l\infty }>0$. Because $Z_{K}$ is
unique by Assumption \ref{ass:uniqueDC}(1) and $\pi _{l\infty }$ is positive
for $l=1,\cdots ,K_{0}$, we have that each column of $\Psi _{\infty }$ has
one and only one nonzero entry. In addition, there exist at least $L_{K}\geq
K$ distinct vectors in $\{\bar{\nu}_{lK}\}_{l=1}^{K_{0}}$. Therefore, by
relabeling both $\{\alpha _{k}^{\ast }\}_{k=1}^{K}$ and $\{\bar{\nu}%
_{lK}\}_{l=1}^{K_{0}}$, we can make
\begin{equation*}
\Psi _{\infty }^{\prime }=(\Psi _{1,\infty },\Psi _{2,\infty }),
\end{equation*}%
where $\Psi _{1,\infty }$ is a $K\times K$ diagonal matrix with strictly
positive diagonal elements. Therefore, $\Psi _{\infty }$ has rank $K$. By
Theorem \ref{thm:id3}(3), $(\bar{\nu}_{1K},\cdots ,\bar{\nu}_{L_{K},K})$
also has rank $K$. This implies, the limit of the $K\times K$ matrix $%
(\alpha _{1}^{\ast },\cdots ,\alpha _{K}^{\ast })$ is of full rank.
Therefore, there exists a constant $\underline{c}>0$ such that
\begin{equation}
\liminf_{n}\min_{k\neq k^{\prime }}|\alpha _{k}^{\ast }-\alpha _{k^{\prime
}}^{\ast }|>\underline{c}.  \label{eq:c1n}
\end{equation}%
Then \eqref{eq:g} follows \eqref{eq:dc} and Lemma \ref{lem:sbsa}(3) with $%
\hat{\beta}_{in}=\hat{\nu}_{iK}\hat{O}_{Kn}O_{s}$ and $\beta _{in}=\nu _{iK}$%
.

Now we turn to prove \eqref{eq:dc}. Since $(\Pi _{n}^{\tau
})^{1/2}H_{0,K_{0}}(\Pi _{n}^{\tau })^{1/2}\rightarrow (\Pi _{\infty
}^{\prime })^{1/2}H_{0,K_{0}}^{\ast }(\Pi _{\infty }^{\prime })^{1/2}$ and
Assumption \ref{ass:nk2}(2), we have $\inf_{n}|\sigma _{K+1n}-\sigma
_{Kn}|\geq C>0$ for any $K\leq K_{0}-1$. Second, Assumption \ref{ass:rate}
implies \citet[Assumption
11]{SWZ17}. Last, let $d_{i}^{\tau }=d_{i}+\tau .$ Since $\tau \leq Mn\rho
_{n}$ for some $M>0$ and $d_{i}\asymp n\rho _{n}$, we have,
\begin{equation*}
d_{i}^{\tau }/d_{i}\asymp 1.
\end{equation*}%
Therefore, there exist constants $C>c>0$ such that
\begin{equation*}
C\geq \sup_{k,n}n_{k}^{\tau }d_{i}^{\tau }/(nd_{i})\geq
\inf_{k,n}n_{k}^{\tau }d_{i}^{\tau }/(nd_{i})\geq c.
\end{equation*}%
This verifies \citet[Assumption 10]{SWZ17}. Hence, by
\citet[Theorem
3.4]{SWZ17},
\begin{equation}
\sup_{i}(n_{g_{iK_{0}}}^{\tau })^{1/2}\theta _{i}^{-1/2}||\hat{u}_{i}(K)^{T}%
\hat{O}_{Kn}-u_{i}^{T}(K)||\leq C^{\ast }\log ^{1/2}(n)(n\rho _{n}+\tau
)^{-1/2}\leq C^{\ast }C_{1}^{-1/2}\quad a.s.,
\end{equation}%
where $C^{\ast }$ is a constant independent of $n$ and $g_{iK_{0}}$ denotes
the membership index of node $i,,$ viz, $g_{iK_{0}}=k$ if $\left[ Z_{K_{0}}%
\right] _{ik}=1.$

In addition, Lemma \ref{lem:id3}(2) shows that, if $i\in \mathcal{C}%
_{k,K_{0}}$ for any $k=1,\cdots ,K_{0}$, then
\begin{equation*}
\liminf_{n}(n_{k}^{\tau })^{1/2}\theta
_{i}^{-1/2}||u_{i}(K)||=\liminf_{n}||[S_{n}]_{k}(K)||\geq c.
\end{equation*}%
Therefore,
\begin{align}
& \sup_{i}||\hat{\nu}_{iK}^{T}\hat{O}_{Kn}O_{s}-\nu _{iK}^{T}||  \notag
\label{eq:G1} \\
\leq & \sup_{i}\left\Vert \hat{\nu}_{iK}^{T}\hat{O}_{Kn}-\frac{u_{i}^{T}(K)}{%
	||u_{i}(K)||}\right\Vert +\sup_{i}\left\Vert \frac{u_{i}^{T}(K)O_{s}}{%
	||u_{i}(K)||}-\nu _{iK}^{T}\right\Vert   \notag \\
\leq & \sup_{1\leq i\leq n}\frac{||\hat{O}_{Kn}^{T}\hat{u}_{i}(K)-u_{i}(K)||%
}{||\hat{u}_{i}(K)||}+o(1)  \notag \\
\leq & \frac{C^{\ast }C_{1}^{-1/2}}{c-C^{\ast }C_{1}^{-1/2}}+o(1)\leq
c_{1},\quad a.s.,
\end{align}%
where the second inequality holds because of the definition of $\nu _{iK}$
and Theorem \ref{thm:id3}. By Assumption \ref{ass:rate}, $C_{1}$ is
sufficiently large, which implies that $c_{1}^{\prime }$ can be sufficiently
small. This concludes the proof of \eqref{eq:g}.

We also  note that, by definition, for any $K=1,\cdots ,K_{0}$ and $%
k=1,\cdots ,K_{0}$ , there exists $l=1,\cdots ,L_{K}$ such that $\mathcal{C}%
_{k,K_{0}}\subset G_{l,K}$.  In addition, by \eqref{eq:ZK}, Assumption \ref%
{ass:uniqueDC}(1), and Lemma \ref{lem:sbsa}(1), for any $ l=1,\cdots ,L_{K}$%
, there exists $k^{\prime }=1,\cdots ,K$ such that $ G_{l,K}\subset \mathcal{%
	C}_{k^{\prime },K}$. Therefore,
\begin{equation*}
\mathcal{C}_{k,K_{0}}\subset G_{l,K}\subset \mathcal{C}_{k^{\prime },K}\quad
\text{and}\quad Z_{K_{0}}\succeq Z_{K}.
\end{equation*}

\textbf{Second, we prove Theorem \ref{thm:oracleDC}(2). }We know from
Theorem \ref{thm:oracleDC}(1) that $\hat{Z}_{K-1}=Z_{K-1}$ $a.s.$, i.e., $%
\widehat{\mathcal{C}}_{k,K-1}=\mathcal{C}_{k,K-1}$ for $k=1,\cdots ,K-1$. We
aim to show that $\hat{Z}_{K}^{b}=Z_{K}^{b}$ \textit{$a.s.$} for $K=2,\cdots
,K_{0}$. Recall $\widetilde{\mathcal{C}}_{k,K-1}^{l}=\mathcal{C}_{k,K-1}\cap
G_{l,K}$. We divide $[K-1]$ into two subsets $\mathcal{K}_{1}$ and $\mathcal{%
	K}_{2}$ such that $k\in \mathcal{K}_{1}$ if there exists at least two
indexes $l_{1}$ and $l_{2}$ such that both $\widetilde{\mathcal{C}}%
_{k,K-1}^{l_{1}}$ and $\widetilde{\mathcal{C}}_{k,K-1}^{l_{2}}$ are nonempty
sets and $\mathcal{K}_{2}=[K-1]\backslash \mathcal{K}_{1}$. Note that $%
L_{K}\geq K>K-1$. Therefore, by the pigeonhole principle, $\mathcal{K}_{1}$
is nonempty. We divide the proof into three steps. For a generic $k\in
\mathcal{K}_{1}$, denote $\widehat{\mathcal{C}}_{k,K-1}(1)$ and $\widehat{%
	\mathcal{C}}_{k,K-1}(2)$ as two subsets of $\mathcal{C}_{k,K-1}$ which are
obtained by applying k-means algorithm on $\{\hat{\nu}_{in}(K)\}_{i\in
	\mathcal{C}_{k,K-1}}$ with two centroids. Similarly, let $\mathcal{C}%
_{k,K-1}(1)$ and $\mathcal{C}_{k,K-1}(2)$ as two subsets of $\mathcal{C}%
_{k,K-1}$ which are obtained by applying k-means algorithm on $\{\nu
_{iK}\}_{i\in \mathcal{C}_{k,K-1}}$ with two centroids. In the first step,
we aim to show $\hat{k}=k^{\ast }\in \mathcal{K}_{1}$ \textit{$a.s.$}, where
$\hat{k}$ is defined in Algorithm \ref{algo:2} in Section \ref{sec:algo}. In
the second step, we aim to show that $\widehat{\mathcal{C}}_{k^{\ast
	},K-1}(1)=\mathcal{C}_{k^{\ast },K-1}(1)$ and $\widehat{\mathcal{C}}%
_{k^{\ast },K-1}(2)=\mathcal{C}_{k^{\ast },K-1}(2)$ \textit{a.s.} These two
results imply that
\begin{equation*}
\mathcal{C}_{k^{\ast },K-1}(1)=\widehat{\mathcal{C}}_{\hat{k},K-1}(1)\quad
\text{and}\quad \mathcal{C}_{k^{\ast },K-1}(2)=\widehat{\mathcal{C}}_{\hat{k}%
	,K-1}(2),
\end{equation*}%
which completes the proof of $\hat{Z}_{K}^{b}=Z_{K}^{b}$ for $k=1,\cdots
,K_{0}$. Last, in the third step, we show that $Z_{K_{0}}\succeq Z_{K+1}^{b}.
$

\textbf{Step 1. We show that }$\hat{k}=k^{\ast }\in \mathcal{K}_{1}$ \textit{%
	a.s.} For a generic $k\in \mathcal{K}_{1}$, because the $L_{2}$-norm is
invariant under rotation, we can regard the procedure as applying k-means
algorithm to $\hat{\beta}_{in}=O_{s}^{T}\hat{O}_{Kn}^{T}\hat{\nu}_{iK}$ for $%
i\in \mathcal{C}_{k,K-1}$. Further denote $\beta _{in}=\nu _{iK}.$ Then, $%
\beta _{in}=\beta _{jn}$ if $i,j\in \widetilde{\mathcal{C}}_{k,K-1}^{l}$ for
some $l$, and
\begin{align*}
& \sup_{i\in \mathcal{C}_{k,K-1}}||\hat{\beta}_{in}-\beta _{in}|| \\
\leq & \sup_{i\in \mathcal{C}_{k,K-1}}\left\Vert \hat{\nu}_{iK}^{T}\hat{O}%
_{Kn}O_{s}-\frac{u_{i}^{T}(K)}{||u_{i}(K)||}\right\Vert +\sup_{i\in \mathcal{%
		C}_{k,K-1}}\left\Vert \frac{u_{i}^{T}(K)O_{s}}{||u_{i}(K)||}-\nu
_{iK}^{T}\right\Vert  \\
\leq & \frac{C^{\ast }C_{1}^{-1/2}}{c-C^{\ast }C_{1}^{-1/2}}+o(1)\leq
c_{1}\quad a.s.,
\end{align*}%
where the first inequality holds by the triangle inequality, the second
inequality holds because of Theorem \ref{thm:id3}(2) and the fact that the
constant $c_{1}$ is sufficiently small. In addition, by the definition of $%
\{G_{l,K}\}_{l=1}^{L_{K}}$ in Theorem \ref{thm:id3}(2), there exists some
positive constant $c$ such that, for $l\neq l^{\prime }$, $\widetilde{%
	\mathcal{C}}_{k,K}^{l}\neq \emptyset $, and $\widetilde{\mathcal{C}}%
_{k,K}^{l^{\prime }}\neq \emptyset $,
\begin{equation*}
\inf_{i\in \widetilde{\mathcal{C}}_{k,K}^{l},j\in \widetilde{\mathcal{C}}%
	_{k,K}^{l^{\prime }}}||\beta _{in}-\beta _{jn}||\geq c>0.
\end{equation*}%
Recall the definitions of $Q_{K}(\cdot )$ and $\hat{Q}_{K}(\cdot )$ in %
\eqref{eq:Q} and \eqref{eq:QK}, respectively. Then, by Lemma \ref{lem:sbsa}%
(2), we have, for any $k\in \mathcal{K}_{1}$, $|Q_{K-1}(k)-\hat{Q}%
_{K-1}(k)|\leq C^{\prime }c_{1}$ $a.s.$ for some constant $C^{\prime }>0$.
For $k\in \mathcal{K}_{2}$, $Q_{K-1}(k)=o(1)$ and $|\hat{Q}_{K-1}(k)|\leq
C^{^{\prime \prime }}c_{1}$. Therefore, $|Q_{K-1}(k)-\hat{Q}_{K-1}(k)|\leq
Cc_{1}$ $a.s.$ for $k=1,\cdots ,K-1.$ Recall that
\begin{equation*}
k^{\ast }=\argmax_{1\leq k\leq K-1}Q_{K-1}(k)
\end{equation*}%
We claim $\hat{k}=k^{\ast }$ $a.s.$ Suppose not. Then by Assumption \ref%
{ass:uniqueDC}(2),
\begin{equation*}
0\leq \hat{Q}_{K-1}(\hat{k})-\hat{Q}_{K-1}(k^{\ast })=Q_{K-1}(\hat{k}%
)-Q_{K-1}(k^{\ast })+2C^{\prime }c_{1}\leq 2Cc_{1}-c.
\end{equation*}%
As $c_{1}$ is sufficiently small, we reach a contradiction.

\textbf{Step 2.\ We show that } $\widehat{\mathcal{C}}_{k^{\ast },K-1}(1)=%
\mathcal{C}_{k^{\ast },K-1}(1)$ and $\widehat{\mathcal{C}}_{k^{\ast
	},K-1}(2)=\mathcal{C}_{k^{\ast },K-1}(2)$ $a.s.$ Because $Z_{K-1}$ and $%
Z_{K}^{b}$ are unique, Lemma \ref{lem:sbsa}(3) implies, up to some
relabeling,
\begin{equation}
\mathcal{C}_{k^{\ast },K-1}(1)=\widehat{\mathcal{C}}_{k^{\ast },K-1}(1)\quad
\text{and}\quad \mathcal{C}_{k^{\ast },K-1}(2)=\widehat{\mathcal{C}}%
_{k^{\ast },K-1}(2).  \label{eq:hetero}
\end{equation}%
Therefore, $\hat{Z}_{K}^{b}=Z_{K}^{b}$ for $k=1,\cdots ,K_{0}$.

\textbf{Step 3.\ We show that }$Z_{K_0} \succeq Z_{K+1}^b.$ For any $
k=1,\cdots ,K_{0}$ and any $K=2,\cdots ,K_{0}$, Theorem \ref{thm:oracleDC}
(1) shows that there exists $k^{\prime } \in \{1,\cdots ,K-1\}$ such that $
\mathcal{C}_{k,K_{0}}\subset \mathcal{C}_{k^{\prime },K-1}$. If $k^{\prime }
\neq k^{\ast}$,  then $\mathcal{C}_{k,K_{0}}\subset \mathcal{C}_{k^{\prime
	},K-1}=\mathcal{C}_{k^{\prime \prime },K}^{b}$  for some $k^{\prime \prime
}=1,\cdots ,K$. If $k^{\prime }= k^{\ast}$, we  know that $\mathcal{C}%
_{k,K_{0}}\subset G_{l,K}$ for some $l=1,\cdots ,L_{K}$.  Therefore,
\begin{equation*}
\mathcal{C}_{k,K_{0}}\subset \mathcal{C}_{k^{\ast },K-1}\cap G_{l,K}=%
\widetilde{\mathcal{C}}_{k^{\ast },K-1}^{l}.
\end{equation*}
Last, by Lemma \ref{lem:sbsa}, we know that
\begin{equation*}
\widetilde{\mathcal{C}}_{k^{\ast },K-1}^{l}\subset \quad \text{either}\quad
\mathcal{C}_{k^{\ast },K-1}(1)\quad \text{or}\quad \mathcal{C}_{k^{\ast
	},K-1}(2).
\end{equation*}
Therefore, there exists $k^{\prime \prime }=1,\cdots ,K$ such that
\begin{equation*}
\mathcal{C}_{k,K_{0}}\subset \widetilde{\mathcal{C}}_{k^{\ast
	},K-1}^{l}\subset \mathcal{C}_{k^{\prime \prime },K}^{b}.
\end{equation*}
This completes the proof of Theorem \ref{thm:oracleDC}(2).

\textbf{For Theorem \ref{thm:oracleDC}(3), }the result holds by the
construction of $\hat{Z}_{K+1}^{b}$ for $K=1,\cdots ,K_{0}$ and the fact
that $\hat{Z}_{K}=Z_{K}$ for $K=1,\cdots ,K_{0}$.

\bigskip

\subsection{Proof of Theorem \protect\ref{thm:underDC}}

We first state $\mathbb{W}_{K}$: if $K=2$,
\begin{equation*}
\mathbb{W}_{K}=%
\begin{Bmatrix}
& W\in \Re ^{K\times K}:W\text{ is symmetric, } \\
& W_{K-1K-1}(W_{K-1\cdot }+W_{K\cdot })^{2}=W_{K-1\cdot
}^{2}(W_{K-1K-1}+2W_{K-1K}+W_{KK}), \\
& W_{K-1K}(W_{K-1\cdot }+W_{K\cdot })^{2}=W_{K-1\cdot }W_{K\cdot
}(W_{K-1K-1}+2W_{K-1K}+W_{KK}), \\
& W_{KK}(W_{K-1\cdot }+W_{K\cdot })^{2}=W_{K\cdot
}^{2}(W_{K-1K-1}+2W_{K-1K}+W_{KK}), \\
&
\end{Bmatrix}%
\end{equation*}%
and if $K\geq 3$
\begin{equation*}
\mathbb{W}_{K}=%
\begin{Bmatrix}
& W\in \Re ^{K\times K}:W\text{ is symmetric, } \\
& W_{kl}(W_{K-1\cdot }+W_{K\cdot })=W_{l\cdot }(W_{kK-1}+W_{kK}),~k=1,\cdots
,K-2,~l=K-1,K, \\
& W_{K-1K-1}(W_{K-1\cdot }+W_{K\cdot })^{2}=W_{K-1\cdot
}^{2}(W_{K-1K-1}+2W_{K-1K}+W_{KK}), \\
& W_{K-1K}(W_{K-1\cdot }+W_{K\cdot })^{2}=W_{K-1\cdot }W_{K\cdot
}(W_{K-1K-1}+2W_{K-1K}+W_{KK}), \\
& W_{KK}(W_{K-1\cdot }+W_{K\cdot })^{2}=W_{K\cdot
}^{2}(W_{K-1K-1}+2W_{K-1K}+W_{KK}), \\
&
\end{Bmatrix}%
\end{equation*}%
where $W_{k\cdot }=\sum_{l=1}^{K}W_{kl}$ for $W=\left[ W_{kl}\right] \in \Re
^{K\times K}.$

By Theorem \ref{thm:oracleDC}, we have $\hat{Z}_{K}^{b}=Z_{K}^{b}$ $a.s.$
for $K\leq K_{0}$. By Theorem \ref{thm:oracleDC}(3), without loss of
generality, we assume that $\hat{Z}_{K}^{b}=Z_{K}^{b}$ is obtained by
splitting the last group in $\hat{Z}_{K-1}=Z_{K-1}$ into the $(K-1)$-th and $%
K$-th groups in $\hat{Z}_{K}$, i.e.,
\begin{equation*}
\#\mathcal{C}_{k,K-1}=\#\mathcal{C}_{k,K}^{b},\text{ for }k=1,\cdots
,K-2\quad \text{and }\quad \#\mathcal{C}_{K-1,K-1}=\#\mathcal{C}%
_{K-1,K}^{b}\cup \#\mathcal{C}_{K,K}^{b}.
\end{equation*}%
Define $O_{kl,K}^{b}$ and $O_{kl,K}$ as \eqref{eq:O} with $\hat{Z}_{K}$
replaced by $Z_{K}^{b}$ and $Z_{K}$, respectively, and $n_{kl,K}^{b}$ and $%
n_{kl,K}$ as \eqref{eq:nk} with $\hat{Z}_{K}$ replaced by $Z_{K}^{b}$ and $%
Z_{K}$, respectively. Further define
\begin{equation*}
\widehat{M}_{kl,K}=\frac{O_{kl,K}}{(\sum_{l^{\prime }=1}^{K}O_{kl^{\prime
		},K})(\sum_{l^{\prime }=1}^{K}O_{ll^{\prime },K})}\quad \text{and}\quad
\widehat{M}_{kl,K}^{b}=\frac{O_{kl,K}^{b}}{(\sum_{l^{\prime
		}=1}^{K}O_{kl^{\prime },K}^{b})(\sum_{l^{\prime }=1}^{K}O_{ll^{\prime
		},K}^{b})},\quad k\neq l,
\end{equation*}%
\begin{equation*}
\widehat{M}_{kk,K}=\frac{O_{kk,K}}{\sum_{i,j\in C_{k,K},i\neq j}\hat{d}_{i}%
	\hat{d}_{j}},\quad \text{and}\quad \widehat{M}_{kk,K}^{b}=\frac{O_{kk,K}^{b}%
}{\sum_{i,j\in C_{k,K}^{b},i\neq j}\hat{d}_{i}\hat{d}_{j}}.
\end{equation*}%
Then, almost surely, for $i\in \widehat{\mathcal{C}}_{k,K}$ and $i\in
\widehat{\mathcal{C}}_{l,K}$
\begin{equation*}
\hat{P}_{ij}(\hat{Z}_{K})=\widehat{M}_{kl,K}\hat{d}_{i}\hat{d}_{j},
\end{equation*}%
and for $i\in \widehat{\mathcal{C}}_{k,K}^{b}$ and $i\in \widehat{\mathcal{C}%
}_{l,K}^{b}$
\begin{equation*}
\hat{P}_{ij}(\hat{Z}_{K}^{b})=\widehat{M}_{kl,K}^{b}\hat{d}_{i}\hat{d}_{j}.
\end{equation*}

Then, for any $k,l \leq K-2$, if $i \in \mathcal{C}^b_{k,K} = \mathcal{C}
_{k,K-1}$ and $j \in \mathcal{C}^b_{l,K} = \mathcal{C}_{l,K-1}$, we have
\begin{equation*}
O_{kl,K}^b = O_{kl,K-1}, \sum_{i^{\prime }\in \mathcal{C} _{k,K}^b}\hat{d}%
_{i^{\prime }} = \sum_{i^{\prime }\in \mathcal{C}_{k,K-1}}\hat{d}
_{i^{\prime }}, \quad \text{and thus,} \quad \hat{P}_{ij}(\hat{Z}^b_K) =
\hat{P}_{ij}(\hat{Z}_{K-1}).
\end{equation*}


By \eqref{eq:plr},
\begin{align*}
& L_{n}(\hat{Z}_{K}^{b},\hat{Z}_{K-1}) \\
=& 2\sum_{k=1}^{K-2}\biggl\{\sum_{l=K-1}^{K}0.5n_{kl,K}^b\biggl( \frac{%
	\widehat{M}_{kl,K}^b}{\widehat{M}_{kK-1,K-1}}-1\biggr)^{2} \biggr\} \\
& +\biggl\{0.5\biggl[n_{K-1K-1,K}^b\biggl(\frac{\widehat{M} _{K-1K-1,K}^b}{%
	\widehat{M}_{K-1K-1,K-1}}-1\biggr)^{2} \\
& +2n_{K-1K,K}^b\biggl(\frac{\widehat{M}_{K-1K,K}^b}{\widehat{M}_{K-1K-1,K-1}%
}-1\biggr)^{2}+n_{KK,K}^b\biggl(\frac{\widehat{M}_{KK,K}^b}{\widehat{M}%
	_{K-1K-1,K-1}}-1\biggr)^{2}\biggr]\biggr\} \\
=:& 2\sum_{k=1}^{K-2}\hat{I}_{kn}+\widehat{II}_{n}.
\end{align*}
For $i\in \mathcal{C}_{k,K}^{b}$ and $j\in \mathcal{C}_{l,K}^{b}$, $%
k,l=1,\cdots ,K$, the  population counterparts of $\hat{P}_{ij}(\hat{Z}_{K})$
and $\hat{P}_{ij}(\hat{Z}_{K}^{b})$ are
\begin{align}
P_{ij}(Z_{K})=\frac{E[O_{kl,K}]d_{i}d_{j}}{\sum_{i^\prime \in \mathcal{C}%
		_{k,K},j^{\prime }\in \mathcal{C}_{l,K},i^{\prime }\neq j^{\prime
	}}d_{i^{\prime }}d_{j^{\prime }}}:=M_{kl,K}^bd_{i}d_{j}  \label{eq:M}
\end{align}
and
\begin{equation}
P_{ij}(Z_{K}^{b})=\frac{E[O_{kl,K}^b]d_{i}d_{j}}{\sum_{i^\prime \in \mathcal{%
			C}_{k,K}^{b},j^{\prime }\in \mathcal{C}_{l,K}^{b},i^{\prime }\neq j^{\prime
	}}d_{i^{\prime }}d_{j^{\prime }}}:=M_{kl,K}^bd_{i}d_{j},  \label{eq:Mb}
\end{equation}
respectively. Let
\begin{equation}
\tilde{\mathcal{B}}_{K,n}=2\sum_{k=1}^{K-2}I_{kn}+II_{n},  \label{eq:Bias}
\end{equation}
where
\begin{align}
I_{kn}& =\sum_{l=K-1}^{K}0.5n_{kl,K}^b\biggl(\frac{M_{kl,K}^b }{M_{kK-1,K-1}}%
-1\biggr)^{2}\text{ and}  \label{eq:Bias1} \\
II_{n}& =0.5n_{K-1K-1,K}^b\biggl(\frac{M_{K-1K-1,K}^b}{ M_{K-1K-1,K-1}}-1%
\biggr)^{2}  \notag \\
& +n_{K-1K,K}^b\biggl(\frac{M_{K-1K,K}^b}{ M_{K-1K-1,K-1}}-1\biggr)%
^{2}+0.5n_{KK,K}^b\biggl(\frac{ M_{KK,K}^b}{M_{K-1K-1,K-1}}-1\biggr)^{2}.
\label{eq:Bias2}
\end{align}
Note that $O_{kl,K}^b$ is independent across $1\leq k,l\leq K$. Let
\begin{equation*}
V_{kl,K}^b=\frac{\sum_{s\in I(\mathcal{C}_{k,K}^{b}),t\in I(\mathcal{C}%
		_{l,K}^{b})}[n_{\theta }^{(1)}(s,t)H_{st,K_0}-n_{\theta
	}^{(2)}(s,t)H_{st,K_0}B_{st}(Z_{K_{0}})]}{n^{2}},
\end{equation*}
where $n_{\theta }^{(m)}(k)=\sum_{i\in \mathcal{C}_{k,K_{0}}}\theta _{i}^{m}$
for $ m=1,\cdots ,4$,
\begin{equation*}
n_{\theta }^{(1)}(s,t)=n_{\theta }^{(1)}(s)n_{\theta }^{(1)}(t)-n_{\theta
}^{(2)}(s)1\{s=t\},
\end{equation*}
and
\begin{equation*}
n_{\theta }^{(2)}(s,t)=n_{\theta }^{(2)}(s)n_{\theta }^{(2)}(t)-n_{\theta
}^{(4)}(s)1\{s=t\}.
\end{equation*}
Then,
\begin{equation}  \label{eq:normal1}
n^{-1}\rho _{n}^{-1/2}\{O_{kl,K}^b-E[O_{kl,K}^b]\}-N_{K}(k,l)=o_{p}(1),
\quad k\neq l,
\end{equation}
where $N_{K}(k,l)$ is normally distributed with expectation zero and
variance $V_{kl,K}^b$,
\begin{equation*}
n^{-1}\rho _{n}^{-1/2}\{O_{kk,K}^b-E[O_{kk,K}^b]\}-N_{K}(k,k)=o_{p}(1),
\quad k=K-1,K,
\end{equation*}
where $N_{K}(k,k)$ is normally distributed with zero expectation and
variance $2V_{kk,K}^b$, and
\begin{equation*}
\{\{N_{K}(k,l)\}_{k=1,\cdots
	,K-2,l=K-1,K},N_{K}(K-1,K),N_{K}(K-1,K-1),N_{K}(K,K)\}
\end{equation*}
are mutually independent.

Next, we consider the linear expansions for $\hat{I}_{kn}-I_{kn}$ and $
\widehat{II}_{n}-II_{n}$ separately in Steps 1 and 2 below.

\textbf{Step 1. We consider the linear expansion of }$\hat{I}_{kn}-I_{kn}.$
\newline
In this step, we focus on the case in which $k=1,\cdots ,K-2$ and $l=K-1,K$.
Note that
\begin{align*}
\frac{\widehat{M}_{kl,K}^b}{\widehat{M}_{kK-1,K-1}}=& \frac{
	O_{kl,K}^b/[\sum_{l^{\prime }=1}^{K}O_{ll^{\prime },K}^b]}{
	O_{kK-1,K-1}/[\sum_{l^{\prime }=1}^{K-1}O_{K-1l^{\prime },K-1}]} \\
=& \frac{O_{kl,K}^b/[\sum_{l^{\prime }=1}^{K}O_{ll^{\prime },K}^b]}{%
	[\sum_{l=K-1}^{K}O_{kl,K}^b]/[\sum_{l=K-1}^{K} \sum_{l^{\prime
		}=1}^{K}O_{ll^{\prime },K}^b]}.
\end{align*}
Similarly,
\begin{equation}
\frac{M_{kl,K}^b}{M_{kK-1,K-1}}=\frac{E[O_{kl,K}^b]/\{ \sum_{l^{\prime
		}=1}^{K}E[O_{ll^{\prime },K}^b]\}}{ \{\sum_{l=K-1}^{K}E[O_{kl,K}^b]\}/\{%
	\sum_{l=K-1}^{K}\sum_{l^{\prime }=1}^{K}E[O_{ll^{\prime },K}^b]\}}.
\label{eq:Mkl}
\end{equation}
Then, by the delta method and some tedious calculation, we have
\begin{equation*}
n\rho _{n}^{1/2}[\widehat{M}_{kl,K}^b-M_{kl,K}^b]=\frac{ N_{K}(k,l)}{\Gamma
	_{l\cdot,K}^b}-\frac{\Gamma _{kl,K}^b[\sum_{l^{\prime
		}=1}^{K}N_{K}(l,l^{\prime })]}{(\Gamma _{l\cdot,K }^b)^{2}}+o_{p}(1),
\end{equation*}
where $N_{K}(K-1,K)=N_{K}(K,K-1),$
\begin{equation}
\Gamma _{kl,K}^b=n^{-2}\rho _{n}^{-1}E[O_{kl}]=\Gamma_{kl,K}^{0b}+o(1),
\label{eq:gamma1}
\end{equation}
and
\begin{equation}
\Gamma _{l\cdot,K}^b=n^{-2}\rho _{n}^{-1}\sum_{l^{\prime
	}=1}^{K}E[O_{ll^{\prime },K}^b]=\Gamma _{l\cdot,K}^{0b}+o(1).
\label{eq:gamma2}
\end{equation}
Similarly,
\begin{align*}
& n\rho _{n}^{1/2}[\widehat{M}_{kK-1,K-1}-M_{kK-1,K-1}] \\
=& \frac{N_{K}(k,K-1)+N_{K}(k,K)}{\Gamma _{K-1\cdot,K}^b+\Gamma_{K\cdot,K}^b}
\\
& -\frac{[\Gamma _{kK-1,K}^b+\Gamma_{kK,K}^b][\sum_{l^{\prime
		}=1}^{K}N_{K}(l^{\prime },K-1)+N_{K}(l^{\prime },K)]}{[\Gamma
	_{K-1\cdot,K}^b+\Gamma_{K\cdot,K}^b]^{2}}+o_{p}(1).
\end{align*}
By Taylor expansion, we have
\begin{align*}
& n\rho _{n}^{1/2}\biggl(\frac{\widehat{M}_{kl,K}^b}{\widehat{M}_{kK-1,K-1}}-%
\frac{M_{kl,K}^b}{M_{kK-1,K-1}}\biggr) \\
=& \frac{1}{M_{kK-1,K-1}}\biggl[\frac{N_{K}(k,l)}{\Gamma _{l\cdot,K}^b}-%
\frac{\Gamma _{kl,K}^b(\sum_{l^{\prime }=1}^{K}N_{K}(l,l^{\prime }))}{%
	(\Gamma _{l\cdot,K}^{b})^2}\biggr] \\
& -\frac{M_{kl,K}^b}{M_{kK-1,K-1}^{2}}\biggl[\frac{ N_{K}(k,K-1)+N_{K}(k,K)}{%
	\Gamma _{K-1\cdot,K}^b+\Gamma _{K\cdot,K}^b} \\
& -\frac{(\Gamma _{kK-1,K}^b+\Gamma _{kK,K}^b)(\sum_{l^{\prime
		}=1}^{K}N_{K}(l^{\prime },K-1)+N_{K}(l^{\prime },K))}{(\Gamma
	_{K-1\cdot,K}^b+\Gamma_{K\cdot,K}^b)^{2}}\biggr]+o_{p}(1).
\end{align*}
This, in conjunction with the fact that $a^{2}-b^{2}=\left( a-b\right)
^{2}+2\left( a-b\right) b,$ implies that
\begin{align}
& n^{-1}\rho _{n}^{1/2}(\hat{I}_{kn}-I_{kn})  \label{eq:I} \\
=& \sum_{l=K-1}^{K}0.5n^{-1}\rho _{n}^{1/2}n_{kl,K}^b\biggl(\frac{ \widehat{M%
	}_{kl,K}^b}{\widehat{M}_{kK-1,K-1}}-\frac{M_{kl,K}^b }{M_{kK-1,K-1}}\biggr)%
^{2}  \notag \\
& +\sum_{l=K-1}^{K}n^{-1}\rho _{n}^{1/2}n_{kl,K}^b\biggl(\frac{\hat{M }%
	_{kl,K}^b}{\widehat{M}_{kK-1,K-1}}-\frac{M_{kl,K}^b}{ M_{kK-1,K-1}}\biggr)%
\biggl(\frac{M_{kl,K}^b}{M_{kK-1,K-1} }-1\biggr)  \notag \\
=& \sum_{l=K-1}^{K}\pi _{k,K}^b\pi _{l,K}^b\biggl(\frac{ M_{kl,K}^b}{%
	M_{kK-1,K-1}}-1\biggr)  \notag \\
& \times n\rho _{n}^{1/2}\biggl(\frac{ \widehat{M}_{kl,K}^b}{\widehat{M}%
	_{kK-1,K-1}}-\frac{M_{kl,K}^b }{M_{kK-1,K-1}}\biggr)+o_{p}(1)  \notag \\
=& \sum_{l^{\prime }=1}^{K-2}\sum_{l=K-1}^{K}\phi _{l^{\prime
	},l}(k)N_{K}(l^{\prime },l)+\phi _{K-1,K-1}(k)N_{K}(K-1,K-1)+\phi
_{K-1,K}(k)N_{K}(K-1,K)  \notag \\
& +\phi _{K,K}(k)N_{K}(K,K)+o_{p}(1),  \notag
\end{align}
where the second equality follows from the facts that $n_{kl,K}^b =
n_{k,K}^bn_{l,K}^b$, $n_{k,K}^b = \sum_{i=1}^n 1\{[Z_K^b]_{ik}=1\}$, and
\begin{equation*}
\frac{n_{k,K}^b}{n} \rightarrow \pi _{k,K}^b:=\sum_{m\in I(\mathcal{C}%
	_{k,K}^{b})}\pi _{m\infty }
\end{equation*}
with $\pi _{m\infty }$ defined in Assumption \ref{ass:nk2} and that $n\rho
_{n}^{1/2}\rightarrow \infty $ as $n\rightarrow \infty $ under Assumption  %
\ref{ass:rate}. For the last line of the above display,
\begin{align*}
& \phi _{l^{\prime },l}(k) \\
=& \pi _{k,K}^b\pi _{l,K}^b\biggl(\frac{M_{kl,K}^b}{ M_{kK-1,K-1}^{2}}-\frac{%
	1}{M_{kK-1,K-1}}\biggr)\biggl[\frac{ 1\{l^{\prime }=k\}}{\Gamma _{l\cdot,K}^b%
}-\frac{\Gamma_{kl,K}^b}{(\Gamma _{l\cdot,K}^{b})^2}\biggr] \\
& -\sum_{l=K-1}^{K}\pi _{k,K}^b\pi _{l,K}^b\biggl(\frac{ (M_{kl,K}^{b})^2}{%
	M_{kK-1,K-1}^{3}}-\frac{M_{kl,K}^b}{ M_{kK-1,K-1}^{2}}\biggr) \\
& \times \biggl[\frac{1\{l^{\prime }=k\}}{\Gamma _{K-1\cdot,K}^b+\Gamma
	_{K\cdot,K}^b}-\frac{\Gamma_{kK-1,K}^b+\Gamma _{kK,K}^b}{[\Gamma
	_{K-1\cdot,K}^b+\Gamma _{K\cdot,K}^b]^{2}}\biggr],\quad l^{\prime }=1,\cdots
,K-2,\quad l=K-1,K,
\end{align*}
\begin{align*}
& \phi _{K-1,K-1}(k) \\
=& -\pi _{k,K}^b\pi _{K-1,K}^b\biggl(\frac{M_{kK-1,K}^b }{M_{kK-1,K-1}^{2}}-%
\frac{1}{M_{kK-1,K-1}}\biggr)\frac{\Gamma_{kK-1,K}^b}{(\Gamma
	_{K-1\cdot,K}^{b})^2} \\
& +\sum_{l=K-1}^{K}\pi _{k,K}^b\pi _{l,K}^b\biggl(\frac{ (M_{kl,K}^{b})^2}{%
	M_{kK-1,K-1}^{3}}-\frac{M_{kl,K}^b}{ M_{kK-1,K-1}^{2}}\biggr)\frac{\Gamma
	_{kK-1,K}^b+\Gamma _{kK,K}^b}{[\Gamma _{K-1\cdot,K}^b+\Gamma _{K\cdot,K
	}^b]^{2}},
\end{align*}
\begin{align*}
& \phi _{K-1,K}(k) \\
=& -\sum_{l=K-1}^{K}\pi _{k,K}^b\pi _{l,K}^b\biggl(\frac{ M_{kl,K}^b}{%
	M_{kK-1,K-1}^{2}}-\frac{1}{M_{kK-1,K-1}} \biggr)\frac{\Gamma _{kl,K}^b}{%
	(\Gamma _{l\cdot,K}^{b})^2} \\
& +\sum_{l=K-1}^{K}\pi _{k,K}^b\pi _{l,K}^b\biggl(\frac{ (M_{kl,K}^{b})^2}{%
	M_{kK-1,K-1}^{3}}-\frac{M_{kl,K}^b}{ M_{kK-1,K-1}^{2}}\biggr)\frac{2[\Gamma
	_{kK-1,K}^b+\Gamma_{kK,K}^b]}{[\Gamma
	_{K-1\cdot,K}^b+\Gamma_{K\cdot,K}^b]^{2}},
\end{align*}
and
\begin{align*}
& \phi _{K,K}(k) \\
=& -\pi _{k,K}^b\pi _{K,K}^b\biggl(\frac{M_{kK,K}^b}{ M_{kK-1,K-1}^{2}}-%
\frac{1}{M_{kK-1,K-1}}\biggr)\frac{\Gamma_{kK,K}^b}{(\Gamma
	_{K\cdot,K}^{b})^2} \\
& +\sum_{l=K-1}^{K}\pi _{k,K}^b\pi _{l,K}^b\biggl(\frac{ (M_{kl,K}^{b})^2}{%
	M_{kK-1,K-1}^{3}}-\frac{M_{kl,K}^b}{ M_{kK-1,K-1}^{2}}\biggr)\frac{\Gamma
	_{kK-1,K}^b+\Gamma_{kK,K}^b}{[\Gamma _{K-1\cdot,K}^b+\Gamma_{K\cdot,K}^b]^{2}%
}.
\end{align*}

\textbf{Step 2. We consider the linear expansion of $\widehat{II}
	_{n}-II_{n}.$}

Note that
\begin{align*}
& \widehat{M}_{K-1K-1,K}^b-M_{K-1K-1,K}^b \\
=& \frac{O_{K-1K-1,K}^b-E[O_{K-1K-1,K}^b]}{\sum_{i^{\prime },j^{\prime }\in
		\mathcal{C}_{K-1,K}^{b},i^{\prime }\neq j^{\prime }}\hat{d} _{i^{\prime }}%
	\hat{d}_{j^{\prime }}} \\
&-\frac{E[O_{K-1K-1,K}^b][ \sum_{i^{\prime },j^{\prime }\in \mathcal{C}%
		_{K-1,K}^{b},i^{\prime }\neq j^{\prime }}( \hat{d}_{i^{\prime }}\hat{d}%
	_{j^{\prime }}-d_{i^{\prime }}d_{j^{\prime }})]}{ (\sum_{i^{\prime
		},j^{\prime }\in \mathcal{C}_{K-1,K}^{b},i^{\prime }\neq j^{\prime }} \hat{d}%
	_{i^{\prime }}\hat{d}_{j^{\prime }})(\sum_{i^{\prime },j^{\prime }\in
		\mathcal{C}_{K-1,K}^{b},i^{\prime }\neq j^{\prime }}d_{i^{\prime
	}}d_{j^{\prime }})}.
\end{align*}
By the proof of \citet[Lemma 3.1]{SWZ17}, we have, for some positive
constant $C>0$,
\begin{equation}
\sup_{i}|\hat{d}_{i}/d_{i}-1|\leq C(\log ^{1/2}(n)(n\rho _{n})^{-1/2}) \leq
CC_1^{-1/2}\quad a.s.  \label{eq:dhat}
\end{equation}
Therefore,
\begin{align*}
n^{-4}\rho _{n}^{-2}\sum_{i^{\prime },j^{\prime }\in \mathcal{C}%
	_{K-1,K}^{b},i^{\prime }\neq j^{\prime }}\hat{d}_{i^{\prime }}\hat{d}%
_{j^{\prime }}= & n^{-4}\rho _{n}^{-2}\left[(\sum_{i^{\prime }\in \mathcal{C}%
	_{K-1,K}^{b}}\hat{d}_{i^{\prime }})^2 - \sum_{i^{\prime }\in \mathcal{C}%
	_{K-1,K}^{b}}\hat{d}_{i^{\prime }}^2\right] \\
= & n^{-4}\rho _{n}^{-2}\left[\left(\sum_{k=1}^K (EO_{kK-1,K}^b +
O_{kK-1,K}^b- EO_{kK-1,K}^b)\right)^2 - \sum_{i^{\prime }\in \mathcal{C}%
	_{K-1,K}^{b}}\hat{d}_{i^{\prime }}^2\right] \\
= & [\Gamma _{K-1\cdot,K}^{b}+O_p((n\rho_n^{1/2})^{-1})]^2 - n^{-4}\rho
_{n}^{-2}\sum_{i^{\prime }\in \mathcal{C}_{K-1,K}^{b}}\hat{d}_{i^{\prime }}^2
\\
= & (\Gamma _{K-1\cdot,K}^{b})^2 + o_p(1),
\end{align*}
where the third equality holds because $O_{kK-1,K}^b- EO_{kK-1,K}^b =
O_p(n\rho_n^{1/2})$ and the last equality holds because
\begin{align*}
n^{-4}\rho _{n}^{-2}\sum_{i^{\prime }\in \mathcal{C}_{K-1,K}^{b}}\hat{d}%
_{i^{\prime }}^2 \leq n^{-4}\rho _{n}^{-2}\sum_{i^{\prime }\in \mathcal{C}%
	_{K-1,K}^{b}}d_i^2(1+CC_1^{-1/2}) = O_{a.s.}(n^{-1}).
\end{align*}

Also note that, by \eqref{eq:dhat},
\begin{align*}
& n^{-3}\rho _{n}^{-3/2}\sum_{i^{\prime },j^{\prime }\in \mathcal{C}%
	_{K-1,K}^{b},i^{\prime }\neq j^{\prime }}(\hat{d}_{i^{\prime }}\hat{d}%
_{j^{\prime }}-d_{i^{\prime }}d_{j^{\prime }}) \\
=& n^{-3}\rho _{n}^{-3/2}\biggl[(\sum_{i^{\prime }\in \mathcal{C}%
	_{K-1,K}^{b}}\hat{d}_{i^{\prime }})^{2}-(\sum_{i^{\prime }\in \mathcal{C}%
	_{K-1,K}^{b}}d_{i^{\prime }})^{2}\biggr]-n^{-3}\rho _{n}^{-3/2}\biggl[%
\sum_{i^{\prime }\in \mathcal{C}_{K-1,K}^{b}}(\hat{d}_{i^{\prime
}}^{2}-d_{i^{\prime }}^{2})\biggr] \\
=& n^{-3}\rho _{n}^{-3/2}\biggl[(\sum_{i^{\prime }\in \mathcal{C}%
	_{K-1,K}^{b}}\hat{d}_{i^{\prime }}-d_{i^{\prime }})(\sum_{i^{\prime }\in
	\mathcal{C}_{K-1,K}^{b}}d_{i^{\prime }}+\hat{d}_{i^{\prime }})\biggr]%
+o_{a.s.}(1) \\
=& n^{-3}\rho _{n}^{-3/2}\biggl[(\sum_{i^{\prime }\in \mathcal{C}%
	_{K-1,K}^{b}}\hat{d}_{i^{\prime }}-d_{i^{\prime }})2(\sum_{i^{\prime }\in
	\mathcal{C}_{K-1,K}^{b}}d_{i^{\prime }})\biggr]+n^{-3}\rho
_{n}^{-3/2}(\sum_{i^{\prime }\in \mathcal{C}_{K-1,K}^{b}}\hat{d}_{i^{\prime
}}-d_{i^{\prime }})^{2} \\
& +o_{a.s.}(1) \\
=& 2\Gamma _{K-1\cdot ,K}\biggl(\sum_{l^{\prime }=1}^{K}N_{K}(K-1,l^{\prime
})\biggr)+o_{p}(1),
\end{align*}%
where the second equality holds because
\begin{align*}
& n^{-3}\rho _{n}^{-3/2}\left\vert \sum_{i^{\prime }\in \mathcal{C}%
	_{K-1,K}^{b}}(\hat{d}_{i^{\prime }}^{2}-d_{i^{\prime }}^{2})\right\vert  \\
=& n^{-3}\rho _{n}^{-3/2}\left\vert \sum_{i^{\prime }\in \mathcal{C}%
	_{K-1,K}^{b}}(\hat{d}_{i^{\prime }}-d_{i^{\prime }})\right\vert \left\vert
\sum_{i^{\prime }\in \mathcal{C}_{K-1,K}^{b}}(\hat{d}_{i^{\prime
}}+d_{i^{\prime }})\right\vert  \\
\leq & n^{-3}\rho _{n}^{-3/2}\left( 1+CC_{1}^{-1/2}\right) \left[
\sum_{i^{\prime }\in \mathcal{C}_{K-1,K}^{b}}d_{i^{\prime }}\right]
^{2}C(\log ^{1/2}(n)(n\rho _{n})^{-1/2})=o_{a.s.}(1),
\end{align*}%
and the last equality holds because
\begin{equation*}
\sum_{i^{\prime }\in \mathcal{C}_{K-1,K}^{b}}(\hat{d}_{i^{\prime
}}-d_{i^{\prime }})=O_{p}(n\rho _{n}^{1/2}).
\end{equation*}

Then, by the delta method,
\begin{align}
& n^{3}\rho _{n}^{3/2}[\widehat{M}_{K-1K-1,K}^b-M_{K-1K-1,K}^b]
\label{eq:K-1,K-1II} \\
=& \frac{N_{K}(K-1,K-1)}{(\Gamma _{K-1\cdot,K}^{b})^2}-\frac{2\Gamma
	_{K-1K-1,K}^b[\sum_{l^{\prime }=1}^{K}N_{K}(K-1,l^{\prime })]}{ (\Gamma
	_{K-1\cdot,K}^b)^{3}} + o_p(1).  \notag
\end{align}
Similarly,
\begin{equation*}
n^{3}\rho _{n}^{3/2}(\widehat{M}_{KK,K}^b-M_{KK,K}^b) =\frac{ N_{K}(K,K)}{%
	(\Gamma _{K\cdot,K }^{b})^2}-\frac{2\Gamma _{KK,K}^b[\sum_{l^{\prime
		}=1}^{K}N_{K}(K,l^{\prime })]}{(\Gamma_{K\cdot,K }^{b})^3} +o_p(1).
\end{equation*}%
Furthermore, we have
\begin{align*}
& \widehat{M}_{K-1K,K}^b-M_{K-1K,K}^b \\
=& \frac{O_{K-1K,K}^b-E[O_{K-1K,K}^b]}{(\sum_{i^{\prime }\in \mathcal{C}%
		_{K-1,K}^{b}}\hat{d}_{i^{\prime }})(\sum_{j^{\prime }\in \mathcal{C}%
		_{K,K}^{b}}\hat{d} _{j^{\prime }})} \\
& -\frac{E[O_{K-1K,K}^b][(\sum_{i^{\prime }\in \mathcal{C}_{K-1,K}^{b}}\hat{d%
	} _{i^{\prime }})(\sum_{j^{\prime }\in \mathcal{C}_{K,K}^{b}}\hat{d}%
	_{j^{\prime }})-(\sum_{i^{\prime }\in \mathcal{C}_{K-1,K}^{b}}d_{i^{\prime
	}})(\sum_{j^{\prime }\in \mathcal{C}_{K,K}^{b}}d_{j^{\prime }})]}{%
	(\sum_{i^{\prime }\in \mathcal{C}_{K-1,K}^{b}}\hat{ d}_{i^{\prime
	}})(\sum_{j^{\prime }\in \mathcal{C}_{K,K}^{b}}\hat{d}_{j^{\prime
	}})(\sum_{i^{\prime }\in \mathcal{C}_{K-1,K}^{b}}d_{i^{\prime
	}})(\sum_{j^{\prime }\in \mathcal{C}_{K,K}^{b}}d_{j^{\prime }})}.
\end{align*}
Therefore,
\begin{align}
& n^{3}\rho _{n}^{3/2}[\widehat{M}_{K-1K,K}^b-M_{K-1K,K}^b]  \notag \\
=& \frac{N_{K}(K-1,K)}{\Gamma _{K-1\cdot,K}^b\Gamma _{K\cdot,K }^b}
\label{eq:k-1kII} \\
& -\frac{\Gamma _{K-1K,K}^b[\Gamma _{K-1\cdot,K }^b\sum_{l^{\prime
		}=1}^{K}N_{K}(l^{\prime },K)+\Gamma _{K\cdot,K }^b\sum_{l^{\prime
		}=1}^{K}N_{K}(l^{\prime },K-1)]}{(\Gamma _{K-1\cdot,K }^{b})^2(\Gamma
	_{K\cdot,K }^{b})^2} +o_p(1).  \notag
\end{align}%
Finally, noting that
\begin{align*}
& \widehat{M}_{K-1K-1,K-1} \\
=& \frac{O_{K-1K-1,K-1}}{\sum_{i^{\prime },j^{\prime }\in \mathcal{C}%
		_{K-1,K-1},i^{\prime }\neq j^{\prime }}\hat{d}_{i^{\prime }}\hat{d}%
	_{j^{\prime }}} \\
=& \frac{O_{K-1K-1,K}^b+2O_{K-1K,K}^b+O_{KK,K}^b}{ \sum_{i^{\prime
		},j^{\prime }\in \mathcal{C}_{K-1,K}^{b},i^{\prime }\neq j^{\prime }} \hat{d}%
	_{i^{\prime }}\hat{d}_{j^{\prime }}+\sum_{i^{\prime },j_{K,K}^{\prime
			b},i^{\prime }\neq j^{\prime }}\hat{d}_{i^{\prime }}\hat{d}_{j^{\prime
	}}+2\sum_{i^{\prime }\in \mathcal{C}_{K-1,K}^{b},j^{\prime }\in \mathcal{C}%
		_{K,K}^{b}}\hat{d} _{i^{\prime }}\hat{d}_{j^{\prime }}},
\end{align*}
we have
\begin{align}
& n^{3}\rho _{n}^{3/2}(\widehat{M}_{K-1K-1,K-1}-M_{K-1K-1,K-1}) \\
=& \frac{N_{K}(K-1,K-1)+2N_{K}(K-1,K)+N_{K}(K,K)}{[\Gamma _{K-1\cdot,K
	}^b+\Gamma _{K\cdot,K}^b]^{2}}  \notag \\
& -\frac{\Gamma _{K-1K-1,K}^b+2\Gamma _{K-1K,K}^b+\Gamma _{KK,K}^b}{[\Gamma
	_{K-1\cdot,K}^b+\Gamma _{K\cdot,K }^b]^{3}}  \notag \\
& \times \left\{\sum_{l^{\prime }=1}^{K}2[N_{K}(K-1,l^{\prime
})+N_{K}(K,l^{\prime })]\right\} +o_p(1).  \notag
\end{align}
For $s,t=K-1,K$, let $\hat{m}_{st,K}^b=n^{2}\rho _{n}\widehat{M} _{st,K}^b$
and
\begin{equation*}
m_{st,K}^b=n^{2}\rho _{n}M_{st,K}^b=\frac{\Gamma _{st,K}^{0b}}{\Gamma
	_{s\cdot,K}^{0b}\Gamma _{t\cdot,K }^{0b}}[1+o(1)].
\end{equation*}
Define $m_{K-1K-1,K-1}$ and $\hat{m}_{K-1K-1,K-1}$ similarly. By  the
previous calculations, we have
\begin{equation*}
\hat{m}_{st,K}^b=m_{st,K}^b[1+o_{a.s.}(1)].
\end{equation*}
Hence,
\begin{align}
& n\rho _{n}^{1/2}\biggl(\frac{\widehat{M}_{K-1K-1,K}^b}{\widehat{M}%
	_{K-1K-1,K-1}}-\frac{M_{K-1K-1,K}^b}{M_{K-1K-1,K-1}} \biggr)  \notag \\
=& \frac{n^{3}\rho _{n}^{3/2}[\widehat{M}_{K-1K-1,K}^b-M_{K-1K-1,K}^b]}{%
	m_{K-1K-1,K-1}}  \notag \\
& -\frac{m_{K-1K-1,K}^bn^{3}\rho _{n}^{3/2}[\widehat{M}
	_{K-1K-1,K-1}-M_{K-1K-1,K-1}]}{m_{K-1K-1,K-1}^{2}} +o_{p}(1),
\end{align}

\begin{align}
& n\rho _{n}^{1/2}\biggl(\frac{\widehat{M}_{KK,K}^b}{\widehat{M}
	_{K-1K-1,K-1}}-\frac{M_{KK,K}^b}{M_{K-1K-1,K-1}}\biggr)  \notag \\
=& \frac{n^{3}\rho _{n}^{3/2}[\widehat{M}_{KK,K}^b-M_{KK,K}^b]}{
	m_{K-1K-1,K-1}}  \notag \\
& -\frac{m_{KK,K}^bn^{3}\rho _{n}^{3/2}[\widehat{M}
	_{K-1K-1,K-1}-M_{K-1K-1,K-1}]}{m_{K-1K-1,K-1}^{2}} +o_{p}(1),
\end{align}
and
\begin{align}
& n\rho _{n}^{1/2}\biggl(\frac{\widehat{M}_{K-1K,K}^b}{\widehat{M}
	_{K-1K-1,K-1}}-\frac{M_{K-1K,K}^b}{M_{K-1K-1,K-1}}\biggr)  \notag \\
=& \frac{n^{3}\rho _{n}^{3/2}[\widehat{M} _{K-1K,K}^b-M_{K-1K,K}^b]}{%
	m_{K-1K-1,K-1}}  \notag \\
& -\frac{m_{K-1K,K}^bn^{3}\rho _{n}^{3/2}[\widehat{M}
	_{K-1K-1,K-1}-M_{K-1K-1,K-1}]}{m_{K-1K-1,K-1}^{2}} +o_{p}(1).
\label{eq:K-1,K-1K-1II}
\end{align}
Then, by \eqref{eq:K-1,K-1II}--\eqref{eq:K-1,K-1K-1II},
\begin{align}
& n^{-1}\rho _{n}^{1/2}(\widehat{II}_{n}-II_{n})  \label{eq:II} \\
=& n\rho _{n}^{1/2}\biggl[(\pi _{K-1,K}^b)^{2}\biggl(\frac{\widehat{M}
	_{K-1K-1,K}^b}{\widehat{M}_{K-1K-1,K-1}}-\frac{ M_{K-1K-1,K}^b}{%
	M_{K-1K-1,K-1}}\biggr)  \notag \\
& +2\pi _{K-1,K}^b\pi _{K,K}^b\biggl(\frac{\widehat{M} _{K-1K,K}^b}{\widehat{%
		M}_{K-1K-1,K-1}}-\frac{M_{K-1K,K}^b}{ M_{K-1K-1,K-1}}\biggr)  \notag \\
& +(\pi _{K,K}^b)^{2}\biggl(\frac{\widehat{M}_{KK,K}^b}{\widehat{M}
	_{K-1K-1,K-1}}-\frac{M_{KK,K}^b}{M_{K-1K-1,K-1}}\biggr) \biggr]+o_{p}(1)
\notag \\
=& n^{3}\rho _{n}^{3/2}\biggl[\frac{(\pi _{K-1,K}^b)^{2}[\widehat{M}
	_{K-1K-1,K}^b-M_{K-1K-1,K}^b]}{m_{K-1K-1,K-1}}  \notag \\
& +\frac{2\pi _{K-1,K}^b\pi _{K,K}^b[\widehat{M} _{K-1K,K}^b-M_{K-1K,K}^b]}{%
	m_{K-1K-1,K-1}}  \notag \\
& +\frac{(\pi_{K,K}^{b})^2[\widehat{M}_{KK,K}^b-M_{KK,K}^b]}{ m_{K-1K-1,K-1}}%
\biggr]  \notag \\
& +\frac{(\pi _{K-1,K}^b)^{2}m_{K-1K-1,K}^b+2\pi _{K-1,K}^b\pi
	_{K,K}^bm_{K-1K,K}^b+(\pi _{K,K}^b)^{2}m_{KK,K}^b}{m_{K-1K-1,K-1}^{2}}
\notag \\
& \times n^{3}\rho _{n}^{3/2}[\widehat{M} _{K-1K-1,K-1}-M_{K-1K-1,K-1}]+
o_{p}(1)  \notag \\
=& \sum_{l^{\prime }=1}^{K-2}\sum_{l=K-1}^{K}\phi _{l^{\prime
	},l}(K-1)N_{K}(l^{\prime },l)+\phi _{K-1,K-1}(K-1)N_{K}(K-1,K-1)  \notag \\
& +\phi _{K-1,K}(K-1)N_{K}(K-1,K)+\phi _{K,K}(K-1)N_{K}(K,K) + o_{p}(1),
\notag
\end{align}
where, by denoting
\begin{equation*}
\phi =\frac{(\pi _{K-1,K}^b)^{2}m_{K-1K-1,K}^b+2\pi _{K-1,K}^b\pi
	_{K,K}^bm_{K-1K,K}^b+(\pi _{K,K}^b)^{2}m_{KK,K}^b }{m_{K-1K-1,K-1}^{2}},
\end{equation*}
we have
\begin{align*}
& \phi _{l^{\prime },K-1}(K-1) \\
=& -\frac{2(\pi _{K-1,K}^b)^{2}\Gamma _{K-1K-1,K}^b}{(\Gamma
	_{K-1\cdot,K}^b)^{3}m_{K-1K-1,K-1} }-\frac{2\pi _{K-1,K}^b\pi _{K,K}^b\Gamma
	_{K-1K,K}^b}{ \Gamma _{K\cdot,K}^b(\Gamma _{K-1\cdot,K}^b)^{2}m_{K-1K-1,K-1}}
\\
& -\frac{2\phi [ \Gamma _{K-1K-1,K}^b+2\Gamma _{K-1K,K}^b+\Gamma _{KK,K}^b]}{%
	[\Gamma _{K-1\cdot,K}^b+\Gamma _{K\cdot,K}^b]^{3}m_{K-1K-1,K-1}^{2}},\quad
l^{\prime }=1,\cdots ,K-2,
\end{align*}
\begin{align*}
& \phi _{l^{\prime },K}(K-1) \\
=& -\frac{2(\pi _{K,K}^b)^{2}\Gamma _{KK,K}^b}{(\Gamma _{K\cdot,K
	}^b)^{3}m_{K-1K-1,K-1}}- \frac{2\pi _{K-1,K}^b\pi _{K,K}^b\Gamma _{K-1K,K}^b%
}{ (\Gamma _{K\cdot,K }^{b})^2\Gamma _{K-1\cdot,K }^bm_{K-1K-1,K-1}} \\
& -\frac{2\phi [ \Gamma _{K-1K-1,K}^b+2\Gamma _{K-1K,K}^b+\Gamma _{KK,K}^b]}{%
	[\Gamma _{K-1\cdot,K}^b+\Gamma _{K\cdot,K}^b]^{3}m_{K-1K-1,K-1}^{2}} ,\quad
l^{\prime }=1,\cdots ,K-2,
\end{align*}
\begin{align*}
& \phi _{K-1,K-1}(K-1) \\
=& \frac{(\pi _{K-1,K}^b)^{2}}{(\Gamma _{K-1\cdot,K}^b)^{2}m_{K-1K-1,K-1}}-%
\frac{2(\pi _{K-1,K}^b)^{2}\Gamma_{K-1K-1,K}^b}{(\Gamma
	_{K-1\cdot,K}^b)^{3}m_{K-1K-1,K-1} } -\frac{2\pi _{K-1,K}^b\pi
	_{K,K}^b\Gamma _{K-1K,K}^b}{ \Gamma _{K\cdot,K}^b(\Gamma
	_{K-1\cdot,K}^b)^{2}m_{K-1K-1,K-1}} \\
&+\frac{\phi }{[\Gamma _{K-1\cdot,K}^b+\Gamma
	_{K\cdot,K}^b]^{2}m_{K-1K-1,K-1}^{2}}-\frac{2\phi [ \Gamma
	_{K-1K-1,K}^b+2\Gamma _{K-1K,K}^b+\Gamma _{KK,K}^b]}{ [\Gamma
	_{K-1\cdot,K}^b+\Gamma _{K\cdot,K}^b]^{3}m_{K-1K-1,K-1}^{2}},
\end{align*}
\begin{align*}
& \phi _{K,K}(K-1) \\
=& \frac{(\pi _{K,K}^b)^{2}}{(\Gamma _{K\cdot,K}^b)^{2}m_{K-1K-1,K-1}}-\frac{%
	2(\pi _{K,K}^b)^{2}\Gamma _{KK,K}^b}{(\Gamma _{K\cdot,K
	}^b)^{3}m_{K-1K-1,K-1}}-\frac{2\pi _{K-1,K}^b\pi _{K,K}^b\Gamma _{K-1K,K}^b}{
	(\Gamma _{K\cdot,K }^{b})^2\Gamma _{K-1\cdot,K}^bm_{K-1K-1,K-1}} \\
& +\frac{\phi }{[\Gamma _{K-1\cdot,K}^b+\Gamma
	_{K\cdot,K}^b]^{2}m_{K-1K-1,K-1}^{2}} -\frac{2\phi [ \Gamma
	_{K-1K-1,K}^b+2\Gamma _{K-1K,K}^b+\Gamma _{KK,K}^b]}{ [\Gamma
	_{K-1\cdot,K}^b+\Gamma _{K\cdot,K}^b]^{3}m_{K-1K-1,K-1}^{2}},
\end{align*}
and
\begin{align*}
& \phi _{K-1,K}(K-1) \\
=& -\frac{2(\pi _{K-1,K}^b)^{2}\Gamma _{K-1K-1,K}^b}{(\Gamma
	_{K-1\cdot,K}^b)^{3}m_{K-1K-1,K-1}}-\frac{2(\pi _{K,K}^b)^{2}\Gamma _{KK,K}^b%
}{(\Gamma _{K\cdot,K }^b)^{3}m_{K-1K-1,K-1}} +\frac{2\pi _{K-1,K}^b\pi
	_{K,K}^b}{\Gamma _{K-1\cdot,K }^b\Gamma _{K\cdot,K}^bm_{K-1K-1,K-1}} \\
& -\frac{2\pi _{K-1,K}^b\pi _{K,K}^b\Gamma _{K-1K,K}^b[\Gamma _{K-1\cdot,K
	}^b+\Gamma _{K\cdot,K}^b]}{(\Gamma _{K\cdot,K }^b)^{2}(\Gamma
	_{K-1\cdot,K}^{b})^2m_{K-1K-1,K-1}} +\frac{2\phi }{[\Gamma
	_{K-1\cdot,K}^b+\Gamma _{K\cdot,K}^b]^{2}m_{K-1K-1,K-1}^{2}} \\
& -\frac{4\phi [ \Gamma _{K-1K-1,K}^b+2\Gamma _{K-1K,K}^b+\Gamma _{KK,K}^b]}{
	[\Gamma _{K-1\cdot,K}^b+\Gamma _{K\cdot,K}^b]^{3}m_{K-1K-1,K-1}^{2}}.
\end{align*}
Combining \eqref{eq:I} and \eqref{eq:II}, we have
\begin{align*}
& n^{-1}\rho _{n}^{1/2}[L_{n}(\hat{Z}_{K},\hat{Z}_{K-1})-\tilde{\mathcal{B}}
_{K,n}] \\
=& \sum_{l^{\prime }=1}^{K-2}\sum_{l=K-1}^{K}\phi _{l^{\prime
	},l}N_{K}(l^{\prime },l)+\phi _{K-1,K-1}N_{K}(K-1,K-1) \\
& +\phi _{K-1,K}N_{K}(K-1,K)+\phi _{K,K}N_{K}(K,K)+o_{p}(1),
\end{align*}
where
\begin{equation*}
\phi _{l^{\prime },l}=\sum_{k=1}^{K-2}2\phi _{l^{\prime },l}(k)+\phi
_{l^{\prime },l}(K-1),\quad l^{\prime }=1,\cdots ,l,\quad l=K-1,\text{ }K.
\end{equation*}
Letting
\begin{align}
\tilde{\varpi}_{K,n}^{2}=& \sum_{l^{\prime }=1,\cdots ,K-2;\text{ }l=K-1,K;
	\text{ }l^{\prime }\leq l}\phi _{l^{\prime },l}^{2}V_{l^{\prime}l,K}^b+\phi
_{K-1,K-1}^{2}2V_{K-1K-1,K}^b  \notag \\
& +\phi _{K,K}^{2}2V_{KK,K}^b+\phi _{K-1,K}^{2}V_{K-1K,K}^b,  \label{eq:Var}
\end{align}
we have
\begin{equation*}
\tilde{\varpi}_{K,n}^{-1}\left\{n^{-1}\rho _{n}^{1/2}[L_{n}(\hat{Z}_{K},\hat{%
	Z} _{K-1})-\tilde{\mathcal{B}}_{K,n}]\right\}\rightsquigarrow N(0,1).
\end{equation*}

\textbf{Step 3. We now prove the second result in the theorem.} \newline
By (\ref{eq:Bias1}), (\ref{eq:Mkl}), \eqref{eq:gamma1} and \eqref{eq:gamma2}
, for $k=1,\cdots ,K-2$, we have
\begin{equation*}
n^{-2}I_{kn}\rightarrow \sum_{l=K-1}^{K}0.5\pi _{k,K}^b\pi _{l,K}^b\biggl(%
\frac{\Gamma _{kl,K}^{0b}[\Gamma _{K-1\cdot,K }^{0b}+\Gamma _{K\cdot,K
	}^{0b}]}{\Gamma _{l\cdot,K }^{0b}[\Gamma _{kK-1,K}^{b}+\Gamma _{kK,K}^{0b}] }%
-1\biggr)^{2}.
\end{equation*}
Similarly, by (\ref{eq:Bias2}), (\ref{eq:Mkl}), \eqref{eq:gamma1} and  %
\eqref{eq:gamma2}, we have
\begin{align*}
& n^{-2}II_{n} \\
\rightarrow & 0.5(\pi _{K-1,K}^b)^{2}\biggl(\frac{\Gamma
	_{K-1K-1,K}^{0b}[\Gamma _{K-1\cdot,K}^{0b}+\Gamma _{K\cdot,K }^{0b}]^{2}}{%
	[\Gamma _{K-1\cdot,K}^{0b}]^{2}[\Gamma _{K-1K-1,K}^{0b}+2\Gamma
	_{K-1K,K}^{0b}+\Gamma _{KK,K}^{0b}]}-1\biggr)^{2} \\
& +\pi _{K-1,K}^b\pi _{K,K}^b \\
& \times \biggl(\frac{\Gamma _{K-1K,K}^{0b}[\Gamma _{K-1\cdot,K}^{0b}+\Gamma
	_{K\cdot,K}^{0b}]^{2}}{\Gamma _{K-1\cdot,K}^{0b}\Gamma
	_{K\cdot,K}^{0b}[\Gamma _{K-1K-1,K}^{0b}+2\Gamma _{K-1K,K}^{0b}+\Gamma
	_{KK,K}^{0b}]}-1\biggr)^{2} \\
& +0.5(\pi _{K,K}^b)^{2}\biggl(\frac{\Gamma _{KK,K}^{0b}[\Gamma
	_{K-1\cdot,K}^{0b}+\Gamma _{K\cdot,K}^{0b}]^{2}}{[\Gamma _{K\cdot,K
	}^{0b}]^{2}[\Gamma _{K-1K-1,K}^{0b}+2\Gamma _{K-1K,K}^{0b}+\Gamma
	_{KK,K}^{0b}]}-1\biggr)^{2}.
\end{align*}
Clearly, there exits $c_{K2}<\infty $ such that
\begin{equation*}
n^{-2}\tilde{\mathcal{B}}_{K,n}=\sum_{k=1}^{K-2}n^{-2}I_{kn}+n^{-2}II_{n}
\leq c_{K2}.
\end{equation*}
In addition, Assumption \ref{ass:BKDC} implies that at least one of the  squares is nonzero. Therefore, there exists
a constant $c_{k1}>0$ such that
\begin{equation*}
n^{-2}\tilde{\mathcal{B}}_{K,n}=\sum_{k=1}^{K-2}n^{-2}I_{kn}+n^{-2}II_{n}
\geq c_{K1}.
\end{equation*}

\subsection{Proof of Theorem \protect\ref{thm:overDC}}

We consider the upper bound for $L_n(\hat{Z}^b_{K_0+1},\hat{Z}_{K_0}) $. We
say $z$  is a $n \times (K_0+1)$ membership matrix for $n$ nodes and $K_0+1$
groups  if there is only one element in each row of $z$ that takes value 1,
and the  rest of the entries are zero. Say $Z_{ik}=1$, then we say that the $%
i$-th  node is identified in group $k$. Let
\begin{align*}
\mathcal{V}_{K_0+1} = &
\begin{Bmatrix}
& z \text{ is a $n \times (K_0+1)$ membership matrix s.t.} \\
& \text{every group identified by z is a subset of } \\
& \text{one of the true communities and} \\
& \inf_{1\leq k \leq K}n_k(z)/n \geq \varepsilon%
\end{Bmatrix}
.
\end{align*}

Without loss of generality, we assume that $\hat{Z}_{K_{0}+1}^{b}$ is
obtained by splitting the last group in $\hat{Z}_{K_{0}}$ into the $K_{0}$
-th and $(K_{0}+1)$-th groups in $\hat{Z}_{K_{0}+1}^{b}$. By Theorem \ref%
{thm:oracleDC} and Assumption \ref{ass:neps}, we have $\hat{Z}%
_{K_{0}+1}^{b}\in \mathcal{V}_{K_{0}+1}$ $a.s.$ Let $z_{K_{0}+1}$ be an
arbitrary realization of $\hat{Z}_{K_{0}+1}^{b}$ such that $z_{K_{0}+1}\in
\mathcal{V}_{K_{0}+1}$ and $h(\cdot |z_{K_{0}+1})$ be a surjective mapping: $%
[K_{0}+1]\mapsto \lbrack K_{0}]$ that maps the community index identified by
$z_{K_{0}+1}$ into the true community index in $[K_{0}]$ for any $%
z_{K_{0}+1}\in \mathcal{V}_{K_{0}+1}$. Then, we have
\begin{equation*}
h(k|z_{K_{0}+1})=k,\quad k=1,\cdots ,K_{0}-1
\end{equation*}%
and
\begin{equation*}
h(K_{0}|z_{K_{0}+1})=h(K_{0}+1|z_{K_{0}+1})=K_{0}.
\end{equation*}%
In the following, we explicitly write down the terms $M_{kl}$, $\widehat{M}%
_{kl}$, and $O_{kl}$ as functions of $z_{K_{0}+1}$, i.e.,
\begin{equation}
M_{kl}(z_{K_{0}+1})=\frac{E[O_{kl}(z_{K_{0}+1})]}{\sum_{i^{\prime }\in
		\mathcal{C}_{k}(z_{K_{0}+1}),j^{\prime }\in \mathcal{C}_{l}(z_{K_{0}+1}),i^{%
			\prime }\neq j^{\prime }}d_{i^{\prime }}d_{j^{\prime }}},  \label{eq:M'}
\end{equation}%
\begin{equation*}
\widehat{M}_{kl}(z_{K_{0}+1})=\frac{O_{kl}(z_{K_{0}+1})}{(\sum_{l^{\prime
		}=1}^{K}O_{kl^{\prime }}(z_{K_{0}+1}))(\sum_{l^{\prime }=1}^{K}O_{ll^{\prime
	}}(z_{K_{0}+1}))},
\end{equation*}%
and
\begin{equation*}
O_{kl}(z_{K_{0}+1})=\sum_{i=1}^{n}\sum_{j\neq
	i}1\{[z_{K_{0}+1}]_{ik}=1,[z_{K_{0}+1}]_{jl}=1\}A_{ij},
\end{equation*}%
where $\mathcal{C}_{l}(z_{K_{0}+1})$ denotes the $l$-th cluster identified
by $z_{K_{0}+1}$. Further recall $n_{kl}$ and $n_{k}$ defined in %
\eqref{eq:nk'} in Section \ref{sec:over}. We emphasize the dependence on $%
z_{K_{0}+1}$ because, by Theorem \ref{thm:oracleDC}, $Z_{K}$ and $Z_{K}^{b}$
for $K=1,\cdots ,K_{0}$ are uniquely defined, while $Z_{K_{0}+1}^{b}$ is
not. By \eqref{eq:M'}, for any $z_{K_{0}+1}\in \mathcal{V}_{K_{0}+1}$, $i\in
\mathcal{C}_{k}(z_{K_{0}+1})$ and $j\in \mathcal{C}_{l}(z_{K_{0}+1})$, $%
k=1,\cdots ,K_{0}-1$, $l=K_{0},K_{0}+1$. Then,
\begin{equation*}
P_{ij}(z_{K_{0}+1})=B_{h(k|z_{K_{0}+1})h(l|z_{K_{0}+1})}\theta _{i}\theta
_{j}=B_{kK_{0},K_{0}}\theta _{i}\theta _{j}=P_{ij}(Z_{K_{0}})
\end{equation*}%
and
\begin{equation}
\frac{M_{kl}(z_{K_{0}+1})}{M_{kK_{0},K_{0}}}=\frac{P_{ij}(z_{K_{0}+1})}{%
	P_{ij}(Z_{K_{0}})}=1,\quad k=1,\cdots ,K_{0}-1,\quad l=K_{0},K_{0}+1.
\label{eq:Over1}
\end{equation}%
Similarly,
\begin{equation}
\frac{M_{K_{0}K_{0}}(z_{K_{0}+1})}{M_{K_{0}K_{0},K_{0}}}=\frac{%
	M_{K_{0}K_{0}+1}(z_{K_{0}+1})}{M_{K_{0}K_{0},K_{0}}}=\frac{%
	M_{K_{0}+1K_{0}+1}(z_{K_{0}+1})}{M_{K_{0}K_{0},K_{0}}}=1.  \label{eq:Over2}
\end{equation}

By Theorem \ref{thm:oracleDC}, $\hat{Z}_{K_{0}}=Z_{K_{0}}$ and $\hat{Z}%
_{K_{0}+1}^{b}\in \mathcal{V}_{K_{0}+1}$ $a.s.$ Therefore, \eqref{eq:Over1}
and \eqref{eq:Over2} still hold when $z_{K_{0}+1}$ and $Z_{K_{0}}$ are
replaced by $\hat{Z}_{K_{0}+1}^{b}$ and $\hat{Z}_{K_{0}}$. Then,
\begin{align}
& L_{n}(\hat{Z}_{K_{0}+1}^{b},\hat{Z}_{K_{0}})  \notag \\
=& 2\sum_{k=1}^{K_{0}-1}\sum_{l=K_{0}}^{K_{0}+1}0.5n_{kl}(\hat{Z}%
_{K_{0}+1}^{b})\biggl(\frac{\widehat{M}_{kl}(\hat{Z}_{K_{0}+1}^{b})}{%
	\widehat{M}_{kK_{0},K_{0}}}-1\biggr)^{2}  \notag \\
& +0.5\biggl[n_{K_{0}K_{0}}(\hat{Z}_{K_{0}+1}^{b})\biggl(\frac{\widehat{M}%
	_{K_{0}K_{0}}(\hat{Z}_{K_{0}+1}^{b})}{\widehat{M}_{K_{0}K_{0},K_{0}}}-1%
\biggr)^{2}  \notag \\
& +2n_{K_{0}K_{0}+1}(\hat{Z}_{K_{0}+1}^{b})\biggl(\frac{\widehat{M}%
	_{K_{0}K_{0}+1}(\hat{Z}_{K_{0}+1}^{b})}{\widehat{M}_{K_{0}K_{0},K_{0}}}-1%
\biggr)^{2}  \notag \\
& +n_{K_{0}+1K_{0}+1}(\hat{Z}_{K_{0}+1}^{b})\biggl(\frac{\widehat{M}%
	_{K_{0}+1,K_{0}+1}(\hat{Z}_{K_{0}+1}^{b})}{\widehat{M}_{K_{0}K_{0},K_{0}}}-1%
\biggr)^{2}\biggr].  \label{eq:LR1}
\end{align}%
For the first term in (\ref{eq:LR1}),
\begin{equation*}
0.5n_{kl}(\hat{Z}_{K_{0}+1}^{b})\biggl(\frac{\widehat{M}_{kl}(\hat{Z}%
	_{K_{0}+1}^{b})}{\widehat{M}_{kK_{0},K_{0}}}-1\biggr)^{2}\lesssim
n^{2}\sup_{z_{K_{0}+1}\in \mathcal{V}_{K_{0}+1}}\biggl(\frac{\widehat{M}%
	_{kl}(z_{K_{0}+1})}{\widehat{M}_{kK_{0},K_{0}}}-\frac{M_{kl}(z_{K_{0}+1})}{%
	M_{kK_{0},K_{0}}}\biggr)^{2}.
\end{equation*}%
The rate of the RHS of the above display depends on that of
\begin{equation*}
\sup_{z_{K_{0}+1}\in \mathcal{V}%
	_{K_{0}+1}}|O_{kl}(z_{K_{0}+1})-E[O_{kl}(z_{K_{0}+1})]|.
\end{equation*}%
By Bernstein inequality,
\begin{align*}
& P(\sup_{z_{K_{0}+1}\in \mathcal{V}%
	_{K_{0}+1}}|O_{kl}(z_{K_{0}+1})-E[O_{kl}(z_{K_{0}+1})]|\geq Cn^{3/2}\rho
_{n}^{1/2}) \\
\leq & 2^{n}\exp \biggl(-\frac{C^{2}n^{3}\rho _{n}/2}{\overline{\theta }%
	^{2}n^{2}\rho _{n}+Cn^{3/2}\rho _{n}^{1/2}/3}\biggr)\leq \exp (-C^{\prime }n)
\end{align*}%
for some constant $C^{\prime }>0$. Therefore,
\begin{equation*}
\sup_{z_{K_{0}+1}\in \mathcal{V}%
	_{K_{0}+1}}|O_{kl}(z_{K_{0}+1})-E[O_{kl}(z_{K_{0}+1})]|=O_{a.s.}(n^{3/2}\rho
_{n}^{1/2}).
\end{equation*}%
It also implies the uniform consistency that
\begin{equation*}
\sup_{z_{K_{0}+1}\in \mathcal{V}_{K_{0}+1}}|n^{-2}\rho
_{n}^{-1}O_{kl}(z_{K_{0}+1})-\Gamma _{kl}(z_{K_{0}+1})|=O_{a.s.}((n\rho
_{n})^{-1/2})+o(1)=o_{a.s.}(1),
\end{equation*}%
where
\begin{equation*}
\Gamma _{kl}(z_{K_{0}+1})=\frac{n_{l}(z_{K_{0}+1})}{n}\frac{%
	n_{k}(z_{K_{0}+1})}{n}H_{h(k|z_{K_{0}+1})h(l|z_{K_{0}+1})}.
\end{equation*}

Following the same and tedious Taylor expansion detailed in Steps 1 and 2 of
the proof of Theorem \ref{thm:underDC}, we have
\begin{equation*}
\sup_{z_{K_{0}+1}\in \mathcal{V}_{K_{0}+1}}\left\vert \widehat{M}%
_{kl}(z_{K_{0}+1})-M_{kl}(z_{K_{0}+1})\right\vert =O_{a.s.}((n^{5/2}\rho
_{n}^{3/2})^{-1}),
\end{equation*}%
\begin{equation*}
|\widehat{M}_{kK_{0},K_{0}}-M_{kK_{0},K_{0}}|=O_{p}((n^{3}\rho
_{n}^{3/2})^{-1}),
\end{equation*}%
and
\begin{equation*}
n^{2}\rho _{n}M_{kK_{0},K_{0}}\geq c,
\end{equation*}%
for some constant $c>0$. Therefore,
\begin{equation*}
\sup_{z_{K_{0}+1}\in \mathcal{V}_{K_{0}+1}}\biggl|\frac{\widehat{M}%
	_{kl}(z_{K_{0}+1})}{\widehat{M}_{kK_{0},K_{0}}}-\frac{M_{kl}(z_{K_{0}+1})}{%
	M_{kK_{0},K_{0}}}\biggr|=O_{p}((n\rho _{n})^{-1/2})
\end{equation*}%
and
\begin{equation*}
0.5n_{kl}(\hat{Z}_{K_{0}+1}^{b})\biggl(\frac{\widehat{M}_{kl}(\hat{Z}%
	_{K_{0}+1}^{b})}{\widehat{M}_{kK_{0},K_{0}}}-1\biggr)^{2}=O_{p}(n\rho
_{n}^{-1}).
\end{equation*}%
The rest of the terms in (\ref{eq:LR1}) can be bounded similarly. Thus, we
conclude that
\begin{equation}
L_{n}(\hat{Z}_{K_{0}+1}^{b},\hat{Z}_{K_{0}})=O_{p}(n\rho _{n}^{-1}).
\label{eq:LR2}
\end{equation}

Next, we study the asymptotic property of $\hat{K}_{1}.$ If $K_{0}=1$, $P(
\hat{K}_{1}\geq 1)=1$ holds trivially. If $K_{0}\geq 2$,
\begin{equation*}
R(1)\asymp \frac{n^{2}}{\eta _{n}}\asymp 1.
\end{equation*}
When $2\leq K<K_{0}$, by Theorem \ref{thm:underDC},
\begin{equation*}
R(K)\asymp \frac{\tilde{\mathcal{B}}_{K-1}+O_{p}(n\rho _{n}^{-1/2})}{\tilde{
		\mathcal{B}}_{K}+O_{p}(n\rho _{n}^{-1/2})}\asymp 1.
\end{equation*}
When $K=K_{0}$, by Theorem \ref{thm:underDC} and (\ref{eq:LR2}),
\begin{equation*}
R(K_{0})\lesssim \frac{n\rho _{n}^{-1}}{c_{K1}n^{2}+O_{p}(n\rho _{n}^{-1/2})}
\rightarrow 0.
\end{equation*}
Since $n^{2}/(n\rho _{n}^{-1})=n\rho _{n} \geq C_1\log(n)\rightarrow \infty $
under  Assumption \ref{ass:rate}, 
\begin{equation*}
P(\hat{K}_{1}\geq K_{0})\leq P\left( R(K_{0})<\max_{K<K_{0}}R(K)\right)
\rightarrow 1.
\end{equation*}

Now, we study the asymptotic property of $\tilde{K}_{2}.$ If $K_{0}=1$,
\begin{equation*}
R(1)\lesssim \frac{1}{n\rho _{n}}\rightarrow 0.
\end{equation*}
Therefore, $P(\tilde{K}_{2}=1)=P(R(1)\leq h_{n})\rightarrow 1$ because $
n\rho _{n}h_{n}\rightarrow \infty $ as $n\rightarrow \infty .$ If $K_{0}\geq
2$, by Theorem \ref{thm:underDC} and (\ref{eq:LR2}),
\begin{equation*}
\begin{cases}
R(K)\asymp & \frac{n^{2}}{n\rho _{n}}\rightarrow \infty ,\quad \text{\ if }
K=1, \\
R(K)\asymp & 1,\quad \text{if }2\leq K<K_{0}, \\
R(K)\lesssim & \frac{n\rho _{n}^{-1}}{n^{2}}\asymp \frac{1}{n\rho _{n}}
\rightarrow 0,\quad \text{if }K=K_{0}.%
\end{cases}%
\end{equation*}
This, in conjunction with the conditions that $n\rho _{n}h_{n}\rightarrow
\infty $ and $h_{n}\rightarrow 0$ as $n\rightarrow \infty $ implies that
\begin{equation*}
P(\tilde{K}_{2}=K_{0})=P\left( \min_{1\leq K<K_{0}}R(K)>h_{n},R(K_{0})\leq
h_{n}\right) \rightarrow 1.
\end{equation*}
It follows that $P(\hat{K}_{2}=K_{0})\geq P(\hat{K}_{1}\geq K_{0},$ $\tilde{%
	K	}_{2}=K_{0})\rightarrow 1.$

\section{Technical lemmas}

\begin{lemma}
	\label{lem:id3} Suppose Assumptions \ref{ass:id3} and \ref{ass:nk2} hold.
	Let $u_{i}^{T}$ be the $i$-th row of $U_{1n}$.
	\begin{enumerate}
		\item[(1)]  There exists a $%
		K_{0}\times K_{0}$ matrix $S_{n}^{\tau }$ such that $(S_{n}^{\tau
		})^{T}S_{n}^{\tau }=I_{K_{0}}$ and $U_{1n}=\Theta _{\tau
		}^{1/2}Z_{K_{0}}(Z_{K_{0}}^{T}\Theta _{\tau }Z_{K_{0}})^{-1/2}S_{n}^{\tau }$.
		\item[(2)] Let $[S_{n}^{\tau }](K)$ and $[S_{n}^{\tau
		}]_{k}(K)$ denote the first $K$ columns of $S_{n}^{\tau }$ and its $%
		k $-th row, respectively. There exist some $K\times K$ orthonormal matrix $%
		O_{s}$, a $K_{0}\times K_{0}$ matrix $S_{\infty }$ and a positive constant $%
		c $ such that for any $K\leq K_{0}$, $[S_{n}^{\tau }]_{k}(K)O_{s}\rightarrow [ S_{\infty }](K)$, $[S_{\infty }](K)$ has
		rank $K$, and for any $k=1,\cdots ,K_{0}$ and $K=1,\cdots ,K_{0}$,
		\begin{equation*}
		\liminf_{n}||[S_{n}^{\tau }]_{k}(K)||\geq \underline{c}.
		\end{equation*}
	\end{enumerate}
\end{lemma}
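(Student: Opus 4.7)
\textbf{Proof sketch proposal for Lemma \ref{lem:id3}.}

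The plan is to reduce the spectral decomposition of $\mathcal{L}_\tau$ to that of the $K_0 \times K_0$ core matrix $(\Pi_n^\tau)^{1/2} H_{0,K_0} (\Pi_n^\tau)^{1/2}$, and then read off both assertions from the classical perturbation theory for this core matrix together with the Perron--Frobenius theorem. For part (1), I would start from $P = \Theta Z_{K_0} B_{K_0} Z_{K_0}^T \Theta$ and use the normalization in \eqref{eq:thetanormalization} to observe that $d_i = \theta_i n\rho_n W_{g_i}$, where $g_i$ denotes the community of node $i$. A direct computation shows $\mathcal{D}_\tau^{-1/2}\Theta = \Theta_\tau^{1/2} \Theta^{1/2} \mathcal{D}^{-1/2}$, and then $\Theta^{1/2} \mathcal{D}^{-1/2} Z_{K_0} = Z_{K_0} \mathcal{D}_H^{-1/2}/\sqrt{n\rho_n}$. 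Plugging in and using $H_{K_0} = \mathcal{D}_H^{1/2} H_{0,K_0} \mathcal{D}_H^{1/2}$, everything collapses to
\begin{equation*}
\mathcal{L}_\tau = \tfrac{1}{n}\,\Theta_\tau^{1/2} Z_{K_0} H_{0,K_0} Z_{K_0}^T \Theta_\tau^{1/2}.
\end{equation*}
Setting $V = \Theta_\tau^{1/2} Z_{K_0}(Z_{K_0}^T \Theta_\tau Z_{K_0})^{-1/2}$ and noting that $Z_{K_0}^T \Theta_\tau Z_{K_0} = n\Pi_n^\tau$ yields $V^T V = I_{K_0}$ and $\mathcal{L}_\tau = V (\Pi_n^\tau)^{1/2} H_{0,K_0} (\Pi_n^\tau)^{1/2} V^T$. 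Letting $S_n^\tau$ be the orthogonal eigenvector matrix of the $K_0 \times K_0$ symmetric matrix $(\Pi_n^\tau)^{1/2} H_{0,K_0} (\Pi_n^\tau)^{1/2}$, ordered by decreasing absolute eigenvalue, I get $U_{1n} = V S_n^\tau$, which is exactly the formula claimed in part (1).

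For part (2), by Assumptions \ref{ass:id3}(1)--(2) and \ref{ass:nk2}(1), the core matrix converges to $A_\infty := (\Pi_\infty')^{1/2} H_{0,K_0}^\ast (\Pi_\infty')^{1/2}$, which by Assumption \ref{ass:nk2}(2) has $K_0$ distinct eigenvalues and is full-rank by Assumption \ref{ass:id3}(2). Denote by $S_\infty$ its (uniquely up to signs) orthonormal eigenvector matrix. For any $K \leq K_0$, the eigengap between the first $K$ and the remaining $K_0 - K$ eigenvalues of $A_\infty$ is bounded below, so by the Davis--Kahan sin-$\Theta$ theorem in the Yu--Wang--Samworth form, there exists a $K \times K$ orthonormal $O_s$ (depending on $K$ and $n$, but bounded so we may pass to a subsequence) with $[S_n^\tau](K) O_s \to [S_\infty](K)$. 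Full rank of $S_\infty$ gives rank-$K$ of $[S_\infty](K)$. For the row-norm lower bound, I would invoke Assumption \ref{ass:id3}(3): the matrix $A_\infty$ is entry-wise strictly positive, so by the Perron--Frobenius theorem its leading eigenvector $[S_\infty]_{\cdot 1}$ (which is simple by Assumption \ref{ass:nk2}(2)) may be chosen with strictly positive entries. Hence for every $k$,
\begin{equation*}
\|[S_\infty]_k(K)\| \geq |[S_\infty]_{k,1}| \geq \underline{c}' > 0,
\end{equation*}
and since $O_s$ is an isometry, $\|[S_n^\tau]_k(K)\| = \|[S_n^\tau]_k(K) O_s\| \to \|[S_\infty]_k(K)\|$, yielding $\liminf_n \|[S_n^\tau]_k(K)\| \geq \underline{c}$.

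The main obstacle I anticipate is the uniform-in-$K$ lower bound on the row norms of $[S_n^\tau](K)$: a priori some row of $[S_\infty](K)$ could vanish when $K < K_0$, and without Assumption \ref{ass:id3}(3) there is no reason to rule this out. The resolution is precisely the Perron--Frobenius observation that only the \emph{first} column must be strictly positive entry-wise, and this single column controls $\|[S_\infty]_k(K)\|$ for every $K \geq 1$. A secondary technical nuisance is bookkeeping around the column-sign ambiguity when passing from Davis--Kahan (which gives an orthogonal $O_s$) to statements about individual rows of $[S_n^\tau](K)$; using the distinct-eigenvalue assumption, $O_s$ can be reduced to a sign matrix, so the argument for each $k$ amounts to per-column sign alignment along a subsequence. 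Everything else -- the factorization in part (1) and the convergence of the core matrix -- is algebra already implicit in Assumptions \ref{ass:id3} and \ref{ass:nk2}.
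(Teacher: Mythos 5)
Your proposal is correct and follows essentially the same route as the paper: part (1) is the factorization $\mathcal{L}_\tau = V(\Pi_n^\tau)^{1/2}H_{0,K_0}(\Pi_n^\tau)^{1/2}V^T$ with $V=\Theta_\tau^{1/2}Z_{K_0}(Z_{K_0}^T\Theta_\tau Z_{K_0})^{-1/2}$ (which the paper obtains by citing \citet[Theorem 3.3]{SWZ17} rather than redoing the algebra), and part (2) combines the Davis--Kahan convergence $[S_n^\tau](K)O_s\to[S_\infty](K)$ from the proof of Theorem \ref{thm:id3}(2) with the Perron--Frobenius positivity of the first column of $S_\infty$ to bound every row norm from below, exactly as in the paper. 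The one point you flag as the key obstacle --- that only the first column need be entrywise nonvanishing, and that this single column suffices for all $K\geq 1$ --- is precisely the paper's argument via \citet[Lemma 8.2.1]{HJ90}.
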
%
\begin{proof}[of Lemma \ref{lem:id3}]
	The first result is proved in \cite{SWZ17}.	For part
	(2), by the proof of Theorem \ref{thm:id3}(2), we have
	$$S_{n}^{\tau }[K]O_{s}\rightarrow
	S_{\infty }[K]$$ where $S_{\infty }$ is the eigenvector matrix of $\Pi
	_{\infty }^{\prime 1/2}H_{0,K_0}^{\ast }\Pi _{\infty }^{\prime 1/2}$
	and is of full rank, and $O_{s}$ is a $K\times
	K$ orthogonal matrix. In addition, by Assumptions \ref{ass:id3}(2) and \ref%
	{ass:nk2}, all elements in $\Pi _{\infty }^{\prime 1/2}H_{0,K_0}^{\ast
	}\Pi _{\infty }^{\prime 1/2}$ are positive. By
	\citet[Lemma
	8.2.1]{HJ90}, all elements in the first column of $S_{\infty }$ are strictly
	positive. This implies that, for any $k=1,\cdots ,K_{0}$,
	\begin{equation*}
	\liminf_{n}||[S_{n}^{\tau }]_{k}(K)||=\liminf_{n}||[S_{n}^{\tau
	}]_{k}(K)O_{s}||=||[S_{\infty }]_{k}(K)||\geq ||[S_{\infty
	}]_{k1}||>0.
	\end{equation*}
	This concludes the proof.
\end{proof}

\bigskip The following lemma is largely based on \citet[Theorem 3.2]{SW17}
and \citet[Theorem 2.3]{SWZ17}.

\begin{lemma}
	\label{lem:sbsa} Let $\mathcal{C}$ be a set of nodes and $\{\hat{\beta}_{in}\}_{i\in \mathcal{C}} $ be a sequence of $d_\beta \times 1$ vectors such that $\sup_{i\in \mathcal{C}}||\hat{%
		\beta}_{in}-\beta _{in}||\leq c_1~a.s.$ and $\sup_{i\in \mathcal{C}}||\beta
	_{in}|| \leq M $ for some sufficiently small constant $c_1>0$ and some constant $M>0$, respectively. In addition, suppose $\{\beta _{in}\}_{i\in \mathcal{C}}$ has $L$
	distinct vectors for some $L \geq K$ and we group index $i$ into $L$ mutually exclusive groups $%
	\{\mathcal{C}_{l}\}_{l=1}^{L}$ such that if $i,j\in \mathcal{C}_{l}$, $\beta _{in}=\beta _{jn}$
	and for any $i\in \mathcal{C}_{l}$, $j\in \mathcal{C}_{l^{\prime }}$, $l\neq l^{\prime }$, $%
	\inf_{i,j,n}||\beta _{in}-\beta _{jn}||>c_2>0$. Let $\pi_l = \frac{\# \mathcal{C}_l}{n}$, $l=1,\cdots,L$. Then, $\min_{l=1,\cdots,L}\pi_l \geq \underline{\pi} >0$. We apply k-means algorithm on $\{\beta _{in}\}_{i=1}^{n}$ and $\{\hat{\beta}%
	_{in}\}_{i=1}^{n}$ and obtain $K$ sets of mutually exclusive groups $%
	(\mathcal{C}(1),\cdots,\mathcal{C}(K))$ and $(\widehat{\mathcal{C}}(1),\cdots,\widehat{\mathcal{C}}(K))$, respectively. Suppose $\mathcal{C}(k)$, $k=1,\cdots,K$ are uniquely defined, then
	\begin{enumerate}
		\item[(1)] 	for
		any $l=1,\cdots ,L$,
		\begin{equation*}
		\mathcal{C}_{l}\subset \text{one of } \{\mathcal{C}(k), k=1,\cdots,K\};
		\end{equation*}%
		\item[(2)] 	
		\begin{equation*}
		\left|\frac{\widehat{\Phi }(\mathcal{C})-\sum_{k=1}^K \widehat{\Phi }(\widehat{\mathcal{C}}(k))}{\#\mathcal{C}}-\frac{\Phi (\mathcal{C})-\sum_{k=1}^k\Phi (\mathcal{C}(k))}{\#\mathcal{C}}\right|\leq Cc_1,~ a.s.,
		\end{equation*}%
		where $C>0$ is some constant independent of $n$ and for a generic index set $\mathcal{C}$,
		\begin{equation*}
		\widehat{\Phi }(\mathcal{C})=\sum_{i\in \mathcal{C}}||\hat{\beta}_{in}-\frac{\sum_{i\in \mathcal{C}}\hat{%
				\beta}_{in}}{\#\mathcal{C}}||^{2}
		\end{equation*}%
		and
		\begin{equation*}
		\Phi (\mathcal{C})=\sum_{i\in \mathcal{C}}||\beta _{in}-\frac{\sum_{i\in \mathcal{C}}\beta _{in}}{\#\mathcal{C}}%
		||^{2}; \quad \text{and}
		\end{equation*}%
		\item[(3)] after relabeling, $\widehat{\mathcal{C}}(k)=\mathcal{C}(k)$, $k=1,\cdots,K$.
	\end{enumerate}
\end{lemma}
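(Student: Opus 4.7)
The plan is to handle the three parts in order, with (1) essentially immediate, (2) following from a perturbation inequality combined with k-means optimality, and (3) being the main obstacle, handled by a finite-cardinality gap argument driven by the uniqueness of $\mathcal{C}(k)$. Throughout, I exploit that $c_1$ is sufficiently small relative to $c_2$ and $\underline{\pi}$.

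For part (1), since $\beta_{in}$ is identically equal to a single vector $\bar{\beta}_l$ on each $\mathcal{C}_l$, every $i\in\mathcal{C}_l$ shares the same nearest-centroid index $\argmin_k\|\bar{\beta}_l-\alpha_k^\ast\|$ for the population k-means centers $(\alpha_k^\ast)_{k=1}^K$. Uniqueness of $\{\mathcal{C}(k)\}$ rules out ties that could split $\mathcal{C}_l$, so $\mathcal{C}_l\subset\mathcal{C}(k)$ for some $k$.

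For part (2), I first establish the per-set perturbation bound
\begin{equation*}
|\widehat{\Phi}(\mathcal{C}')-\Phi(\mathcal{C}')|\leq C'c_1\cdot\#\mathcal{C}'
\end{equation*}
for every $\mathcal{C}'\subseteq\mathcal{C}$, where $C'$ depends only on $M$; this follows by expanding the squared norms and using $\|\bar{\hat{\beta}}_{\mathcal{C}'}-\bar{\beta}_{\mathcal{C}'}\|\leq c_1$ (a consequence of the uniform hypothesis) together with $\sup_i\|\beta_{in}\|\leq M$. Applied to $\mathcal{C}'=\mathcal{C}$, this handles the first piece. For the remaining piece I combine two sandwich inequalities coming from k-means optimality on the perturbed and unperturbed sides:
\begin{equation*}
\sum_{k=1}^K\widehat{\Phi}(\widehat{\mathcal{C}}(k))\leq\sum_{k=1}^K\widehat{\Phi}(\mathcal{C}(k))\leq\sum_{k=1}^K\Phi(\mathcal{C}(k))+C'c_1 n,
\end{equation*}
and symmetrically $\sum_k\Phi(\mathcal{C}(k))\leq\sum_k\Phi(\widehat{\mathcal{C}}(k))\leq\sum_k\widehat{\Phi}(\widehat{\mathcal{C}}(k))+C'c_1 n$ via the partition $\{\widehat{\mathcal{C}}(k)\}$. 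Combining and dividing by $\#\mathcal{C}=n$ yields (2).

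Part (3) is the main obstacle and proceeds in two moves. First, I will upgrade part (1) to the empirical setting: any $i,j\in\mathcal{C}_l$ satisfy $\|\hat{\beta}_{in}-\hat{\beta}_{jn}\|\leq 2c_1$, while $i,j$ in distinct $\mathcal{C}_l$'s satisfy $\|\hat{\beta}_{in}-\hat{\beta}_{jn}\|\geq c_2-2c_1$, so a standard exchange argument for k-means --- splitting a $\mathcal{C}_l$ across two blocks incurs an $\Omega(\underline{\pi}n(c_2-2c_1)^2)$ increment in the empirical WCSS via the pairwise-distance representation of the within-cluster sum of squares --- shows that each $\mathcal{C}_l$ lies entirely inside a single $\widehat{\mathcal{C}}(k)$ whenever $c_1$ is small. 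Both $\{\mathcal{C}(k)\}$ and $\{\widehat{\mathcal{C}}(k)\}$ therefore correspond to partitions of the finite label set $\{1,\ldots,L\}$ into $K$ non-empty blocks, of which there are at most $K^L$. Second, I will apply a gap argument on this finite collection: the population objective $\mathcal{G}\mapsto\sum_l(\#\mathcal{C}_l)\,\|\bar{\beta}_l-\bar{\beta}_{b_\mathcal{G}(l)}\|^2$ (with $b_\mathcal{G}(l)$ denoting the weighted centroid of the block of $\mathcal{G}$ containing $l$) is uniquely minimized at $\mathcal{G}^\ast:=\{\mathcal{C}(k)\}$ by hypothesis, so its value at any competing grouping exceeds the minimum by a positive gap $\delta$ that is linear in $n$ (since $\#\mathcal{C}_l\geq\underline{\pi}n$ and any incorrect grouping misclassifies at least one pair of labels separated by $\geq c_2$). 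Part (2) supplies the uniform bound of order $C'c_1 n$ between the empirical and population values across the $K^L$ candidate groupings; taking $c_1$ small enough that $2C'c_1 n<\delta$ then forces the empirical minimizer to coincide with $\mathcal{G}^\ast$, which after relabeling gives $\widehat{\mathcal{C}}(k)=\mathcal{C}(k)$ almost surely.
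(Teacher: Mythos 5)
Your parts (1) and (2) are fine. Part (1) is indeed immediate from the nearest-centroid assignment rule applied to identical points (the paper's much longer Step 1 is doing extra work: it also derives the explicit form of the population optimum, which it needs later). Your sandwich argument for (2) -- the per-set bound $|\widehat{\Phi}(\mathcal{C}')-\Phi(\mathcal{C}')|\leq C'c_1\#\mathcal{C}'$ combined with k-means optimality of each partition for its own data -- is a clean route that, unlike the paper's, does not pass through part (3); the paper instead obtains (2) as a corollary of (3).

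The gap is in part (3), in your first move. You claim that splitting some $\mathcal{C}_l$ across two empirical clusters incurs an $\Omega(\underline{\pi}n(c_2-2c_1)^2)$ increment in the empirical WCSS. This is false for unbalanced splits: if a single node of $\mathcal{C}_l$ defects to the wrong cluster, the objective increases by at most $O(1)$ (one point, bounded distances), which is swamped by the $O(c_1 n)$ discrepancy between the perturbed and unperturbed objectives. Hence no global objective comparison can rule out splits, and without that your reduction to the finitely many coarsenings of $\{1,\dots,L\}$ -- and with it the finite-gap argument -- collapses. A defecting point can only be excluded by showing its perturbed value lies strictly on the correct side of the decision boundary between the two empirical centroids, which requires (i) localizing the empirical centroids near the population ones and (ii) a strictly positive margin between each $\bar{\beta}_l$ and the population decision boundaries. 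This is exactly what the paper supplies: it first proves $\mathcal{H}(\hat{\mathcal{A}}_n,\mathcal{A}^*)\leq (15M/\underline{\pi})^{1/2}c_1^{1/2}$ a.s., using the uniform bound $\sup_{\mathcal{A}}|\widehat{\mathcal{Q}}_n(\mathcal{A})-\mathcal{Q}_n(\mathcal{A})|\leq 7Mc_1$ together with the identifiability inequality $\inf_{\mathcal{H}(\mathcal{A},\mathcal{A}^*)\geq\eta}\mathcal{Q}_n(\mathcal{A})-\mathcal{Q}_n(\mathcal{A}^*)\geq\min(\underline{\pi}\eta^2,\underline{M})$, and then runs the pointwise bound $1\{\hat{g}_i\neq g_i^0\}\leq 1\{2c_1+2(15M/\underline{\pi})^{1/2}c_1^{1/2}\geq\underline{C}\}$, where the margin $\underline{C}>0$ is computed case by case from the explicit characterization of the population optimum. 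You would need to add both ingredients to repair part (3).
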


\begin{proof}[of Lemma \protect\ref{lem:sbsa}]
	Following the proof of \citet[Theorem 3.2]{SW17}, we focus on the case $L=3$. The proof for $L \geq 4$ is similar but require more notation. When $K=1$, the results are trivial. When $K =3$, Lemma \ref{lem:sbsa}(1) is trivial as $\mathcal{C}(k) = \mathcal{C}_k$, $k=1,2,3$ after relabeling. Lemma \ref{lem:sbsa}(3) directly follows \citet[Theorem 2.3]{SWZ17}, given that $c_1$ is sufficiently small so that
	\begin{align*}
	(2c_1 \underline{\pi}^{1/2} + 16 K^{3/4}M^{1/2}c_1)^2 \leq \underline{\pi}c_2^2.
	\end{align*}
	Given Lemma \ref{lem:sbsa}(3), Lemma \ref{lem:sbsa}(2) holds with $C = 16M$ because
	\begin{align*}
	\left|\left\Vert\hat{\beta}_{in} - \frac{\sum_{i \in \mathcal{C}}\hat{\beta}_{in}}{\#\mathcal{C}}\right\Vert^2 - \left\Vert\beta_{in} - \frac{\sum_{i \in \mathcal{C}}\beta_{in}}{\#\mathcal{C}}\right\Vert^2\right| \leq 8Mc_1.
	\end{align*}
	
	Next, we proof Lemma \ref{lem:sbsa} for $K=2$. Denote $\bar{\beta}_l$, $l=1,2,3$ as the true values $\beta_{in}$ can take when $i \in \mathcal{C}_1$, $\mathcal{C}_2$, and $\mathcal{C}_3$, respectively. \\
	
	\textbf{Step 1. Proof of Lemma \ref{lem:sbsa}(1).} Suppose
	\begin{align}
	\label{eq:suppose}
	\frac{\pi_2 \pi_3}{\pi_2+\pi_3}||\bar{\beta}_2-\bar{\beta}_3||^2<\frac{\pi_1 \pi_3}{\pi_1+\pi_3}||\bar{\beta}_1-\bar{\beta}_3||^2<\frac{\pi_1 \pi_2}{\pi_1+\pi_2}||\bar{\beta}_1-\bar{\beta}_2||^2
	\end{align}
	In this case, we aim to show that $\mathcal{C}(1) = \mathcal{C}_1$ and $\mathcal{C}(2) = \mathcal{C}_2 \cup \mathcal{C}_3$.
	Suppose that, by the k-means algorithm, $n\pi_l^*$ nodes of $i \in \mathcal{C}_l$, $\pi_l^* \in [0,\pi_l]$, $l=1,2,3$ are classified into $\mathcal{C}(1)$ and the rest are in $\mathcal{C}(2)$. We aim to show that \eqref{eq:suppose} implies $\pi_1^* = \pi_1$ and $\pi_2^* = \pi_3^* = 0$. The k-means objective function for the classification $(\mathcal{C}(1),\mathcal{C}(2))$ is
	\begin{align*}
	F(\alpha_1,\alpha_2;\pi^*_1,\pi^*_2,\pi^*_3) \equiv \sum_{l=1}^3\pi_l^*||\bar{\beta}_l - \alpha_1||^2 + \sum_{l=1}^3(\pi_l-\pi_l^*)||\bar{\beta}_l - \alpha_2||^2,
	\end{align*}
	where $\alpha_1 = \frac{\sum_{l=1}^3 \pi_l^* \bar{\beta}_l}{\sum_{l=1}^3 \pi_l^*}$ and $\alpha_2 = \frac{\sum_{l=1}^3 (\pi-\pi_l^*) \bar{\beta}_l}{\sum_{l=1}^3 (\pi-\pi_l^*)}$. Suppose $\pi_1^* \in (0,\pi_1)$, then we have
	\begin{align*}
	||\bar{\beta}_1 - \alpha_1|| = ||\bar{\beta}_1 - \alpha_2||,
	\end{align*}
	which implies that, for any $\tilde{\pi} \in (0,\pi)$,
	\begin{align*}
	F(\alpha_1,\alpha_2;\pi^*_1,\pi^*_2,\pi^*_3) = F(\alpha_1,\alpha_2;\tilde{\pi},\pi^*_2,\pi^*_3) \geq F(\tilde{\alpha}_1,\tilde{\alpha}_2;\tilde{\pi},\pi^*_2,\pi^*_3),
	\end{align*}
	where $\tilde{\alpha}_1 = \frac{\tilde{\pi}_1 \bar{\beta}_1 + \pi^*_2 \bar{\beta}_2+\pi^*_3 \bar{\beta}_3}{ \tilde{\pi}_1+\pi_2^*+\pi_3^*}$ and $\tilde{\alpha}_2 = \frac{(\pi_1-\tilde{\pi}_1) \bar{\beta}_1 + (\pi_2-\pi^*_2) \bar{\beta}_2+ (\pi_3-\pi^*_3) \bar{\beta}_3}{1- \tilde{\pi}_1-\pi_2^*-\pi_3^*}$ are the minimizer of $F(\cdot,\cdot;\tilde{\pi},\pi^*_2,\pi^*_3)$. In addition, because $F(\alpha_1,\alpha_2;\pi^*_1,\pi^*_2,\pi^*_3)$ achieves the minimum of the k-means objective function among all classifications, we have
	\begin{align*}
	F(\alpha_1,\alpha_2;\pi^*_1,\pi^*_2,\pi^*_3) \leq  F(\tilde{\alpha}_1,\tilde{\alpha}_2;\tilde{\pi},\pi^*_2,\pi^*_3),
	\end{align*}
	which implies that the equality holds, for any $\tilde{\pi}_1 \in (0,\pi_1)$. Then, by the uniqueness of the minimizer for the quadratic objective function $F(\cdot,\cdot;\tilde{\pi},\pi^*_2,\pi^*_3)$, we have, for any $\tilde{\pi} \in (0,\pi_1)$,
	\begin{align*}
	(\alpha_1,\alpha_2) = (\tilde{\alpha}_1,\tilde{\alpha}_2).
	\end{align*}
	This implies that $\bar{\beta}_1 = \frac{\pi_2^* \bar{\beta}_2+\pi_3^* \bar{\beta}_3}{\pi_2^* + \pi_3^*} = \frac{(\pi_2 - \pi_2^*) \bar{\beta}_2+(\pi_3-\pi_3^*) \bar{\beta}_3}{\pi_2 - \pi_2^* + (\pi_3-\pi_3^*)} = \frac{\pi_2 \bar{\beta}_2+\pi_3 \bar{\beta}_3}{\pi_2 + \pi_3}$. Plugging this equality into \eqref{eq:suppose}, we have
	\begin{align*}
	\frac{\pi_2\pi_3}{\pi_2+\pi_3}||\bar{\beta}_2-\bar{\beta}_3||^2<\frac{\pi_1\pi_2}{\pi_1 + \pi_2}||\bar{\beta}_1-\bar{\beta}_2||^2 = \left(\frac{\pi_1}{\pi_1+\pi_2}\right) \left(\frac{\pi_3}{\pi_2+\pi_3}\right)\left(\frac{\pi_2\pi_3}{\pi_2+\pi_3}||\bar{\beta}_2-\bar{\beta}_3||^2\right),
	\end{align*}
	which is a contradiction. This implies that $\pi_1^* = 0$ or $\pi_1$. Similarly, if $\pi_2^* \in (0,\pi_2)$, we can show that $\bar{\beta}_2 = \frac{\pi_1 \bar{\beta}_1 + \pi_3 \bar{\beta}_3}{\pi_1+\pi_3}$. Then, by \eqref{eq:suppose},
	\begin{align*}
	\frac{\pi_1\pi_3}{\pi_1+\pi_3}||\bar{\beta}_1-\bar{\beta}_3||^2<\frac{\pi_1\pi_2}{\pi_1 + \pi_2}||\bar{\beta}_1-\bar{\beta}_2||^2 = \left(\frac{\pi_3}{\pi_1+\pi_2}\right) \left(\frac{\pi_2}{\pi_2+\pi_3}\right)\left(\frac{\pi_1\pi_3}{\pi_1+\pi_3}||\bar{\beta}_1-\bar{\beta}_3||^2\right),
	\end{align*}
	which is again a contradiction. Therefore, $\pi_2^* = 0$ or $\pi_2$. This means, $\mathcal{C}_k \subset \mathcal{C}(1) ~\text{or}~\mathcal{C}(2)$, for $k=1,2$. Last, we assume the k-means algorithm classify $\pi_3^*$ fraction of $\mathcal{C}_3$ with $\mathcal{C}_1$ and the rest with $\mathcal{C}_2$. Then, the k-means objective function becomes
	\begin{align*}
	\min_{\alpha_1,\alpha_2}F(\alpha_1,\alpha_2;\pi_1,\pi_2,\pi_3^*) = \frac{\pi_1\pi_3^*}{\pi_1 + \pi_3^*}||\bar{\beta}_1 - \bar{\beta}_3||^2 + \frac{\pi_2(\pi_3-\pi_3^*)}{\pi_2 + \pi_3-\pi_3^*}||\bar{\beta}_2 - \bar{\beta}_3||^2.
	\end{align*}
	When $\pi_3^* = 0$, the above display becomes $\frac{\pi_2\pi_3}{\pi_2+\pi_3}||\bar{\beta}_2 - \bar{\beta}_3||^2$. In addition,
	\begin{align*}
	& \left(\frac{\pi_1\pi_3^*}{\pi_1 + \pi_3^*}||\bar{\beta}_1 - \bar{\beta}_3||^2 + \frac{\pi_2(\pi_3-\pi_3^*)}{\pi_2 + \pi_3-\pi_3^*}||\bar{\beta}_2 - \bar{\beta}_3||^2 \right) - \left(\frac{\pi_2\pi_3}{\pi_2+\pi_3}||\bar{\beta}_2 - \bar{\beta}_3||^2 \right) \\
	= & \pi_3^*\left(\frac{\pi_1}{\pi_1 + \pi_3^*}||\bar{\beta}_1 - \bar{\beta}_3||^2 - \frac{\pi_2^2}{(\pi_2+\pi_3)(\pi_2+\pi_3 - \pi_3^*)}||\bar{\beta}_2 - \bar{\beta}_3||^2 \right) \\
	\geq & \pi_3^*\left(\frac{\pi_1}{\pi_1 + \pi_3}||\bar{\beta}_1 - \bar{\beta}_3||^2 - \frac{\pi_2}{(\pi_2+\pi_3)}||\bar{\beta}_2 - \bar{\beta}_3||^2 \right) \geq 0,
	\end{align*}
	where the first inequality holds because the term in the parenthesis after the first equal sign is a decreasing function in $\pi_3^* \in [0,\pi_3]$ and the last inequality holds because of \eqref{eq:suppose}. This implies that $\pi_3^* = 0$, i.e., $\mathcal{C}(1) = \mathcal{C}_1$ and $\mathcal{C}(2) = \mathcal{C}_2 \cup \mathcal{C}_3$, which implies Lemma \ref{lem:sbsa}(1).
	
	If the three terms in \eqref{eq:suppose} take distinctive values, the above argument is valid after relabeling. If at least two terms take same values, then the k-means algorithm applying to $\{\beta_{in}\}_{i=1}^n$ do not have a unique solution. This situation has been ruled out by our assumption. \\
	
	\textbf{Step 2. Proof of Lemma \ref{lem:sbsa}(3).}
	Let $\mathcal{Q}_{n}(\mathcal{A})=\sum_{l=1}^{L}\min_{1\leq k\leq K}\Vert \bar{\beta}
	_{l}-\alpha _{k}\Vert ^{2}\pi _{k}$, $%
	\mathcal{A}\in \mathcal{M}=\{(\alpha _{1},\ldots ,\alpha _{K}):\sup_{1\leq
		k\leq K}\Vert \alpha _{k}\Vert \leq 2M\}$ for some constant $M$ independent
	of $n$, $g_i^0 = k$ if $i \in \mathcal{C}(k)$, and $R_{n}=\sup_{i}\Vert \hat{\beta}_{in}-\beta_{in}\Vert $. By the assumptions in Lemma \ref{lem:sbsa},
	\begin{equation}
	R_{n} \leq c_{1} \quad a.s.  \label{eq:Rn}
	\end{equation}%
	In addition,
	\begin{align*}
	\Vert \hat{\beta}_{in}-\alpha _{k}\Vert ^{2}& \geq \Vert \beta
	_{in}-\alpha _{k}\Vert ^{2} - 2|(\beta _{in}-\hat{\beta}%
	_{in})^{T}(\beta _{in}-\alpha _{l})| - \Vert \beta _{in}-%
	\hat{\beta}_{in}\Vert ^{2} \\
	& \geq \Vert \beta _{in}-\alpha _{k}\Vert ^{2} - 2\Vert \beta
	_{in}-\hat{\beta}_{in}\Vert\Vert \beta _{in}-\alpha
	_{k}\Vert -R_{n}^{2} \\
	& \geq \Vert \beta _{in}-\alpha _{k}\Vert ^{2} - 6MR_{n} - R_{n}^{2} \\
	& \geq \Vert \beta _{in}-\alpha _{k}\Vert ^{2} - 7MR_n,
	\end{align*}%
	where the third inequality follows the Cauchy-Schwarz inequality. Taking $\min_{1\leq k\leq K}$ on both sides and averaging over $i$,
	we have
	\begin{align*}
	\widehat{\mathcal{Q}}_{n}(\mathcal{A}) \equiv & n^{-1}\sum_{i=1}^n \min_{1\leq k\leq K}||\hat{\beta}_{in} - \alpha_l||^2 \\
	\geq & n^{-1}\sum_{i=1}^n \min_{1\leq k\leq K}||\beta_{in} - \alpha_l||^2 -7MR_n \geq \mathcal{Q}_{n}(\mathcal{A})-7M c_{1},
	\end{align*}
	where the inequality is due to \eqref{eq:Rn}.
	Similarly, we have $\widehat{\mathcal{Q}}_{n}(\mathcal{A})\leq \mathcal{Q}_{n}(\mathcal{A})+7M c_{1}$, and thus,
	\begin{equation}
	\label{eq:Rnbreve}
	\breve{R}_{n}\equiv \sup_{\mathcal{A}\in \mathcal{M}}|\widehat{\mathcal{Q}}_{n}(%
	\mathcal{A})-\mathcal{Q}_{n}(\mathcal{A})| \leq 7M c_{1}\quad a.s.
	\end{equation}
	
	We maintain \eqref{eq:suppose}. In this case, the minimizer of $\mathcal{Q}_{n}(\cdot)$, as shown in the previous step, is $\mathcal{A}^* = (\alpha_1^*,\alpha_2^*)$, where $\alpha_1^* = \bar{\beta}_1$ and $\alpha_2^* = \frac{\pi_2 \bar{\beta}_2 + \pi_3 \bar{\beta}_3}{\pi_2 + \pi_3}$. Then, $Q_n(\mathcal{A}^*) = \frac{\pi_2 \pi_3}{\pi_2 + \pi_3}||\bar{\beta}_2 - \bar{\beta}_3||^2.$ For a generic $\mathcal{A} = (\alpha_1,\alpha_2)$ and $\mathcal{H}(\mathcal{A},\mathcal{A}^*) \geq \eta$, where $\mathcal{H}(\cdot,\cdot)$ denotes the Hausdorff distance of two sets, we aim to lower bound $\mathcal{Q}_{n}(\mathcal{A})- \mathcal{Q}_{n}(\mathcal{A}^*)$. In view of the definition of $Q_n(\cdot)$, we consider the following three cases: between $\alpha_1$ and $\alpha_2$,
	\begin{enumerate}
		\item[(1)] $\bar{\beta}_1$ is closer to $\alpha_1$ while $(\bar{\beta}_2,\bar{\beta}_3)$ are closer to $\alpha_2$;
		\item[(2)] $\bar{\beta}_2$ is closer to one of $\alpha_1$ while $(\bar{\beta}_1,\bar{\beta}_3)$ are closer to $\alpha_2$;
		\item[(3)] $\bar{\beta}_3$ is closer to one of $\alpha_1$ while $(\bar{\beta}_1,\bar{\beta}_2)$ are closer to $\alpha_2$;
		\item[(4)] $(\bar{\beta}_1,\bar{\beta}_2,\beta_3)$ are all closer to one of $\alpha_1$ and $\alpha_2$.
	\end{enumerate}
	For case (1),
	\begin{align*}
	\mathcal{Q}_{n}(\mathcal{A})- \mathcal{Q}_{n}(\mathcal{A}^*) =  & \pi_1||\bar{\beta}_1-\alpha_1||^2 + \sum_{l=2,3} \pi_l\left[||\bar{\beta}_l-\alpha_2||^2 - ||\bar{\beta}_l-\alpha_2^*||^2\right] \\
	= & \pi_1 ||\alpha_1^* - \alpha_1||^2  + \sum_{l=2,3} \pi_l\left[2(\bar{\beta}_l-\alpha_2^*)^T(\alpha_2^* - \alpha_2) + ||\alpha_2 - \alpha_2^*||^2\right] \\
	= & \pi_1 ||\alpha_1^* - \alpha_1||^2 + (\pi_2+\pi_3)||\alpha_2-\alpha_2^*||^2 \\
	\geq & \underline{\pi} \max(||\alpha_1^* - \alpha_1||,||\alpha_2-\alpha_2^*||)^2 \geq \underline{\pi}\eta^2,
	\end{align*}
	where the third equality holds because $\alpha_2^* = \frac{\pi_2 \bar{\beta}_2 + \pi_3 \bar{\beta}_3}{\pi_2 + \pi_3}$, the first inequality holds because for arbitrary constants $a,b>0$, $a+b \geq \max(a,b)$, and the last inequality holds because,
	\begin{align*}
	\mathcal{H}(\mathcal{A},\mathcal{A}^*) =\max(\mathcal{H}_1(\mathcal{A},\mathcal{A}^*),\mathcal{H}_2(\mathcal{A},\mathcal{A}^*)),
	\end{align*}
	where
	\begin{align*}
	\mathcal{H}_1(\mathcal{A},\mathcal{A}^*) = & \max( \min(||\alpha_1^*-\alpha_1||,||\alpha_1^*-\alpha_2||),\min(||\alpha_2^*-\alpha_1||,||\alpha_2^*-\alpha_2||)) \\
	\leq & \max (||\alpha_1^*-\alpha_1||,||\alpha_2^*-\alpha_2||)
	\end{align*}
	and
	\begin{align*}
	\mathcal{H}_2(\mathcal{A},\mathcal{A}^*) = & \max( \min(||\alpha_1^*-\alpha_1||,||\alpha_1-\alpha_2^*||),\min(||\alpha_2-\alpha_1^*||,||\alpha_2^*-\alpha_2||)) \\
	\leq & \max (||\alpha_1^*-\alpha_1||,||\alpha_2^*-\alpha_2||).
	\end{align*}
	
	For case (2), we have
	\begin{align*}
	Q_n(\mathcal{A}) - Q_n(\mathcal{A}^*) \geq & \inf_{\alpha_2}\left(\pi_1||\bar{\beta}_1 - \alpha_2||^2 + \pi_3||\bar{\beta}_3 - \alpha_2||^2\right) - \frac{\pi_2\pi_3}{\pi_2+\pi_3}||\bar{\beta}_2 - \bar{\beta}_3||^2 \\
	\geq & \frac{\pi_1\pi_3}{\pi_1+\pi_3}||\bar{\beta}_1 - \bar{\beta}_3||^2 - \frac{\pi_2\pi_3}{\pi_2+\pi_3}||\bar{\beta}_2 - \bar{\beta}_3||^2 \geq \underline{M}>0.
	\end{align*}
	where
	\begin{align*}
	\underline{M} = \min\left(\frac{\pi_1\pi_3}{\pi_1+\pi_3}||\bar{\beta}_1 - \bar{\beta}_3||^2,\frac{\pi_1\pi_2}{\pi_1+\pi_2}||\bar{\beta}_1 - \bar{\beta}_2||^2\right) - \frac{\pi_2\pi_3}{\pi_2+\pi_3}||\bar{\beta}_2 - \bar{\beta}_3||^2
	\end{align*}
	and the last inequality holds by \eqref{eq:suppose}.
	
	Similarly, for case (3), we have
	\begin{align*}
	Q_n(\mathcal{A}) - Q_n(\mathcal{A}^*) \geq & \inf_{\alpha_2}\left(\pi_1||\bar{\beta}_1 - \alpha_2||^2 + \pi_2||\bar{\beta}_2 - \alpha_2||^2\right) - \frac{\pi_2\pi_3}{\pi_2+\pi_3}||\bar{\beta}_2 - \bar{\beta}_3||^2 \\
	\geq & \frac{\pi_1\pi_2}{\pi_1+\pi_2}||\bar{\beta}_1 - \bar{\beta}_2||^2 - \frac{\pi_2\pi_3}{\pi_2+\pi_3}||\bar{\beta}_2 - \bar{\beta}_3||^2 \geq \underline{M}>0.
	\end{align*}
	
	Last, for the same reason, for case (4),
	\begin{align}
	\label{eq:Qn}
	Q_n(\mathcal{A}) - Q_n(\mathcal{A}^*) \geq \underline{M}>0.
	\end{align}
	Therefore, we have
	\begin{align*}
	\inf_{\mathcal{H}(\mathcal{A},\mathcal{A}^*) \geq \eta }Q_n(\mathcal{A}) - Q_n(\mathcal{A}^*) \geq \min(\underline{\pi}\eta^2,\underline{M}).
	\end{align*}
	Further define $\hat{\mathcal{A}}_n = (\hat{\alpha}_1,\hat{\alpha}_2) = \argmin_{\mathcal{A}}\hat{Q}_n(\mathcal{A})$. Note $\hat{\alpha}_1$ and $\hat{\alpha}_2$ are weighted average of $\{\hat{\beta}_{in}\}_{i=1}^n$ and $\sup_i ||\hat{\beta}_{in}|| \leq M+c_1 \leq 2M.$ Therefore, by \eqref{eq:Rnbreve},
	\begin{align}
	\label{eq:Rnbreve1}
	|\hat{Q}_n(\hat{\mathcal{A}}_n) - Q_n(\hat{\mathcal{A}}_n)| \leq 7Mc_1, \quad a.s.
	\end{align}
	and
	\begin{align}
	\label{eq:Rnbreve2}
	|\hat{Q}_n(\mathcal{A}^*) - Q_n(\mathcal{A}^*)| \leq 7Mc_1, \quad a.s.
	\end{align}
	
	Then,
	\begin{align*}
	& P(\mathcal{H}(\hat{\mathcal{A}}_n, \mathcal{A}^*) \geq (15M/\underline{\pi})^{1/2}c_1^{1/2} \quad i.o.) \\
	= & P(\mathcal{H}(\hat{\mathcal{A}}_n, \mathcal{A}^*) \geq (15M/\underline{\pi})^{1/2}c_1^{1/2},~Q_n(\hat{\mathcal{A}}_n)-Q_n(\mathcal{A}^*) \geq \min(15Mc_1 ,\underline{M}) \quad i.o.) \\
	\leq & P(14Mc_1 + \hat{Q}_n(\hat{\mathcal{A}}_n)-\hat{Q}_n(\mathcal{A}^*) \geq \min(15Mc_1,\underline{M}) \quad i.o.) \\
	\leq & P(14Mc_1  \geq \min(15Mc_1,\underline{M}) \quad i.o.) \\
	= & 0,
	\end{align*}
	where the first equality holds due to \eqref{eq:Qn}, the first inequality holds because of \eqref{eq:Rnbreve1} and \eqref{eq:Rnbreve2}, the second inequality holds because $ \hat{Q}_n(\hat{\mathcal{A}}_n)-\hat{Q}_n(\mathcal{A}^*) \geq 0$, and the last equality holds because $c_1$ is sufficiently small so that $15Mc_1 \leq \underline{M}$. This implies
	\begin{align*}
	\mathcal{H}(\hat{\mathcal{A}}_n, \mathcal{A}^*) \leq (15M/\underline{\pi})^{1/2}c_1^{1/2}, \quad a.s.
	\end{align*}
	
	Further note that $||\alpha_1^* - \alpha_2^*||>0$, otherwise $\bar{\beta}_1 = \frac{\pi_2 \bar{\beta}_2 + \pi_3 \bar{\beta}_3}{\pi_2 + \pi_3}$ which is a contradiction as shown in Step 1. Let $c_1$ be sufficiently small so that $(15M/\underline{\pi})^{1/2}c_1^{1/2}< ||\alpha_1^* - \alpha_2^*||$. Then there is a one-to-one mapping $\mathcal{F}_n$: $\{1,2\} \mapsto \{1,2\}$ such that
	\begin{align*}
	\sup_{k=1,2}||\hat{\alpha}_k - \alpha_{\mathcal{F}_n(k)}^*|| \leq (15M/\underline{\pi})^{1/2}c_1^{1/2}.
	\end{align*}
	W.l.o.g., we assume $\mathcal{F}_n(k) = k$ such that
	\begin{align*}
	\sup_{k=1,2}||\hat{\alpha}_k - \alpha_{k}^*|| \leq (15M/\underline{\pi})^{1/2}c_1^{1/2}.
	\end{align*}
	Denote $\hat{g}_i = k$ if $i \in \widehat{\mathcal{C}}(k)$, $k=1,2$ and  $g_i^0 = k$ if $i \in \mathcal{C}(k)$, $k=1,2$. If $\hat{g}_i \neq g_i^0$, then $||\hat{\beta}_{in} - \hat{\alpha}_{\hat{g}_i}|| \leq ||\hat{\beta}_{in} - \hat{\alpha}_{g^0_i}||$. Therefore,
	\begin{align*}
	& ||\beta_{in} - \alpha_{g^0_i}|| + c_1 +(15M/\underline{\pi})^{1/2}c_1^{1/2} \\
	\geq & ||\hat{\beta}_{in} - \hat{\alpha}_{g^0_i}|| \\
	\geq & ||\hat{\beta}_{in} - \hat{\alpha}_{\hat{g}_i}|| \geq ||\beta_{in} - \alpha_{\hat{g}_i}^*|| - c_1 -(15M/\underline{\pi})^{1/2}c_1^{1/2}.
	\end{align*}
	Therefore,
	\begin{align*}
	1\{\hat{g}_i \neq g_i^0\} \leq & 1\{2c_1 +2(15M/\underline{\pi})^{1/2}c_1^{1/2}\geq  ||\beta_{in} - \alpha_{\hat{g}_i}^*|| - ||\beta_{in} - \alpha_{g^0_i}^*||\} \quad a.s.
	\end{align*}
	By Lemma \ref{lem:sbsa}(1), we only need to consider the lower bound for the RHS of the above display in three cases: (1) $g_i^0 = 1$ and $\beta_{in} = \bar{\beta}_1$, (2) $g_i^0 = 2$ and $\beta_{in} = \bar{\beta}_2$, and (3) $g_i^0 = 2$ and $\beta_{in} = \bar{\beta}_3$. For case (1),
	\begin{align*}
	||\beta_{in} - \alpha_{\hat{g}_i}^*|| - ||\beta_{in} - \alpha_{g^0_i}^*|| = ||\alpha_1^* - \alpha_2^*|| = \left\Vert \bar{\beta}_1 - \frac{\pi_2 \bar{\beta}_2 + \pi_3 \bar{\beta}_3}{\pi_2 + \pi_3}\right \Vert >0,
	\end{align*}
	where the last inequality holds because by the argument in Step 1, $\bar{\beta}_1 \neq \frac{\pi_2 \bar{\beta}_2 + \pi_3 \bar{\beta}_3}{\pi_2 + \pi_3}$.
	
	For case (2), $\alpha_{\hat{g}_i}^* = \alpha_{1}^* = \bar{\beta}_1$ and
	\begin{align*}
	||\beta_{in} - \alpha_{\hat{g}_i}^*|| - ||\beta_{in} - \alpha_{g^0_i}^*|| = & ||\bar{\beta}_2 - \bar{\beta}_1|| - \frac{\pi_3}{\pi_2 + \pi_3}||\bar{\beta}_2 - \bar{\beta}_3|| \\
	\geq & ||\bar{\beta}_2 - \bar{\beta}_3||\sqrt{\frac{\pi_3}{\pi_2+\pi_3}}\left(\sqrt{\frac{\pi_1+\pi_2}{\pi_1}} - \sqrt{\frac{\pi_3}{\pi_2+\pi_3}} \right)>0,
	\end{align*}
	where the first inequality holds due to \eqref{eq:suppose}. Similarly, for case (3), we have
	\begin{align*}
	||\beta_{in} - \alpha_{\hat{g}_i}^*|| - ||\beta_{in} - \alpha_{g^0_i}^*|| = & ||\bar{\beta}_3 - \bar{\beta}_1|| - \frac{\pi_2}{\pi_2 + \pi_3}||\bar{\beta}_2 - \bar{\beta}_3|| \\
	\geq & ||\bar{\beta}_2 - \bar{\beta}_3||\sqrt{\frac{\pi_2}{\pi_2+\pi_3}}\left(\sqrt{\frac{\pi_1+\pi_3}{\pi_1}} - \sqrt{\frac{\pi_2}{\pi_2+\pi_3}} \right)>0.
	\end{align*}
	Let constant $\underline{C}$ be
	\begin{align*}
	\min\left(\left\Vert \bar{\beta}_1 - \frac{\pi_2 \bar{\beta}_2 + \pi_3 \bar{\beta}_3}{\pi_2 + \pi_3}\right \Vert, ||\bar{\beta}_2 - \bar{\beta}_1|| - \frac{\pi_3}{\pi_2 + \pi_3}||\bar{\beta}_2 - \bar{\beta}_3||, ||\bar{\beta}_3 - \bar{\beta}_1|| - \frac{\pi_2}{\pi_2 + \pi_3}||\bar{\beta}_2 - \bar{\beta}_3||\right) \geq \underline{C}
	\end{align*}
	such that $\underline{C}>0$. Then,
	\begin{align*}
	1\{\hat{g}_i \neq g_i^0\} \leq & 1\{2c_1 +2(15M/\underline{\pi})^{1/2}c_1^{1/2}\geq  ||\beta_{in} - \alpha_{\hat{g}_i}^*|| - ||\beta_{in} - \alpha_{g^0_i}^*||\} \\
	\leq & 1\{2c_1 +2(15M/\underline{\pi})^{1/2}c_1^{1/2}\geq \underline{C}\}.
	\end{align*}
	Noting that the RHS of the above display is independent of $i$ and choosing $c_1$ sufficiently small such that
	\begin{align*}
	2c_1 +2(15M/\underline{\pi})^{1/2}c_1^{1/2}< \underline{C},
	\end{align*}
	we have
	\begin{align*}
	P(\sup_i 1\{\hat{g}_i \neq g_i^0\} >0,~i.o.) \leq P(2c_1 +2(15M/\underline{\pi})^{1/2}c_1^{1/2}\geq \underline{C},~ i.o.) = 0
	\end{align*}
	This concludes that $\widehat{\mathcal{C}}(k) = \mathcal{C}(k)$ for $k=1,2$, which is the desired result for Lemma \ref{lem:sbsa}(3).
	
	\textbf{Step 3. Proof of Lemma \ref{lem:sbsa}(2).} Given Lemma \ref{lem:sbsa}(3), the desired results can be derived by the same argument for $K=3$.
\end{proof}

\bibliographystyle{elsarticle-harv}
\bibliography{CD-1}

\end{document}